\NewDocumentCommand{\entropy}{om}{\mathbb{H}\left[#2
    \IfValueT{#1}{\,\middle|\,#1}\right]}
\NewDocumentCommand{\bentropy}{lm}
  {\widetilde{\mathbb{H}}#1\left[#2\right]}
\NewDocumentCommand{\mutualInfo}{omm}{\mathbb{I}\left[#2;#3
    \IfValueT{#1}{\,\middle|\,#1}\right]}
\newtheorem{theorem}{Theorem}
\newtheorem{lemma}{Lemma}
\newtheorem{remark}{Remark}
\DeclareMathOperator*{\argmin}{arg min}
\DeclareMathOperator*{\argmax}{arg max}
\newcommand*{\transpose}[1]{#1\sp{\intercal}}
\newcommand{\chin}{\chi_{n, p, \varepsilon}}
\begin{document}
%
\title{Sequential Information Guided Sensing}


\author{Ruiyang Song,\thanks{Ruiyang Song
    (Email: {songry12@mails.tsinghua.edu.cn}) is with the Dept. of Electronic Engineering, 
Tsinghua University.  Yao Xie (Email: {yao.xie@isye.gatech.edu}) and  
    Sebastian Pokutta
    (Email: {sebastian.pokutta@isye.gatech.edu})
    are with the H. Milton Stewart School of
    Industrial and Systems Engineering, Georgia Institute of
    Technology, Atlanta, GA.}\quad
  \and Yao Xie, \and \quad Sebastian Pokutta
  \thanks{This work is partially supported by an NSF CAREER Award CMMI-1452463
and an NSF grant CCF-1442635. Ruiyang Song was visiting the H. Milton Stewart School of Industrial and Systems Engineering at the Georgia Institute of Technology while working on this paper.   
}
} \date{\today}

\IEEEtitleabstractindextext{%
\begin{abstract}

We study the value of information in sequential compressed sensing by 
characterizing the performance of sequential information guided sensing in practical scenarios when information is inaccurate. In particular, we assume the signal distribution is parameterized through Gaussian or Gaussian mixtures with estimated mean and covariance matrices, and we can measure compressively through a noisy linear projection or using one-sparse vectors, i.e., observing one entry of the signal each time. We establish a set of performance bounds for the bias and variance of the signal estimator via posterior mean, by capturing the conditional entropy (which is also related to the size of the uncertainty), and the additional power required due to inaccurate information to reach a desired precision. Based on this, we further study how to estimate covariance based on direct samples or covariance sketching. Numerical examples also demonstrate the superior performance of Info-Greedy Sensing algorithms compared with their random and non-adaptive counterparts.

\end{abstract}

\begin{IEEEkeywords}
compressed sensing, mutual information, sequential methods, sketching \end{IEEEkeywords}
}

\maketitle

\IEEEdisplaynontitleabstractindextext

%
\IEEEpeerreviewmaketitle

\IEEEPARstart{}{} 

\section{Introduction}

Sequential compressed sensing is a promising new information acquisition and recovery technique to process big data that arises in various applications such as compressive imaging \cite{NeifeldTaskSpecific2008, KeAshok2010, NeifeldCSImaging2011}, 
 power network monitoring
  \cite{WirelessHouseElectricity2014}, and large scale sensor networks
  \cite{sparseSensorLocal2011}. The sequential nature of the problems is either because the measurements are taken one after another, or due to the fact that the data is obtained in a streaming fashion so that it has to be processed in one pass. 
  
To harvest the benefits of adaptivity in sequential compressed sensing,  various
algorithms have been developed (see \cite{InfoGreedy2014} for a  review.) We may classify these algorithms as (1) being agnostic about the signal distribution and, hence, using random measurements 
\cite{HauptAdaptiveCS2009, DavenportArias-Castro2012,TajerPoor2012,MalioutovSanghaviWillsky2010,HauptSeqCS2012,JainSoniHaupt2013,MalloyNowak2013}; 
(2) exploiting additional structure of the signal (such as graphical structure
\cite{KrishnamurthySingh13} and tree-sparse structure \cite{ErvinCastro2013, AkshayHaupt2014}) to design measurements; 
(3) exploiting the distributional information of the signal in choosing the measurements possibly through maximizing mutual information:  the seminal Bayesian compressive
  sensing work \cite{BayesianCS2008}, Gaussian mixture models (GMM) \cite{TaskDrivenDuarte2013, CarsonChenRodrigues2012}, and our earlier work \cite{InfoGreedy2014} which presents a general framework for information guided sensing referred to as \emph{Info-Greedy Sensing}. 

Such additional knowledge about signal structure or distributions are various forms of \emph{information} about the unknown signal. Information may play a distinguishing role: as the compressive imaging example demonstrated in Fig. \ref{high_rel_gt} (see Section \ref{sec:num_egs} for more details), with a bit of (albeit inaccurate) information estimated via random samples of small patches of the image, Info-Greedy Sensing is able to recover details of a high-resolution image, whereas random measurements completely miss the image. 

In this paper we examine the value of information in sequential compressed sensing by considering Info-Greedy Sensing when information is imprecise. Info-Greedy Sensing is a framework introduced in \cite{InfoGreedy2014} that aims at designing subsequent measurements to maximize the mutual information conditioned on previous measurements. 
Conditional mutual information is a natural metric here, as it captures exclusively useful new information between the signal and the results of the measurements disregarding noise and what has already been learned from previous measurements.  We assume information is parameterized imperfectly and captured through sample estimates or ``sketching'', and when measurements of the unknown signal are compressive or even one-sparse (we are only able to inspect one entry of the signal). 
As shown in \cite{InfoGreedy2014}, Info-Greedy Sensing for a Gaussian signal becomes a simple iterative algorithm: choosing the measurement as the leading eigenvector of the conditional signal covariance matrix in that iteration, and then update the covariance matrix via a simple rank-one update, or, equivalently, choosing measurement vectors $a_1, a_2, \ldots$ as the orthonormal eigenvectors of the signal covariance matrix $\Sigma$ in a decreasing order of eigenvalues. This can also be easily generalized to GMM signals, where a heuristic that works well is to measure in the dominant eigenvector direction of the Gaussian component with the highest posterior weight in that iteration. 

In practice, we may be able to estimate the signal covariance matrix to initialize the algorithm through a training session. For Gaussian signals, there are two possible approaches: either using training samples that are sampled from the same distribution, or through the so-called ``covariance sketching'' \cite{CovSketching2012,CovSketchSparse,ChenChiGoldsmith2014}  based on low-dimensional random sketches of the samples. As a consequence, the measurement vectors are calculated from eigenvectors of the estimated covariance matrix $\widehat{\Sigma}$, which deviates from the optimal directions. Since we almost always have to use an estimate for the signal covariance, it is crucial to quantify the performance of sensing algorithms with model mismatch and shed some light on how to properly initialize the algorithm. 

In this paper we characterize the performance of Info-Greedy Sensing for Gaussian and GMM signals (with possibly low-rank covariance matrices) when the true signal covariance matrix is replaced with a proxy, which may be an estimate from direct samples or using a covariance sketching scheme. We establish a set of theoretical results including (1) studying the bias and variance of the signal estimator via posterior mean, by relating the error in the covariance matrix $\|\Sigma -\widehat{\Sigma}\|$ to the entropy of the signal posterior distribution after each sequential measurement, (2)  establishing an upper bound on the additional power needed to achieve the signal precision $\|x-\hat{x}\|\leq \varepsilon$; and (3) translate these into requirements on the choice of the sample covariance matrix through direct estimation or through covariance sketching. Furthermore, we also study Info-Greedy Sensing in a special setting when the measurement vector is desired to be one-sparse, and establish analogously a set of theoretical results. Such a requirement arises from applications such as nondestructive testing (NDT) \cite{NDE2003} or network tomography.   
We also present numerical examples to demonstrate the superior performance of Info-Greedy Sensing compared to a batch method (where measurements are not adaptive) when there is mismatch. 

Our notations are standard. 
Denote $[n] \triangleq \{1,2,\ldots,n\}$; $\|X\|$, $\|X\|_F$, and $\|X\|_{*}$  represent  the spectral norm, the Frobenius  norm, and the nuclear norm of a matrix $X$, respectively; let $\nu_i(\Sigma)$ denote the $i$th largest eigenvalue of a positive semi-definite matrix $\Sigma$; $\|x\|_0$, $\|x\|_1$, and $\|x\|$ represent the $\ell_0$, $\ell_1$ and $\ell_2$ norm of a vector $x$, respectively; let $\chi_n^2$ be the quantile function of the chi-squared distribution with $n$ degrees of freedom; let $\mathbb{E}[x]$ and $\mbox{Var}[x]$ denote the mean and the variance of a random variable $x$; we write $X \succeq 0$ to indicate that the matrix is positive semi-definite; $\phi(x|\mu, \Sigma)$ denotes the probability density function of the multi-variate Gaussian with mean $\mu$ and covariance matrix $\Sigma$; let $e_j$ denote the $j$th column of identity matrix $I$ (i.e., $e_j$ is a vector with only one non-zero entry at location $j$); and $(x)^+ = \max\{x, 0\}$ for $x\in \mathbb{R}$.

\begin{figure}[h!]
\begin{center}
\includegraphics[width = 1\linewidth]{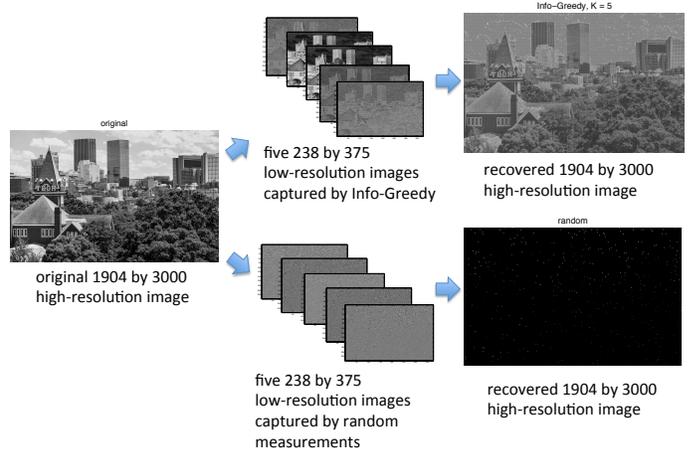} 
\end{center}
\caption{Value of information in sensing a high-resolution image of size 1904$\times$3000. Here, compressive linear measurements correspond to extracting the so-called \emph{features} in compressive imaging \cite{NeifeldTaskSpecific2008, KeAshok2010, NeifeldCSImaging2011}. In this example, the compressive imaging system captures 5 low resolution images of size 238-by-275 using masks designed by Info-Greedy Sensing or random masks 
(this corresponds to compressing the data into $8.32\%$ of its original dimensionality). Info-Greedy Sensing performs much better than random features and preserves richer details in the recovered image. Details are explained in Section \ref{GT_img}.
}
\label{high_rel_gt}
\end{figure}

\section{Info-Greedy Sensing}

A typical sequential compressed sensing setup is as follows. Let $x \in \mathbb{R}^n$ be an unknown $n$-dimensional signal. We make $K$ measurements of $x$ sequentially
\[
y_k = \transpose a_k x + w_k, \quad k = 1, \ldots, K,
\]
and the power of the measurement is $\|a_k\|^2 = \beta_k$. The goal is to recover $x$ using measurements $\{y_k\}_{k=1}^K$. 
Consider a Gaussian signal $x \sim \mathcal{N}(0, \Sigma)$ with known zero mean and covariance matrix $\Sigma$ (here without loss of generality we have assumed the signal has zero mean). Assume the rank of $\Sigma$ is $s$ and the signal can be low-rank $s\ll n$ (however, the algorithm does not require the covariance to be low-rank). 
Info-Greedy Sensing \cite{InfoGreedy2014} chooses each measurement to maximize the conditional mutual information 
\begin{equation}
a_{k} \leftarrow \argmax_{a}
  \mutualInfo[y_{j}, a_{j}, j < k]{x}
  {\transpose{a} x + w}/\beta_k.
  \label{info-greedy}
\end{equation}
The goal is to use a minimum number of measurements (or total power) so that the estimated signal is recovered with precision $\varepsilon$; i.e., $\|\widehat{x} - x\| < \varepsilon$ with a high probability $p$.  Define \[\chi_{n, p, \varepsilon} \triangleq \varepsilon^2/\chi_n^2(p),\] and we will show in the following that this is a fundamental quantity that determines the termination condition of our algorithm to achieve the precision $\varepsilon$ with the confidence level $p$. Note that $\chi_{n, p, \varepsilon}$ is a precision $\varepsilon$ adjusted by the confidence level.

\subsection{Gaussian signal}

In \cite{InfoGreedy2014}, we have devised a solution to (\ref{info-greedy}) when the signal is Gaussian. The measurement will be made in the directions of the eigenvectors of $\Sigma$ in a decreasing order of eigenvalues, and the powers (or the number of measurements) will be such that the eigenvalues  after the measurements are sufficiently small (i.e., less than $\varepsilon$). The power allocation depends on the noise variance, signal recovery precision $\varepsilon$, and confidence level $p$, as given in Algorithm \ref{alg:Gaussian-all}.  

\begin{algorithm}[h!]
  \caption{Info-Greedy Sensing for Gaussian signals}
  \begin{algorithmic}[1]
    \REQUIRE assumed signal mean \(\mu\) and covariance matrix \(\Sigma\), noise variance \(\sigma^2\), recovery accuracy \(\varepsilon\),
      confidence level \(p\)
    \REPEAT 
    \STATE \((\lambda, u) \leftarrow\)
        largest eigenvalue and associated normalized eigenvector of \(\Sigma\)
             \STATE \(\beta \leftarrow \sigma^2 (1/\chi_{n,p,\varepsilon}-1/\lambda)^+\)
      
      \STATE $a = \sqrt{\beta} u$, \(y = \transpose a x + w\) \COMMENT{measure}

     \STATE \(\mu \leftarrow \mu
        + \Sigma a
        (y - \transpose a\mu )/(\beta\lambda +\sigma^2)\) \COMMENT{mean}
        
\STATE  \(\Sigma \leftarrow \Sigma
        - \Sigma a
        \transpose a\Sigma/(\beta\lambda + \sigma^2) \) \COMMENT{covariance}
        
          \UNTIL{\(\Vert{\Sigma}\Vert \leq \chin\)}
    \COMMENT{all eigenvalues small}
    \RETURN signal estimate $\hat{x} = \mu$
  \end{algorithmic}
  \label{alg:Gaussian-all}
\end{algorithm}

\subsection{Gaussian mixture model (GMM) signals}

The probability density function of GMM is given by 
\begin{equation*}
p(x) = \sum_{c=1}^C \pi_c \phi(x|\mu_c, \Sigma_c), \label{GMM_model}
\end{equation*}
where $C$ is the number of classes,
and $\pi_c$ is the probability that sample is drawn from class $c$. 
Unlike for Gaussian signals, the mutual information of GMM has no explicit form. However, for GMM signals, there are two approaches that tend to work well: Info-Greedy Sensing derived based on a gradient descent approach \cite{CarsonChenRodrigues2012,InfoGreedy2014} uses the fact that the gradient of the conditional mutual information with respect to $a$ is a linear transform of the minimum mean square error (MMSE)  matrix \cite{PalomarVerdu2006, PayaroPalomar2009}, and the so-called  \emph{greedy heuristic} which approximately maximizes the mutual information. The greedy heuristic picks the Gaussian component with the highest posterior $\pi_c$ at that moment, and chooses the next measurement $a$ to be its eigenvector associated with the maximum eigenvalue, as summarized in Algorithm \ref{alg:GMM_heuristics}. The greedy heuristic can be implemented more efficiently compared to the gradient descent approach and sometimes have competitive performance (see, e.g. \cite{InfoGreedy2014}). 

\begin{algorithm}
  \caption{Heuristic Info-Greedy Sensing for GMM signals}
  \begin{algorithmic}[1]
    \REQUIRE number of components $C$, assumed means \(\{\mu_c\}\), covariances \(\{\Sigma_c\}\), initial weights \(\{\pi_c\}\), 
      noise variance \(\sigma^2\),
      confidence level \(p\)
    \REPEAT 
      \STATE \(c^* \leftarrow \arg\max_c \pi_c \)         
             \STATE \((\lambda, u) \leftarrow\)
        largest eigenvalue and associated normalized eigenvector of \(\Sigma_{c^*}\)
                     \STATE \(\beta \leftarrow \sigma^2 (1/\chi_{n, p, \varepsilon}-1/\lambda)^+\)
\STATE $a = \sqrt{\beta} u$, \(y = \transpose{a} x + w \) \COMMENT{measure}
         \FOR{$c = 1, \ldots, C$}
\STATE \(\mu_c \leftarrow \mu_c
        + 
        [(y - \transpose{a}\mu_c)/(\transpose{a} \Sigma_{c} a + \sigma^{2})]\Sigma_{c} a\)
\STATE \(\Sigma_{c}\leftarrow \Sigma_{c}
        - \Sigma_{c}a
        \transpose{a}\Sigma_{c}/ (\transpose{a} \Sigma_{c} a + \sigma^{2})\)  	\STATE \(
\pi_c \leftarrow K \pi_c  \exp\{- \frac{1}{2} (y - 
\transpose{a} \mu_{c})^2/(\transpose{a}
  \Sigma_{c}a + \sigma^2) \} \)
  \STATE ($K$: normalizing constant)
  \ENDFOR
    \UNTIL{\(\Vert{\Sigma_{c^*}}\Vert \leq \chi_{n, p, \varepsilon}\)
    }
    \RETURN signal class $c^* = \arg\max_c \pi_c$, estimate \(\hat{x}=\mu_{c^*}\)
  \end{algorithmic}
  \label{alg:GMM_heuristics}
\end{algorithm}

\subsection{One-sparse measurement}

The problem of Info-Greedy Sensing with sparse measurement constraint, i.e., each measurement has only $k_0$ non-zero entries $\|a\|_0=k_0$, has been examined in \cite{InfoGreedy2014} and solved using outer approximation (cutting planes). Here we will focus on one-sparse measurements, $\|a\|_0=1$, as it is an important instance arising in applications such as nondestructive testing (NDT). 

\begin{algorithm}[h!]
  \caption{Info-Greedy Sensing with sparse measurement $\|a\|_0=1$, for Gaussian signals}
  \begin{algorithmic}[1]
    \REQUIRE assumed signal mean \(\mu\) and covariance matrix \(\Sigma\), noise variance \(\sigma^2\), recovery accuracy \(\varepsilon\),
      confidence level \(p\)
    \REPEAT 
    \STATE \(j^* \leftarrow \arg\max_j \Sigma_{jj}\)
    \STATE \(a \leftarrow \sqrt{\beta} e_{j^*}\), \(y = \transpose a x + w\) \COMMENT{measure}

     \STATE \(\mu \leftarrow \mu
        + \Sigma a
        (y - \transpose a\mu )/(\beta\Sigma_{j^*j^*} +\sigma^2)\) \COMMENT{mean}
        
\STATE  \(\Sigma \leftarrow \Sigma
        - \Sigma a
        \transpose a\Sigma/(\beta\Sigma_{j^*j^*}  + \sigma^2) \) \COMMENT{covariance}
        
          \UNTIL{\(\Vert{\Sigma}\Vert \leq \chi_{n, p, \varepsilon}\)}
    \COMMENT{all eigenvalues  small}
    \RETURN signal estimate $\hat{x} = \mu$
  \end{algorithmic}
  \label{alg:one-sparse}
\end{algorithm}

Info-Greedy Sensing with one-sparse measurements can be readily derived. Note that the mutual information between $x$ and the outcome using one-sparse measurement $y_1 = \transpose{e_j}x + w_1$ is given by
\[
\mathbb I [x;y_1]=\frac{1}{2}{\rm ln} (\Sigma_{jj}/\sigma^2+1),
\]
where $\Sigma_{jj}$ denote the $j$th diagonal entry of matrix $\Sigma$. Hence, the measurement that maximizes the mutual information is given by $e_{j^*}$ where $j^* = \arg\max_j \Sigma_{jj}$, i.e., measuring in the signal coordinate with the largest variance or largest uncertainty. Then Info-Greedy Sensing measurements can be found iteratively, as presented in Algorithm \ref{alg:one-sparse}. Note that the correlation of signal coordinates are reflected in the update of the covariance matrix: if the $i$th and $j$th coordinates of the signal are highly correlated, then the uncertainty in $j$ will also be greatly reduced if we measure in $i$. 
 A similar algorithm with one-sparse measurement for GMM signals can be derived, where in each iteration we select the component with the largest weight and measure in the signal coordinate with largest variance. 

\subsection{Updating covariance with sequential data}

If our goal is to estimate a sequence of data $x_1, x_2, \ldots$ (versus just estimating a single instance), we may be able to update the covariance matrix using the already estimated signals simply via
\begin{equation}
\widehat{\Sigma}_{t} = \alpha \widehat{\Sigma}_{t-1} + (1-\alpha)\hat{x}_t\hat{x}_t^\intercal, \quad t = 1, 2, \ldots, \label{seq_update_cov}
\end{equation}
and the initial covariance matrix is specified by our prior knowledge
$\widehat{\Sigma}_0 = \widehat{\Sigma}$. Using the updated covariance matrix $\widehat{\Sigma}_t$, we design the next measurement for signal $x_{t+1}$. This way we may be able to correct the inaccuracy of $\widehat{\Sigma}$ by including new samples. We refer to this method as ``Info-Greedy-2'' hereafter.

\section{Performance bounds}

In the following, we establish performance bounds, for cases when we (1) sense Gaussian and GMM signals using estimated covariance matrices; (2) sense Gaussian signals with one-sparse measurements.

\subsection{Gaussian case with model mismatch}

To analyze the performance of our algorithms when the assumed covariance $\widehat{\Sigma}$ used in Algorithm \ref{alg:Gaussian-all} is different from the true signal covariance matrix $\Sigma$, we introduce the following notations. 
Let the eigenpairs of $\Sigma$ with the eigenvalues (which can be zero) ranked from the largest to the smallest to be $(\lambda_1, u_1), (\lambda_2, u_2), \ldots, (\lambda_n, u_n)$, and let the eigenpairs of $\widehat{\Sigma}$  with the eigenvalues  (which can be zero) ranked from the largest to the smallest to be $(\hat{\lambda}_1, \hat{u}_1), (\hat{\lambda}_2, \hat{u}_2), \ldots, (\hat{\lambda}_n, \hat{u}_n)$. Let the updated covariance matrix in Algorithm \ref{alg:Gaussian-all} starting from $\widehat{\Sigma}$ after $k$ measurements be $\widehat{\Sigma}_k$,  and the true posterior covariance matrix of the signal conditioned on these measurements be $\Sigma_k$. The relations of these notations are illustrated in Fig. \ref{evol}. 

\begin{figure}
\begin{center}
\includegraphics[width = 0.7\linewidth]{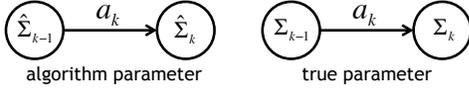}
\caption{Parameter updates performed by the algorithm and updates happen on the true distribution.}
\label{evol}
\end{center}\vspace{-0.2in}

\end{figure}

Note that since each time we measure in the direction of the dominating eigenvector of the posterior covariance matrix, $(\hat{\lambda}_k, \hat{u}_k)$ and $(\lambda_k, u_k)$ correspond to the largest eigenpair of $\widehat{\Sigma}_{k-1}$ and $\Sigma_{k-1}$, respectively.
Furthermore, define the difference between the true and the assumed conditional covariance matrices after $k$ measurements as
\[E_k=\widehat{\Sigma}_k-\Sigma_k,\quad k = 1, \ldots, K,\]
and their sizes  
\[\delta_k=\Vert E_k\Vert, \quad k = 1, \ldots, K.\] 
Let the eigenvalues of $E_k$ be $e_1\geq e_2 \geq\cdots\geq e_n$; then $\delta_k=\max\{\vert e_1 \vert, \vert e_n \vert\}$. Let 
\[
\delta_0 = \|\widehat{\Sigma} - \Sigma\|
\]
denote the size of the initial mismatch.

\subsubsection{Deterministic mismatch} 
First we assume the mismatch is deterministic, and find bounds for bias and variance of the estimated signal. Assume the initial mean is $\hat{\mu}$ and the true signal mean is $\mu$, the updated mean using Algorithm \ref{alg:Gaussian-all} after $k$ measurements is $\hat{\mu}_k$, and the true posterior mean is  $\mu_k$.
\begin{theorem}[Unbiasedness] 
\label{lemma_bias}  After $k$ measurements, the expected difference between the updated mean and the true posterior mean is given by
\[\mathbb E[\hat{\mu}_k - \mu_k]=(\hat{\mu} - \mu) \cdot\prod_{j=1}^k(I_n-\frac{\beta_j\hat{\lambda}_j }{\beta_j\hat{\lambda}_j+\sigma^2} \hat{u}_j \hat{u}_j^{\intercal}).\]
Moreover, if $\hat{\mu} = \mu$, i.e., the assumed mean is accurate, the estimator is unbiased throughout all the iterations $\mathbb E[\hat{\mu}_k - \mu_k]=0$, for $k=1, \ldots, K$.
\end{theorem}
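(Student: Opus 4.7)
The plan is an induction on $k$, after first recording a key structural fact: the covariance updates in Algorithm \ref{alg:Gaussian-all} do not involve the random measurement outcomes $y_j$, only the chosen directions $a_j$. Consequently both sequences $\{\widehat{\Sigma}_k\}$ and $\{\Sigma_k\}$, the measurement vectors $a_{k+1}=\sqrt{\beta_{k+1}}\hat{u}_{k+1}$, and the ``Kalman-gain'' vectors
\[
d_{k+1}\triangleq \frac{\widehat{\Sigma}_k a_{k+1}}{a_{k+1}^{\intercal}\widehat{\Sigma}_k a_{k+1}+\sigma^2},\qquad D_{k+1}\triangleq \frac{\Sigma_k a_{k+1}}{a_{k+1}^{\intercal}\Sigma_k a_{k+1}+\sigma^2},
\]
are deterministic; they pass outside any expectation. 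Note also that, because $\hat{u}_{k+1}$ is an eigenvector of $\widehat{\Sigma}_k$ with eigenvalue $\hat{\lambda}_{k+1}$, a one-line computation gives $d_{k+1}a_{k+1}^{\intercal}=\tfrac{\beta_{k+1}\hat{\lambda}_{k+1}}{\beta_{k+1}\hat{\lambda}_{k+1}+\sigma^2}\,\hat{u}_{k+1}\hat{u}_{k+1}^{\intercal}$, i.e.\ the very factor that appears in the theorem.

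For the induction step, subtract the ``true'' posterior update (which produces $\mu_{k+1}$ from $\mu_k$ via $D_{k+1}$) from the algorithm's update (which produces $\hat{\mu}_{k+1}$ from $\hat{\mu}_k$ via $d_{k+1}$) to obtain the identity
\[
\hat{\mu}_{k+1}-\mu_{k+1}=(\hat{\mu}_k-\mu_k)+d_{k+1}(y_{k+1}-a_{k+1}^{\intercal}\hat{\mu}_k)-D_{k+1}(y_{k+1}-a_{k+1}^{\intercal}\mu_k).
\]
Taking a total expectation, the $D_{k+1}$ term vanishes: since $\mu_k$ is the genuine Bayesian posterior mean under the true prior, the tower property gives $\mathbb{E}[\mu_k]=\mu$, and combined with $\mathbb{E}[y_{k+1}]=a_{k+1}^{\intercal}\mu$ this forces $\mathbb{E}[y_{k+1}-a_{k+1}^{\intercal}\mu_k]=0$. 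The same identity rewrites the remaining expectation as $\mathbb{E}[y_{k+1}-a_{k+1}^{\intercal}\hat{\mu}_k]=-a_{k+1}^{\intercal}\mathbb{E}[\hat{\mu}_k-\mu_k]$ (using that $\mathbb{E}[\hat\mu_k-\mu]=\mathbb{E}[\hat\mu_k-\mu_k]$), yielding the clean recursion
\[
\mathbb{E}[\hat{\mu}_{k+1}-\mu_{k+1}]=(I_n-d_{k+1}a_{k+1}^{\intercal})\,\mathbb{E}[\hat{\mu}_k-\mu_k].
\]
Iterating from the base case $\mathbb{E}[\hat{\mu}_0-\mu_0]=\hat{\mu}-\mu$ gives the stated product, and the ``moreover'' clause (unbiasedness when $\hat{\mu}=\mu$) follows from the zero initial condition.

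The main conceptual hurdle is handling the randomness of $\mu_k$ correctly: although Algorithm \ref{alg:Gaussian-all} only manipulates the working mean $\hat{\mu}_k$, the object $\mu_k$ against which we compare is itself a random vector (a function of $y_1,\dots,y_k$), and it is precisely the unbiasedness of the Bayesian posterior mean under the \emph{correct} prior, $\mathbb{E}[\mu_k]=\mu$, that kills the $D_{k+1}$ term and causes the recursion to collapse to a single linear factor per step. A small secondary point is the unconventional placement of $(\hat{\mu}-\mu)$ on the left of the displayed product in the theorem statement: this is unambiguous because the rank-one covariance updates preserve the eigenbasis of $\widehat{\Sigma}$, so the vectors $\hat{u}_1,\ldots,\hat{u}_k$ are mutually orthonormal and the symmetric factors $I_n-c_j\hat{u}_j\hat{u}_j^{\intercal}$ commute pairwise, so applying the product from either side produces the same vector.
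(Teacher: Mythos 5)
Your proof is correct and takes essentially the same route as the paper's: both derive the one-step recursion $\mathbb{E}[\hat{\mu}_{k+1}-\mu_{k+1}]=(I_n-d_{k+1}a_{k+1}^{\intercal})\,\mathbb{E}[\hat{\mu}_k-\mu_k]$ by exploiting that the measurement directions and gain vectors are deterministic and that $\mathbb{E}[y_{k+1}-a_{k+1}^{\intercal}\mu_k]=0$ by the tower property; the paper merely writes your difference $d_{k+1}-D_{k+1}$ out explicitly in terms of $E_k=\widehat{\Sigma}_k-\Sigma_k$ before observing that the whole term has zero mean. Your closing remark that the orthonormality of $\hat{u}_1,\ldots,\hat{u}_k$ makes the symmetric rank-one factors commute (so the left-multiplied product in the statement is unambiguous) is a small but genuine clarification the paper does not supply.
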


Next we show that the variance of the estimator, when the initial mismatch $\|\widehat{\Sigma} - \Sigma\|$ is sufficiently small, reduces gracefully. This is captured through the reduction of entropy, which is also a measure of the uncertainty in the estimator. In particular, we consider the posterior entropy of the signal conditioned on the previous measurement outcomes. Since the entropy of a Gaussian signal $x \sim\mathcal{N}(\mu, \Sigma)$ is given by
$
  \entropy{x} = \ln \left[(2\pi e)^{n/2} \det^{1/2}(\Sigma)\right],
$
the conditional mutual information is the log of the determinant of the conditional covariance matrix, or equivalently the log of the volume of the ellipsoid defined by the covariance matrix. 
Here, to accommodate the scenario where the covariance matrix is low-rank (our earlier assumption), we consider a modified definition for conditional entropy, which is the log of the volume of the ellipsoid on the low-dimensional space that the signal lies on:
\[
\entropy[y_{j}, a_{j}, j \leq k]{x}
= \ln [(2\pi e)^{s/2} {\textsf{Vol}}(\Sigma_k)],
\] 
where ${\textsf{Vol}}(\Sigma_k)$ is the volume of the ellipse defined by $\Sigma_k$  equal to the product of its non-zero eigenvalues. 
%
%
\begin{theorem}[Entropy of estimator]\label{entropy}
If for some constant $\delta\in (0, 1)$ the error satisfies 
\[
\|\widehat{\Sigma}-\Sigma \|\leq \frac{\delta}{4^{K+1}}\chin,\]
then for $k = 1, \ldots, K$,
\begin{equation}
\entropy[y_{j}, a_{j}, j \leq k]{x} \leq \frac{s}{2}
\left\{\ln [2\pi e~{\rm tr}(\Sigma)]
- \sum_{j=1}^{k}\ln (1/f_j)
\right\}, \label{entropy_bound}
\end{equation}
where
\begin{equation}
f_k=1-\frac{1-\delta}{s}\frac{\beta_k\hat{\lambda}_k}{\beta_k\hat{\lambda}_{k}+\sigma^2} \in (0, 1),\quad k = 1, \ldots, K. \label{def_f}
\end{equation}
\end{theorem}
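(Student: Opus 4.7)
The plan is to split the proof into three parts: (i) a stability estimate showing the covariance-mismatch $\delta_k$ stays small throughout the $K$ iterations, (ii) a per-step contraction of $\operatorname{tr}(\Sigma_k)$ by the factor $f_k$, and (iii) an AM--GM step converting the trace estimate into a volume (hence entropy) bound.

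For (i), both $\widehat\Sigma_k$ and $\Sigma_k$ are obtained from $\widehat\Sigma_{k-1}$ and $\Sigma_{k-1}$ respectively by the same rank-one Kalman-type update, with the same direction $\hat u_k$ and power $\beta_k$. Subtracting the two updates expresses $E_k = \widehat\Sigma_k - \Sigma_k$ as $E_{k-1}$ plus correction terms whose operator norms can be bounded in terms of $\delta_{k-1}$ and the (non-negative) scalar denominators. Collecting and taking the spectral norm, one shows $\delta_k \leq 4 \delta_{k-1}$. The hypothesis $\delta_0 \leq \frac{\delta}{4^{K+1}}\chin$ then gives $\delta_k \leq \frac{\delta}{4^{K+1-k}}\chin \leq \frac{\delta}{4}\chin$ for every $k \leq K$.

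For (ii), taking traces of the rank-one update
\[
\Sigma_k = \Sigma_{k-1} - \frac{\beta_k \Sigma_{k-1}\hat u_k \hat u_k^\intercal \Sigma_{k-1}}{\beta_k\hat u_k^\intercal\Sigma_{k-1}\hat u_k + \sigma^2}
\]
gives an exact trace reduction of $\beta_k\|\Sigma_{k-1}\hat u_k\|^2/(\beta_k\hat u_k^\intercal\Sigma_{k-1}\hat u_k+\sigma^2)$. Cauchy--Schwarz yields $\|\Sigma_{k-1}\hat u_k\|^2 \geq (\hat u_k^\intercal\Sigma_{k-1}\hat u_k)^2$, and writing $\hat u_k^\intercal\Sigma_{k-1}\hat u_k = \hat\lambda_k - \hat u_k^\intercal E_{k-1}\hat u_k \in [\hat\lambda_k-\delta_{k-1},\hat\lambda_k+\delta_{k-1}]$ together with $\operatorname{tr}(\Sigma_{k-1}) \leq s\,\lambda_1(\Sigma_{k-1}) \leq s(\hat\lambda_k+\delta_{k-1})$ (since $\Sigma_{k-1}$ has rank at most $s$) gives, after rearrangement,
\[
\operatorname{tr}(\Sigma_k) \leq f_k\,\operatorname{tr}(\Sigma_{k-1}),
\]
where the $(1-\delta)$ factor in the definition of $f_k$ precisely absorbs the $\delta_{k-1}$-perturbations, using the stability estimate from (i) that $\delta_{k-1} \leq \frac{\delta}{4}\chin$. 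Iterating yields $\operatorname{tr}(\Sigma_k) \leq \operatorname{tr}(\Sigma)\prod_{j=1}^k f_j$.

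For (iii), since $\Sigma_k$ is positive semidefinite with at most $s$ nonzero eigenvalues, AM--GM gives $\prod_{i}\nu_i(\Sigma_k) \leq (\operatorname{tr}(\Sigma_k)/s)^s$, so $\textsf{Vol}(\Sigma_k) \leq (\operatorname{tr}(\Sigma)/s)^s\prod_{j=1}^k f_j^s$; substituting into the modified entropy formula $\entropy[y_j,a_j,j\leq k]{x} = \ln[(2\pi e)^{s/2}\textsf{Vol}(\Sigma_k)]$ and simplifying delivers \eqref{entropy_bound}. The main obstacle is step (i): because the rank-one update mixes $E_{k-1}$ in both the numerator and denominator of the Kalman correction, a naive sub-multiplicative bound for $\delta_k$ can blow up super-exponentially. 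The precise factor $4^{K+1}$ in the hypothesis is chosen so that a per-step blow-up of at most $4$ just closes under iteration, keeping every $\delta_k$ below $\frac{\delta}{4}\chin$, which is exactly what the contraction argument in (ii) needs to fold all perturbation terms into the $(1-\delta)$ factor of $f_k$.
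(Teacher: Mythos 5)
Your proposal is correct and shares the paper's overall skeleton --- the $\delta_k\le 4\delta_{k-1}$ stability recursion of step (i) is exactly the paper's Lemma~\ref{propdelta}, and step (iii) is the paper's Hadamard/AM--GM passage from $\operatorname{tr}(\Sigma_k)$ to $\textsf{Vol}(\Sigma_k)$ --- but your step (ii) takes a genuinely different route through the middle. The paper never bounds the trace decrease of the true posterior directly: it derives a recursion for the perturbation $E_k=\widehat\Sigma_k-\Sigma_k$, lower-bounds $\operatorname{tr}(E_k)-\operatorname{tr}(E_{k-1})$ term by term (using $\operatorname{rank}(a_ka_k^\intercal E_{k-1})\le1$ and $\operatorname{tr}(E_{k-1}a_ka_k^\intercal E_{k-1})\ge0$), and combines the result with the exact recursion $\operatorname{tr}(\widehat\Sigma_k)=\operatorname{tr}(\widehat\Sigma_{k-1})-\beta_k\hat\lambda_k^2/(\beta_k\hat\lambda_k+\sigma^2)$ to obtain Lemma~\ref{proptrace}, after which a chain of algebraic rearrangements folds the $\delta_{k-1}$ error into the $(1-\delta)$ factor. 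Your version --- take the trace of the update of $\Sigma_{k-1}$ itself, use Cauchy--Schwarz to get a reduction of at least $\beta_k(\hat u_k^\intercal\Sigma_{k-1}\hat u_k)^2/(\beta_k\hat u_k^\intercal\Sigma_{k-1}\hat u_k+\sigma^2)$, and sandwich both $\hat u_k^\intercal\Sigma_{k-1}\hat u_k$ and $\lambda_1(\Sigma_{k-1})$ within $\delta_{k-1}$ of $\hat\lambda_k$ --- is shorter and avoids the $E_k$ bookkeeping entirely; it does close, since $\delta_{k-1}\le\tfrac{\delta}{4}\chin\le\tfrac{\delta}{4}\hat\lambda_k$ reduces the needed inequality to $(1-\delta/4)^2\ge(1-\delta)(1+\delta/4)$, which holds for all $\delta\in(0,1)$.

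Two loose ends remain. First, your AM--GM step needs $\Sigma_k$ to have \emph{exactly} $s$ nonzero eigenvalues, not ``at most $s$'': $\textsf{Vol}(\Sigma_k)$ is the product of the nonzero eigenvalues, and if the rank dropped to $r<s$ the comparison $(\operatorname{tr}/r)^r\le(\operatorname{tr}/s)^s$ is false in general (e.g.\ $\operatorname{tr}=0.1$, $r=1$, $s=2$). The paper closes this with a separate kernel-preservation argument (Lemma~\ref{proprank}) showing the rank-one Kalman update leaves $\ker\Sigma_{k-1}$ unchanged; you need the same fact. Second, the $\times4$ recursion in step (i) carries the per-step hypothesis $\beta_k\delta_{k-1}\le\tfrac34\sigma^2$; it does hold here because $\beta_k\le\sigma^2/\chin$ and $\delta_{k-1}<\chin$, but it must be checked before the recursion can be iterated $K$ times. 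Neither point changes your architecture; both are one-line repairs.
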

In the proof of Theorem \ref{entropy}, we use the trace of the underlying actual covariance matrix ${\rm tr}(\Sigma_k)$ as potential function, which serves as a surrogate for the product of eigenvalues that determines the volume of the ellipsoid and hence the entropy, since it is much easier to calculate the trace of the observed covariance matrix ${\rm tr} (\widehat{\Sigma}_{k})$. The following recursion is crucial for the derivation: for an assumed covariance matrix $\Sigma$, after measuring in the direction of a unit norm eigenvector $u$ with eigenvalue \(\lambda\)  using power \(\beta\),
the updated matrix takes the form of
\begin{equation}
  \label{eq:covariance-eigenvector}
  \begin{split}
  &\Sigma - \Sigma \sqrt{\beta} u
  \left(
    \transpose{\sqrt{\beta} u} \Sigma \sqrt{\beta} u
    + \sigma^{2}
  \right)^{-1}
  \transpose{\sqrt{\beta} u} \Sigma \\
&  =
  \frac{\lambda \sigma^{2}}{\beta \lambda + \sigma^{2}}
  u \transpose{u}
  + \Sigma^{\perp u},
  \end{split} 
\end{equation}
where \(\Sigma^{\perp u}\) is the component of \(\Sigma\)
in the orthogonal complement of \(u\).
Thus, the only change in the eigen-decomposition of \(\Sigma\)
is the update of the eigenvalue of \(u\)
from \(\lambda\) to
\(\lambda \sigma^{2} / (\beta \lambda + \sigma^{2})\).
Based on (\ref{eq:covariance-eigenvector}), after one measurement, the trace of the covariance matrix  becomes 
\begin{equation}
{\rm tr}(\widehat{\Sigma}_{k})={\rm tr}(\widehat{\Sigma}_{k-1})-\frac{\beta_{k}\hat{\lambda}_{k}^2}{\beta_{k}\hat{\lambda}_{k}+\sigma^2}.
\label{tr_hat_recursion}
\end{equation}


\begin{remark}
The upper bound of the posterior signal entropy in (\ref{entropy_bound}) shows that the amount of uncertainty reduction by the $k$th measurement is roughly $(s/2) \ln (1/f_k)$.
\end{remark}
\begin{remark}
Use the inequality $\ln(1-x) \leq -x$ for $x\in (0, 1)$, we have that in (\ref{entropy_bound})
\begin{align*}
&\entropy[y_{j}, a_{j}, j \leq k]{x}
\leq  \frac{s}{2}\ln[2\pi e{\rm tr}(\Sigma)]-\frac{1-\delta}{2}\sum_{j=1}^k \frac{\beta_j\hat{\lambda}_j}{\beta_{j}\hat{\lambda}_j+\sigma^2}\\
&=\frac{s}{2}\ln[2\pi e{\rm tr}(\Sigma)]-\frac{k(1-\delta)}{2} +\frac{(1-\delta)}{2}\sum_{j=1}^k\frac{\chin}{\hat{\lambda}_j}.
\end{align*}
On the other hand, in the ideal case if the true covariance matrix is used, the posterior entropy of the signal is given by 
\begin{align}
\mathbb{H}_{\rm ideal}\left[x,\middle| y_{j}, a_{j}, j \leq k\right]
&=\frac{1}{2} \ln [(2\pi e)^s \prod_{j=1}^s \lambda_j ]
-\frac{1}{2}\sum_{j=1}^k \frac{\lambda_j}{\chin},
\label{entropy_mismatch}
\end{align}
where $\tilde{\beta}_j = (1/\chin-1/\lambda_j)^+\sigma^2$. Hence, we have
\begin{align}
&\entropy[y_{j}, a_{j}, j \leq k]{x}  \nonumber\\
& 
~~~~~~\leq \mathbb{H}_{\rm ideal}\left[x,\middle| y_{j}, a_{j}, j \leq k\right]
+ C \nonumber \\
& ~~~~~~~~~- \frac{1}{2} \sum_{j=1}^k \left[
\frac{\lambda_j}{\chin}
 + (1-\delta)\left(1-\frac{\chin}{\hat{\lambda}_j}\right)
\right].
\label{entropy_mismatch_2}
\end{align}
where $C = \frac{s}{2} \ln [{\rm tr}(\Sigma)/\sqrt[s]{\prod_{j=1}^s\lambda_j}]$ is a constant independent of  measurements.  This upper bound has a nice interpretation: it characterizes the amount of uncertainty reduction with each measurement. For example, when the number of measurements required when using the assumed covariance matrix versus using the true covariance matrix are the same, we have $\lambda_j \geq \chin$ and $\hat{\lambda}_j \geq \chin$. Hence, the third term in (\ref{entropy_mismatch_2}) is upper bounded by $- k/2$, which means that the amount of reduction in entropy is roughly 1/2 nat per measurement. 
\end{remark}

\begin{remark}
Consider the special case where the errors only occur in the eigenvalues of the matrix but not in the eigenspace $U$, i.e., 
$\widehat{\Sigma} - \Sigma = U \mbox{diag}\{e_1, \cdots, e_s\} \transpose U$ 
and $\max_{1\leq j\leq s}{\vert e_j \vert}=\delta_0$, 
then the upper bound in (\ref{entropy_mismatch}) can be further simplified. 
Suppose only the first $K$ $(K\leq s)$ largest eigenvalues of $\widehat{\Sigma}$ are larger than the stopping criterion $\chin$ required by the precision, i.e., the algorithm takes $K$ iterations in total. Then  
\begin{align*}
\entropy[y_{j}, a_{j}, j \leq k]{x}
&\leq \mathbb{H}_{\rm ideal}\left[x,\middle| y_{j}, a_{j}, j \leq k\right] \\
&~~~~+K\ln(1+\delta_K/\chin)\\
&~~~~+\sum_{j=K+1}^{s} \ln(1+(\delta_0+\delta_K)/\lambda_j). 
\end{align*}
The additional entropy relative to the ideal case $\mathbb{H}_{\rm ideal}$ is typically small, because $\delta_K \leq \delta_0 4^K$ (according to Lemma \ref{propdelta} in the appendix), $\delta_0$ is on the order of $\varepsilon^2$, and hence the second term in the appendix is on the order of $K^2$; the third term will be small because $\delta_0$ and $\delta_K$ are small compare to $\lambda_j$. 

\end{remark}

Note that, however, if the power allocations $\beta_i$ are calculated using the eigenvalues of the assumed covariance matrix $\widehat{\Sigma}$, after $K = s$ iterations, we are not guaranteed to reach the desired precision $\varepsilon$ with probability $p$. However, this becomes possible if we  increase the total power slightly. 
The following theorem establishes an upper bound on the amount of extra total power needed to reach the same precision $\varepsilon$ compared to the total power $P_{\rm ideal}$ if we use the correct covariance matrix.

\begin{theorem}[Additional power required]\label{thm:power}
Assume $K\leq s$ eigenvalues of $\Sigma$ are larger than $\chin$. 
If 
\[\|\widehat{\Sigma} - \Sigma\|\leq \frac{1}{4^{s+1}}\chin,\] 
then to reach a precision $\varepsilon$ at confidence level $p$, the total power $P_{\rm mismatch}$ required by Algorithm \ref{alg:Gaussian-all} when using $\widehat{\Sigma}$ is upper bounded by
\[P_{\rm mismatch} < P_{\rm ideal} + \left[\frac{20}{51}s+\frac{1}{272}K\right]\frac{\sigma^2}{\chin}.\]

\end{theorem}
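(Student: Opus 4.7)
The plan is to bound $P_{\text{mismatch}} - P_{\text{ideal}}$ by decomposing the algorithm's power expenditure across iterations and comparing it, term by term, with the ideal schedule. Two ingredients drive the argument: Lemma \ref{propdelta} from the appendix, which gives the error-propagation bound $\delta_k \leq 4^k \delta_0$ and hence $\delta_k \leq \chin/4^{s+1-k}$ under the hypothesis, together with Weyl's inequality $|\hat\lambda_k - \lambda_k| \leq \delta_{k-1}$, which couples the $k$-th eigenvalues of the assumed and true conditional covariance matrices at each step.

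First, I would establish that the algorithm stops after at most $K' \leq s$ iterations: any direction beyond the support of $\Sigma$ has $\lambda_j = 0$, so by Weyl $\hat\lambda_j \leq \delta_0 < \chin$, and the stopping criterion $\Vert\widehat\Sigma\Vert \leq \chin$ must trigger no later than the $s$-th iteration. Next, for iterations $1 \leq k \leq \min(K,K')$ in which both $\lambda_k$ and $\hat\lambda_k$ exceed $\chin$, I would expand the per-iteration power as
\[
\hat\beta_k - \beta_k^{\text{ideal}} = \sigma^2\Bigl(\tfrac{1}{\lambda_k} - \tfrac{1}{\hat\lambda_k}\Bigr) = \frac{\sigma^2(\hat\lambda_k - \lambda_k)}{\lambda_k\hat\lambda_k} \leq \frac{\sigma^2 \delta_{k-1}}{\chin^2},
\]
and summing across iterations yields a geometric series in $1/4$ whose total is $O(\sigma^2/\chin)$.

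Then I would handle the misalignment around the threshold $\chin$: iterations with $\hat\lambda_k > \chin \geq \lambda_k$ (the algorithm measures an unneeded direction) contribute at most $\sigma^2 \delta_{k-1}/(\chin \hat\lambda_k)$ each, summed over at most $s-K$ indices; iterations with $\lambda_k > \chin \geq \hat\lambda_k$ (premature skipping) force an additional correction to push $\Vert\Sigma_{K'}\Vert \leq \Vert\widehat\Sigma_{K'}\Vert + \delta_{K'} \leq \chin + \delta_{K'}$ back below $\chin$ (which is the requirement for precision $\varepsilon$ at confidence $p$), costing at most $\sigma^2 \delta_{K'}/\chin^2$ per affected direction. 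Both contributions are again controlled by geometric sums in $\delta_k$.

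Assembling the three contributions and evaluating the resulting geometric sums $\sum_{k=1}^{s} 4^{k-s-1} \leq 1/3$ together with its $K$-truncated variant should yield the explicit constants $20/51$ absorbing the $s$-linear terms from misalignment and propagation, and $1/272$ absorbing the smaller $K$-linear correction. The main obstacle is the delicate bookkeeping at iterations where $\lambda_k$ or $\hat\lambda_k$ sits within $\delta_{k-1}$ of the threshold $\chin$: in that regime the algorithm may either measure a direction the ideal schedule omits or skip a direction the ideal schedule measures, and tracking these edge cases coherently while still certifying $\Vert\Sigma_{K'}\Vert \leq \chin$ at termination is what determines the constants. The very strong hypothesis $\delta_0 \leq \chin/4^{s+1}$ is calibrated precisely so the cumulative effect of these edge cases stays linear in $s$ rather than exponential.
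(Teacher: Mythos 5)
Your proposal is correct and follows essentially the same route as the paper's proof: both rely on Lemma~\ref{propdelta} to get $\delta_k\leq 4^k\delta_0$, Weyl's inequality to couple $\hat\lambda_k$ and $\lambda_k$, termination within $s$ iterations, and a decomposition of the excess power into the eigenvalue-difference term on the shared iterations, the cost of the at most $s-K$ spurious directions with $\hat\lambda_k>\chin\geq\lambda_k$, and the correction needed because $\Vert\Sigma_{K'}\Vert$ may exceed $\Vert\widehat\Sigma_{K'}\Vert$ by $\delta_{K'}$ (which the paper implements by retargeting the power allocation to $\chin-\delta_s$ for every measured direction rather than as a separate case). The only caveat is that your third contribution must be charged to all measured directions, not just the ``premature skipping'' ones, since the true posterior covariance is not diagonalized by the measurement directions and only the global bound $\Vert\Sigma_{K'}\Vert\leq\Vert\widehat\Sigma_{K'}\Vert+\delta_{K'}$ is available; with that understanding the bookkeeping reproduces the stated constants.
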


\begin{remark}
In a special case when $K= s$ eigenvalues of $\Sigma$ are larger than $\chin$, 
 under the conditions of Theorem \ref{thm:power}, we have a simpler expression for the upper bound
\begin{align*}
P_{\rm mismatch}
& < P_{\rm ideal} + \frac{323}{816}  \frac{\sigma^2}{\chin}s.
\end{align*}
Note that the additional power required is quite small and is only linear in $s$. All other parameters are independent of the input matrix.
\end{remark}

\subsubsection{Initialize $\widehat{\Sigma}$}

In the following we present schemes to estimate $\widehat{\Sigma}$ to reach the desired precision in Theorem \ref{entropy}: (1) using sample covariance matrix if we are able to obtain full dimensional training samples; (2) using covariance sketching to estimate the covariance using random projections of the full dimensional training samples. 
%
%

Suppose the sample covariance matrix is obtained from training samples $\tilde{x}_1,\ldots,\tilde{x}_L$ that are drawn i.i.d. from  $\mathcal N(0,\Sigma)$, and $\widehat{\Sigma}=(1/L)\sum_{\ell=1}^{L}\tilde{x}_\ell \tilde{x}_\ell^{\intercal}.$ Then we need $L$ to be sufficiently large to reach the desired precision. The following Lemma \ref{cor:sampleSize} arises from a simple tail probability bound of the Wishart distribution (since the sample covariance matrix follows a Wishart distribution). 
\begin{lemma}[Initialize with sample covariance matrix]
\label{cor:sampleSize}
For any constant $\delta > 0$, we have $\Vert\widehat{\Sigma} -\Sigma\Vert \leq \delta$ with probability exceeding $1-2n\exp(-\sqrt{n})$, as long as
\[L\geq 4n^{1/2}{\rm tr}(\Sigma)(\Vert\Sigma\Vert/\delta^2+4/\delta).\]
\end{lemma}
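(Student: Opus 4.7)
The plan is to prove the lemma by applying a Bernstein-type concentration inequality to $\widehat{\Sigma} - \Sigma = L^{-1}\sum_{\ell=1}^{L} X_\ell$, where the i.i.d.\ zero-mean symmetric summands $X_\ell = \tilde{x}_\ell \tilde{x}_\ell^{\intercal} - \Sigma$ capture the deviation of the sample covariance from the true covariance. Since $\tilde{x}_\ell \sim \mathcal{N}(0,\Sigma)$, the matrix $L\widehat{\Sigma}$ is Wishart distributed as $W_n(L,\Sigma)$, and the target tail bound will follow from Wishart concentration together with a union bound across the $n$ dimensions (which is the source of the factor $2n$ in the probability).

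First, I would compute the matrix variance proxy. Writing $\tilde{x}_\ell = \Sigma^{1/2} z_\ell$ with $z_\ell \sim \mathcal{N}(0,I_n)$ and using standard Gaussian fourth-moment identities, one gets $\mathbb{E}[X_\ell^2] = \operatorname{tr}(\Sigma)\,\Sigma + \Sigma^2$, so that $\bigl\|\sum_{\ell=1}^{L}\mathbb{E}[X_\ell^2]\bigr\| \leq 2L\,\operatorname{tr}(\Sigma)\|\Sigma\|$. Because the $X_\ell$ are unbounded, I would use a sub-exponential matrix Bernstein inequality; the Orlicz norm of $\|\tilde{x}_\ell\|^2$ is of order $\operatorname{tr}(\Sigma)$, which supplies the effective ``range'' parameter $R \asymp \operatorname{tr}(\Sigma)$ that appears in the linear Bernstein term.

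Second, I would invoke a matrix Bernstein bound of the form
\[
\mathbb{P}\bigl(\|\widehat{\Sigma}-\Sigma\| \geq \delta\bigr)
\;\leq\; 2n\exp\!\left(-\frac{cL\delta^2}{\operatorname{tr}(\Sigma)\|\Sigma\| + R\delta}\right),
\]
and solve for the smallest $L$ that forces the exponent inside to be at least $\sqrt{n}$. Splitting the denominator into the two Bernstein regimes gives a variance-dominated condition $L \gtrsim \sqrt{n}\,\operatorname{tr}(\Sigma)\|\Sigma\|/\delta^2$ and a range-dominated condition $L \gtrsim \sqrt{n}\,\operatorname{tr}(\Sigma)/\delta$; summing these contributions and tracking the absolute constants produces the stated sufficient condition $L \geq 4n^{1/2}\operatorname{tr}(\Sigma)\bigl(\|\Sigma\|/\delta^2 + 4/\delta\bigr)$.

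The main obstacle is that the Gaussian summands $\tilde{x}_\ell\tilde{x}_\ell^{\intercal}$ are not almost-surely bounded, so the textbook bounded matrix Bernstein inequality does not apply verbatim; its sub-exponential variant (or, equivalently, a direct Laurent–Massart chi-squared bound combined with an $\varepsilon$-net argument on the sphere) must be used, and the absolute constants must be carried through carefully to match the factors $4$, $4$, and $2n$ exactly. A cleaner but slightly weaker route would be to use the unitarily invariant decomposition $\widehat{\Sigma}-\Sigma = \Sigma^{1/2}(\hat{W}-I)\Sigma^{1/2}$ with $\hat{W} = L^{-1}\sum_\ell z_\ell z_\ell^{\intercal}$ a standard Wishart, and apply Davidson--Szarek to $\hat{W}-I$; this produces the $\|\Sigma\|^2$ factor in place of $\operatorname{tr}(\Sigma)\|\Sigma\|$, which is tighter but not in the exact form requested, so the Bernstein route is preferable for matching the statement.
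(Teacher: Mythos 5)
Your route is essentially the paper's: its proof is a one-line application of the cited Wishart concentration inequality (Lemma~\ref{zhu2012} in the appendix) with $t=\sqrt{n}$, and that inequality is precisely the sub-exponential matrix-Bernstein-type bound you propose to derive, with variance proxy ${\rm tr}(\Sigma)\Vert\Sigma\Vert$, range parameter ${\rm tr}(\Sigma)$, and dimensional prefactor $2n$. The one point to watch is that a generic Bernstein inequality with an unspecified absolute constant $c$ cannot reproduce the lemma's explicit constants $4$ and $16$; the paper's cited bound is fully explicit, namely $\Vert\widehat{\Sigma}-\Sigma\Vert\le\bigl(\sqrt{2t(\theta+1)/L}+2t\theta/L\bigr)\Vert\Sigma\Vert$ with $\theta={\rm tr}(\Sigma)/\Vert\Sigma\Vert$, and it is this explicit form that lets the stated lower bound on $L$ close the algebra exactly.
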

Lemma \ref{cor:sampleSize} shows that the number of measurements needed  to reach a precision $\delta$ for a sample covariance matrix is $\mathcal{O}(1/\delta^2)$ as expected. 

We may also use a covariance sketching scheme similar to that described in \cite{CovSketching2012,CovSketchSparse,ChenChiGoldsmith2014} to estimate $\widehat{\Sigma}$. Covariance sketching is based on random projections of each training samples, and hence it is memory efficient when we are not able to store or operate on the full vectors directly. The covariance sketching scheme is described below and illustrated in Fig. \ref{cov_sketch}. Assume training samples $\tilde{x}_i$, $i = 1, \ldots, N$ are drawn from the signal distribution. Each sample, $\tilde{x}_i$ is sketched $M$ times using random sketching vectors $b_{ij}$, $j = 1, \ldots, M$, through a noisy linear measurement $(b_{ij}^\intercal x_i + w_{ijl})^2$, and we repeat this for $L$ times ($l = 1, \ldots, L$) and compute the average energy to suppress noise\footnote{Our sketching scheme is slightly different from that used in \cite{ChenChiGoldsmith2014} because we would like to use the square of the noisy linear measurements $y_i^2$ (where as the measurement scheme in \cite{ChenChiGoldsmith2014} has a slightly different noise model).  In practice, this means that we may use the same measurement scheme in the first stage as training to initialize the sample covariance matrix.}. This sketching process can be shown to be a linear operator $\mathcal{B}$ applied on the original covariance matrix $\Sigma$, as shown in Appendix \ref{app:cov_sketch}. 
We may recover the original covariance matrix from the vector of sketching outcomes $\gamma \in \mathbb{R}^M$ by solving the following convex optimization problem 
\begin{equation}\label{opt}
\begin{array}{rl}
\widehat{\Sigma}= \argmin_{X} & {\rm tr}(X)\\
{\rm subject\ to}& X\ \succeq 0,\ \Vert \gamma-\mathcal B(X)\Vert_1\leq \tau,
\end{array}
\end{equation}
where $\tau$ is a user parameter that depends on the noise level. In the following theorem, we further establish conditions on the covariance sketching parameters $N$, $M$, $L$, and $\tau$ so that the recovered covariance matrix $\widehat{\Sigma}$ may reach the required precision in Theorem \ref{entropy}, by adapting the results in \cite{ChenChiGoldsmith2014}.

\begin{figure}
\begin{center}
\includegraphics[width = 0.45\linewidth]{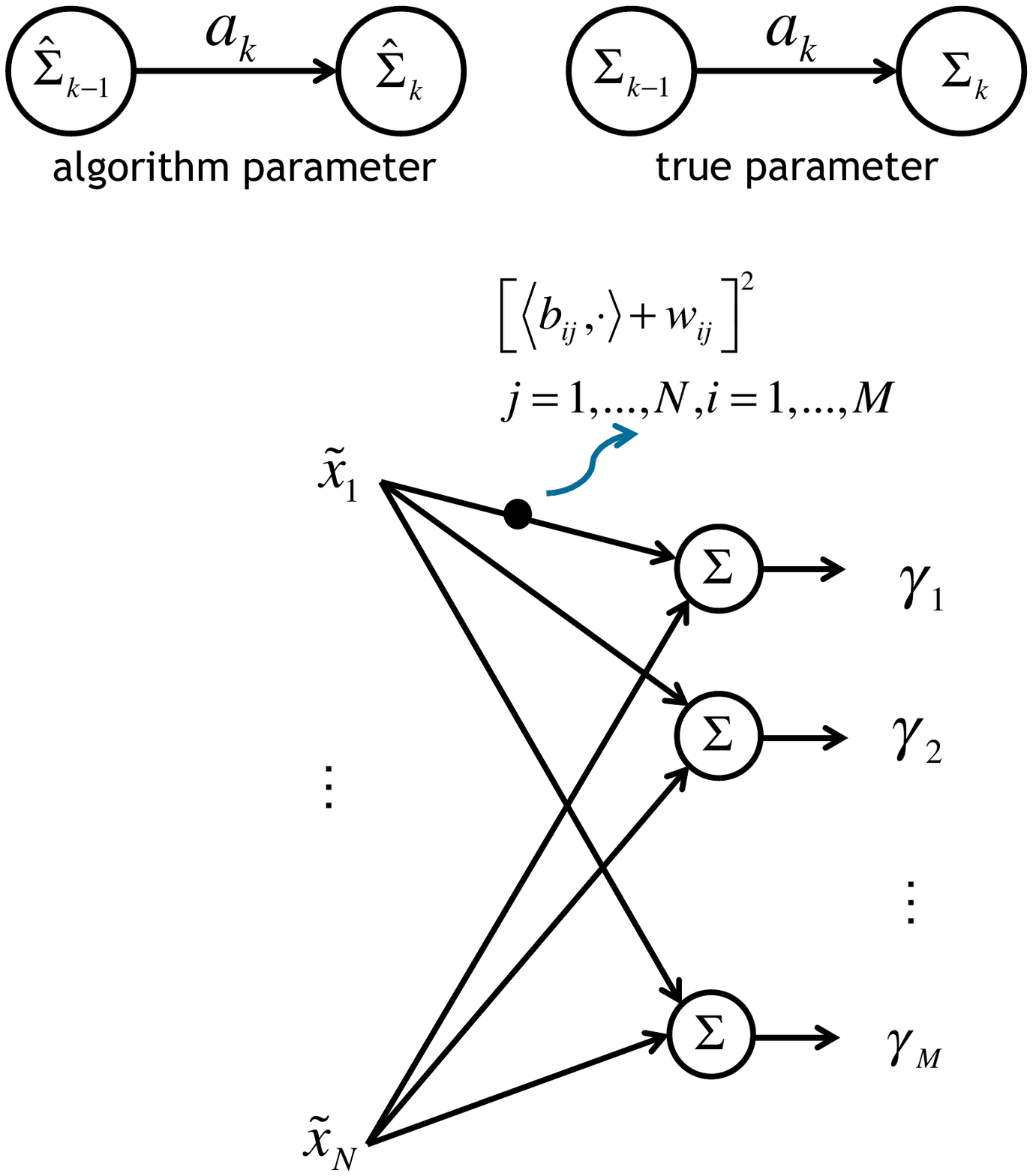}
\caption{Diagram of covariance sketching in our setting. The circle aggregates quadratic sketches from branches and computes the average. }
\label{cov_sketch}
\end{center}\vspace{-0.2in}
\end{figure}

\begin{lemma}[Initialize with covariance sketching]
\label{thm:cov-sketch}
For any $\delta > 0$ the solution to (\ref{opt}) satisfies 
$\Vert \widehat{\Sigma}-\Sigma \Vert\leq \delta,$
with probability exceeding
$1-2/n-{2}/{\sqrt{n}}-2n\exp(-\sqrt{n}) -\exp(-c_1 M)$,  as long as 
the parameters $M$, $N$, $L$ and $\tau$ satisfy the following conditions 
\begin{align}
M&>c_0 ns,\\
N 
&\geq 4n^{1/2}{\rm tr}(\Sigma)(\frac{36 M^2 n^2 \Vert\Sigma\Vert}{\tau^2}+\frac{24Mn}{\tau}),\\
L
&\geq \max\left\{ \frac{M}{4n^2\Vert\Sigma\Vert}\sigma^2, \ \frac{1}{\sqrt{2[{\rm tr}{(\Sigma)}/\Vert\Sigma\Vert] Mn^2}}\sigma^2, \frac{6M}{\tau}\sigma^2\right\},\label{L_LB}\\
\tau& =M \delta/c_2, \label{def_tau}
\end{align}
where $c_0$, $c_1$, and $c_2$ are absolute constants. 
\end{lemma}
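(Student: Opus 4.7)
The plan is to reduce Lemma \ref{thm:cov-sketch} to two ingredients: the sample-covariance deviation bound of Lemma \ref{cor:sampleSize}, applied to the empirical covariance formed from the $N$ training samples, and a robust recovery guarantee for the PSD trace-minimization program against the quadratic sketching operator $\mathcal B$, borrowed from \cite{ChenChiGoldsmith2014}. The strategy is first to show that under the stated choices of $N$ and $L$ the true $\Sigma$ is feasible for program (\ref{opt}) with high probability, i.e., $\|\gamma - \mathcal B(\Sigma)\|_1 \le \tau$, and then to translate this feasibility together with an RIP-type property of $\mathcal B$ on rank-$s$ PSD matrices into the spectral-norm bound $\|\widehat{\Sigma}-\Sigma\|\le c_2\tau/M$. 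Setting $\tau = M\delta/c_2$ as in (\ref{def_tau}) then produces the desired $\|\widehat{\Sigma}-\Sigma\|\le \delta$.

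First I would decompose, for each sketch index $j$, the residual as $\gamma_j - \mathcal B_j(\Sigma) = b_j^\intercal (\widehat{\Sigma}_N - \Sigma) b_j + \text{cross}_j + \text{sq}_j$, where $\widehat{\Sigma}_N = (1/N)\sum_i \tilde x_i \tilde x_i^\intercal$ is the sample covariance, $\text{cross}_j = (2/NL)\sum_{i,\ell}(b_j^\intercal \tilde x_i) w_{ij\ell}$ is the signal--noise cross term, and $\text{sq}_j = (1/NL)\sum_{i,\ell} w_{ij\ell}^2 - \sigma^2$ is the centered squared-noise term. The budget $\tau$ is then split over these three pieces. For the sample-covariance piece, bound $\sum_j |b_j^\intercal (\widehat{\Sigma}_N - \Sigma) b_j| \le M \cdot (\max_j \|b_j\|^2) \cdot \|\widehat{\Sigma}_N - \Sigma\|$, apply Lemma \ref{cor:sampleSize} with target precision $\delta' = \tau/(6Mn)$, and observe that substituting this $\delta'$ into Lemma \ref{cor:sampleSize} reproduces exactly the stated $N$-condition and yields $\|\widehat{\Sigma}_N-\Sigma\|\le \delta'$ with probability $\ge 1-2n\exp(-\sqrt n)$. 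For the two noise pieces, condition on $\{\tilde x_i\}$ and $\{b_j\}$, compute the variance of $\text{cross}_j$ (of order $\sigma^2 \, b_j^\intercal \Sigma b_j /(NL)$) and of $\text{sq}_j$ (of order $\sigma^4/(NL)$), sum over $j$, and apply Chebyshev's inequality; the three terms in the $\max$ of (\ref{L_LB}) are precisely the three requirements that push each contribution below its slice of $\tau$ with total failure probability $\le 2/n + 2/\sqrt n$.

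With $\Sigma$ feasible for (\ref{opt}), the remaining step is to invoke the robust recovery guarantee of \cite{ChenChiGoldsmith2014} for PSD trace minimization under quadratic sketches: when $M > c_0 n s$, the operator $\mathcal B$ satisfies an $\ell_2/\ell_1$ restricted isometry on rank-$s$ PSD matrices with probability at least $1-\exp(-c_1 M)$, and feasibility of both $\Sigma$ and $\widehat{\Sigma}$ in (\ref{opt}) then forces $\|\widehat{\Sigma}-\Sigma\| \le \|\widehat{\Sigma}-\Sigma\|_F \le c_2 \tau / M$. The calibration (\ref{def_tau}) completes the proof, and the union bound over the sample-error, two noise, and RIP failure events gives the stated probability $1 - 2/n - 2/\sqrt n - 2n\exp(-\sqrt n) - \exp(-c_1 M)$. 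The main obstacle I anticipate is calibrating the splitting of the $\ell_1$-budget so the $N$, $L$, $\tau$ constants emerge exactly as written, and correctly handling the slight discrepancy flagged in the footnote between this paper's squared-measurement noise model and the one in \cite{ChenChiGoldsmith2014}, which means the variance-based concentration for $\text{cross}_j$ and $\text{sq}_j$ must be re-derived here rather than imported verbatim.
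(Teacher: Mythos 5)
Your proposal follows essentially the same route as the paper: the paper likewise decomposes the residual $\eta_i$ into the sample-covariance quadratic form, the signal--noise cross term $z_i$, and the squared-noise term, controls the first via the Wishart concentration bound underlying Lemma \ref{cor:sampleSize} and the other two via Chebyshev (with the three $L$-conditions each serving one piece of the $\tau$-budget), and then invokes the stability guarantee of \cite{ChenChiGoldsmith2014} for the trace-minimization program with $M > c_0 n s$ to convert $\|\eta\|_1 \le \tau = M\delta/c_2$ into $\|\widehat{\Sigma}-\Sigma\| \le \|\widehat{\Sigma}-\Sigma\|_F \le \delta$, finishing with a union bound. The only cosmetic differences are that the paper does not center the squared-noise term (it keeps the mean $M\sigma^2/L$ and absorbs it via the third $L$-condition) and bounds the sample-covariance contribution by $\sum_j\|b_j\|^2$ rather than $M\max_j\|b_j\|^2$.
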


\subsection{Gaussian mixture model (GMM)}

We also establish a lower bound on the number of measurements (or power) required to recover a GMM signal with high precision when there is model mismatch. The proof follows by identifying a connection between the Info-Greedy Sensing and the so-called \emph{multiplicative weight update} (MWU) algorithms (see e.g.,
\cite{cesa2006prediction,nisan2007algorithmic,arora2012multiplicative}). The
MWU method is actually a meta-algorithm and its instantiations span a large family of
algorithms. It has been re-derived under various names in various disciplines. MWU algorithms maintain a distribution over experts (which corresponds to different Gaussian components in our case) and form a solution by e.g., a majority vote or an average over the solutions suggested by the experts. (Here each Gaussian component will suggest a sensing vector.) The weights are updated in each round according to the posterior update. We will use the hedge version of MWU in deriving the result. 

\begin{theorem}[GMM with Mismatch]
\label{lemma:GMM_mismatch}
Denote the posterior mean and covariance of component $c$ after $k$ iterations as $\mu_{c,k}$ and $\Sigma_{c,k}$, and their perturbed counterparts as $\hat{\mu}_{c,k}$ and $\widehat{\Sigma}_{c,k}$, respectively. Let $\delta_{c,k}=\Vert \widehat{\Sigma}_{c,k}-\Sigma_{c,k} \Vert$, 
and $m_c$ be the number of measurements (or power) required to ensure $\|x-\hat{x}\| <\varepsilon$ with probability $p$ for a Gaussian signal $\mathcal{N}(\mu_c, \Sigma_c)$ corresponding to component $c$ for all $c \in [C]$ if we start with sample covariance matrix $\widehat{\Sigma}_c$.
%
Then if both the mismatch in the initial mean $|a^\intercal(\mu - \hat{\mu})|$ and the initial covariance $\|\widehat{\Sigma}-\Sigma\|$ are sufficiently small so that $\eta_0 = \mathcal{O}(\hat{\eta})$, then
for a signal sampled from the $c^*$th component, we need at most 
\[
 \sum_{c=1}^C m_{c}+\mathcal{O}\left(\frac{{\rm ln}\vert C \vert}{\hat{\eta}+\eta_0}\right)
\]
amount of power to ensure $\|x - \widehat{x}\| < \varepsilon$ with probability $p(1-\hat{\eta} - 1/n)$ when sampling from the posterior distribution of $\pi$ with probability. Here $\hat{\eta} = 1/2$, 
\[
\eta_0 = \frac{1}{\sigma^4}\cdot \max_{k=1}^n \left \{\sigma^2(\vert\varrho_k\vert+2n{b_k})\vert\varrho_k\vert +\beta_k n^2 {b_k^2}\delta_{k-1}\right\},
\]
\[
\varrho_k \triangleq a_k^\intercal (\hat{\mu}_{c^*, k}-\mu_{c^*, k}),
\]
\[
b_k = \sqrt{(\lambda_{c^*,k} + \delta_{c^*,k-1}) + \sigma^2 + (a_k^\intercal(\mu_{c^*} - \mu_{c^*,k-1}))^2}. 
\]
\end{theorem}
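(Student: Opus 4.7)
The plan is to cast Algorithm \ref{alg:GMM_heuristics} as an instance of the Hedge variant of multiplicative weights (MWU) with $C$ experts, one per Gaussian component, and then to apply the standard Hedge regret bound with an extra slack term that accounts for the inaccurate loss evaluation caused by the mismatch in $\mu$ and $\Sigma$. Identifying the posterior weight $\pi_c$ with the Hedge weight of expert $c$, line~9 of Algorithm \ref{alg:GMM_heuristics} is exactly the multiplicative update $\pi_c \leftarrow K\pi_c \exp(-\ell_{c,k})$, where
\[
\ell_{c,k} \;=\; \tfrac{1}{2}\,(y_k-a_k^\intercal \hat{\mu}_{c,k-1})^2/(a_k^\intercal \widehat{\Sigma}_{c,k-1}a_k+\sigma^2)
\]
is the expert-$c$ negative log-likelihood at step $k$ and the implicit learning rate is the factor $\hat{\eta}=1/2$ in the exponent. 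Classical Hedge analysis yields, after $T$ rounds, $-\ln \pi_{c^*,T} + \ln \pi_{c^*,0} \le (1+\hat{\eta})\min_{c'}\sum_{k\le T}\ell_{c',k} + \ln|C|/\hat{\eta}$, so the true component $c^*$ acquires dominant posterior mass once its cumulative loss is within an additive $O(\ln|C|/\hat{\eta})$ of the minimum.

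To control the loss of $c^*$ under mismatch, I would compare $\ell_{c^*,k}$ with the idealized loss $\ell_{c^*,k}^\star$ computed with the true $(\mu_{c^*},\Sigma_{c^*})$. Writing $y_k-a_k^\intercal\hat{\mu}_{c^*,k-1} = (y_k-a_k^\intercal\mu_{c^*,k-1})+\varrho_k$ and expanding, the numerator perturbation is of order $\varrho_k^2 + |\varrho_k|\cdot b_k$ with $b_k$ the uniform bound on the residual magnitude stated in the theorem; the denominator perturbation is bounded by $\beta_k \delta_{c^*,k-1}$ since $a_k^\intercal(\widehat{\Sigma}_{c^*,k-1}-\Sigma_{c^*,k-1})a_k\le \beta_k\delta_{c^*,k-1}$. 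Dividing and taking the worst case over $k$ yields $|\ell_{c^*,k}-\ell_{c^*,k}^\star|\le \eta_0$ with $\eta_0$ exactly as in the statement; the factor $\sigma^{-4}$ arises from two denominator-Taylor factors of $1/\sigma^2$, the factor $n$ from a coordinate-wise chi-squared tail bound on the noise, and $n^2$ from the cross coupling between mean and noise errors. Absorbing $\eta_0$ into the Hedge slack replaces the effective learning rate by $\hat{\eta}+\eta_0$, which remains well-defined under the hypothesis $\eta_0=\mathcal{O}(\hat{\eta})$.

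Given these two ingredients, I would invoke Theorems \ref{entropy} and \ref{thm:power} componentwise: once the algorithm commits to component $c$, at most $m_c$ further power is needed to drive $\|\widehat{\Sigma}_c\|\le \chi_{n,p,\varepsilon}$. In the worst case, before committing, the algorithm may expend up to $\sum_{c\neq c^*} m_c$ power while disambiguating components, after which Hedge regret guarantees that at most $O(\ln|C|/(\hat{\eta}+\eta_0))$ additional iterations suffice for $c^*$ to carry dominant mass, and $m_{c^*}$ final iterations complete the recovery; summing gives the stated $\sum_c m_c + O(\ln|C|/(\hat{\eta}+\eta_0))$. A union bound over (i) the single-component Gaussian precision event (probability $p$ from Theorem \ref{thm:power}), (ii) the chi-squared tail used in $b_k$ (contributing $1/n$), and (iii) the Hedge multiplicative slack $\hat{\eta}$, yields confidence level $p(1-\hat{\eta}-1/n)$, as claimed.

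The principal obstacle will be the mismatch bound $|\ell_{c^*,k}-\ell_{c^*,k}^\star|\le \eta_0$ with the precise combinatorial constants appearing in the statement. The expansion of $(y_k-a_k^\intercal\hat{\mu}_{c^*,k-1})^2$ produces cross terms $\varrho_k\cdot w_k$ and $\varrho_k\cdot a_k^\intercal(x-\mu_{c^*,k-1})$ that depend on the stochastic noise $w_k$, and these must be controlled uniformly across coordinates via chi-squared tails; this is the source of both the factors of $n$ and $n^2$ in $\eta_0$ and of the $1/n$ term in the final probability. Moreover, $b_k$ itself couples the covariance slippage $\delta_{c^*,k-1}$ and the mean-slippage $a_k^\intercal(\mu_{c^*}-\mu_{c^*,k-1})$, so one needs an induction that propagates these slippages through iterations while keeping $\eta_0$ dominated by $\hat{\eta}$ throughout, so that the Hedge regret bound remains effective.
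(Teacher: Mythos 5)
Your proposal is correct and follows essentially the same route as the paper: cast the posterior-weight update as Hedge/MWU over the $C$ components, absorb the mismatch into an additive per-round loss perturbation bounded by $\eta_0$ (splitting numerator mean-error from denominator covariance-error), and account power as $\sum_c m_c$ for per-component reduction plus the regret term $\mathcal{O}(\ln|C|/(\hat{\eta}+\eta_0))$ for identification. The only cosmetic difference is the source of the $n$ and $n^2$ factors: the paper obtains both from a single Chebyshev bound $|z_k|\le n b_k$ on the Gaussian residual (taking $t=1/n$ and union-bounding over at most $n$ measurements, which also yields the $1/n$ loss in the confidence level), rather than from separate chi-squared and cross-coupling arguments.
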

Note that here the constants are defined in terms of maximizing from $1$ to $n$. This can be understood as if we run the algorithm until we have aquired $n$ measurements.
\begin{remark}
If our goal is to detect the correct component (rather than recovering the signal itself), we need at most $\mathcal{O}\left(\frac{{\rm ln}\vert C \vert}{\hat{\eta}+\eta_0}\right)$ samples if the true component is $c^*$.
\end{remark}

\begin{remark}
Compared to GMM result without mismatch, which is on the order of $\mathcal{O}(\ln|C|/\hat{\eta})$ \cite{InfoGreedy2014}, this upper bound actually requires a smaller number of measurements $\mathcal{O}(\ln|C|/(\hat{\eta}+\eta_0))$. This is consistent with our intuition, and it says that if our estimation accuracy for the covariance matrices is low, then we should not ``labor'' as much. Because the estimation error will create an ``error floor'' which does not decrease by making more measurements, and it is not meaningful to make additional measurements below the noise floor. Of course, when there is covariance error, the overall error bound will be higher as well (which is already captured by a larger error bound).
\end{remark}

\subsection{One-sparse measurement}

In the following we provide performance bounds for the case of one-sparse measurements  in Algorithm \ref{alg:one-sparse}. Assume the signal covariance matrix is known precisely. 
Now that $\Vert a_k \Vert_0=1$, we have $a_k=\sqrt{\beta_k} u_k$, where $u_k\in \{e_1,\cdots, e_n\}$.
Suppose the largest diagonal entry of $\Sigma^{(k-1)}$ is determined by
\[
j_{k-1}={\rm arg} \max_{t}\Sigma_{tt}^{(k-1)}.
\]
From the update equation for the covariance matrix in Algorithm \ref{alg:one-sparse}, 
the largest diagonal entry of $\Sigma^{(k)}$ can be determined from  
\[
j_{k}={\rm arg}\max_{t}~\left\{ \Sigma_{tt}^{(k-1)}-\frac{(\Sigma_{t j_{k-1}}^{(k-1)})^2}{\Sigma_{j_{k-1}j_{k-1}}^{(k-1)}+\sigma^2/\beta_k}\right\}.
\]
Let the correlation coefficient be denoted as \[\rho_{ij}^{(k)}=\frac{(\Sigma_{ij}^{(k)})^2}{\Sigma_{ii}^{(k)}\Sigma_{jj}^{(k)}},\]  where the covariance of the $i$th and $j$th coordinate of $x$ after $k$ measurements is denoted  as $\Sigma_{ij}^{(k)}$. 
\begin{lemma}[One sparse measurement. Recursion for trace of covariance matrix]\label{lemma:one_sparse}
Assume the minimum correlation for the $k$th iteration is $\rho^{(k-1)} \in [0, 1)$ such that $\rho^{(k-1)}\leq |\rho_{ij_{k-1}}^{(k-1)}|$ for any $i\in [n]$. Then for a constant $\gamma>0$, if the power of the $k$th measurement $\beta_k$ satisfies $\beta_k\geq {\sigma^2}/\left({\gamma\max_{t}\Sigma_{tt}^{(k-1)}}\right)$, we have
\begin{equation}
{\rm tr}(\Sigma_k)\leq \left[1-\frac{(n-1)\rho^{(k-1)}+1}{n(1+\gamma)}\right]{\rm tr}(\Sigma_{k-1}).
\label{trace_bound}
\end{equation}
\end{lemma}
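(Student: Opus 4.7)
The plan is to derive the trace recursion directly from the rank-one update formula, bounding the trace decrement $\mathrm{tr}(\Sigma_{k-1})-\mathrm{tr}(\Sigma_k)$ in terms of $\mathrm{tr}(\Sigma_{k-1})$ using the correlation lower bound and the power condition.

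First I would specialize the update in Algorithm \ref{alg:one-sparse} to $a_k=\sqrt{\beta_k}\,e_{j_{k-1}}$. Taking traces on both sides and noting that $\mathrm{tr}(\Sigma_{k-1}e_{j_{k-1}}e_{j_{k-1}}^{\intercal}\Sigma_{k-1}) = \|\Sigma_{k-1}e_{j_{k-1}}\|^{2}=\sum_{i}\bigl(\Sigma_{i j_{k-1}}^{(k-1)}\bigr)^{2}$, I obtain
\[
\mathrm{tr}(\Sigma_{k-1})-\mathrm{tr}(\Sigma_k) = \frac{\beta_k \sum_{i}\bigl(\Sigma_{i j_{k-1}}^{(k-1)}\bigr)^{2}}{\beta_k\, \Sigma_{j_{k-1}j_{k-1}}^{(k-1)}+\sigma^{2}}.
\]

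Next I would rewrite the numerator using the definition of the correlation coefficient: $(\Sigma_{i j_{k-1}}^{(k-1)})^{2}=\rho_{i j_{k-1}}^{(k-1)}\,\Sigma_{ii}^{(k-1)}\,\Sigma_{j_{k-1}j_{k-1}}^{(k-1)}$. This factors the numerator as $\Sigma_{j_{k-1}j_{k-1}}^{(k-1)}\sum_{i}\rho_{i j_{k-1}}^{(k-1)}\,\Sigma_{ii}^{(k-1)}$. Separating the $i=j_{k-1}$ term (where the coefficient equals $1$) from the rest and applying the hypothesis $\rho_{ij_{k-1}}^{(k-1)}\ge \rho^{(k-1)}$ for $i\ne j_{k-1}$ together with the maximality $\Sigma_{j_{k-1}j_{k-1}}^{(k-1)}\ge \mathrm{tr}(\Sigma_{k-1})/n$ yields
\[
\sum_{i}\rho_{i j_{k-1}}^{(k-1)}\Sigma_{ii}^{(k-1)}\;\ge\;(1-\rho^{(k-1)})\tfrac{\mathrm{tr}(\Sigma_{k-1})}{n} + \rho^{(k-1)}\,\mathrm{tr}(\Sigma_{k-1}) = \frac{(n-1)\rho^{(k-1)}+1}{n}\,\mathrm{tr}(\Sigma_{k-1}).
\]

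Then I would handle the $\beta_k$-dependent factor. The power hypothesis $\beta_k \ge \sigma^{2}/(\gamma\,\Sigma_{j_{k-1}j_{k-1}}^{(k-1)})$ gives $\beta_k \Sigma_{j_{k-1}j_{k-1}}^{(k-1)} \ge \sigma^{2}/\gamma$, so the scalar ratio obeys
\[
\frac{\beta_k \Sigma_{j_{k-1}j_{k-1}}^{(k-1)}}{\beta_k \Sigma_{j_{k-1}j_{k-1}}^{(k-1)}+\sigma^{2}} \;\ge\; \frac{1/\gamma}{1/\gamma + 1} \;=\; \frac{1}{1+\gamma},
\]
because $t\mapsto t/(t+\sigma^{2})$ is monotonically increasing in $t>0$. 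Multiplying the two lower bounds and substituting into the trace identity gives exactly (\ref{trace_bound}).

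No step looks like a real obstacle: the whole argument is a sequence of substitutions and monotone bounds. The one place to be careful is the maximality step $\Sigma_{j_{k-1}j_{k-1}}^{(k-1)}\ge \mathrm{tr}(\Sigma_{k-1})/n$, which relies on $\Sigma_{k-1}\succeq 0$ so that all diagonal entries are nonnegative; this in turn must be preserved by the update, which it is because each update subtracts a positive semidefinite rank-one matrix from the Schur-complement form (as already used in \eqref{eq:covariance-eigenvector}).
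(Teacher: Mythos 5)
Your proof is correct and follows essentially the same route as the paper's: both work from the diagonal entries of the rank-one update, use the pivot's maximality to get $\Sigma_{j_{k-1}j_{k-1}}^{(k-1)}\ge \mathrm{tr}(\Sigma_{k-1})/n$, and use the power condition to reduce the $\sigma^2/\beta_k$ term to a factor of $1/(1+\gamma)$. The only difference is organizational: you bound the aggregate trace decrement in one step, while the paper bounds the multiplicative shrinkage of each diagonal entry (separating $i=j_{k-1}$ from $i\neq j_{k-1}$) and then sums --- the algebra is the same.
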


Lemma \ref{lemma:one_sparse} provides a good bound for a one-step ahead prediction for the trace of the covariance matrix, as demonstrated in Fig. \ref{fig:bound}. 
Using the above lemma, we can obtain an upper bound on the number of measurements needed for one-sparse measurements. 
\begin{theorem}[Gaussian, one-sparse measurement]\label{Gaussian_one_sparse}
For constant $\gamma>0$, when power is allocated satisfying $\beta_k\geq {\sigma^2}/({\gamma\max_{t}\Sigma_{tt}^{(k-1)}})$ for $k=1,2,\ldots, K$, we have $\|\hat{x} - x\|\leq \varepsilon$ with probability $p$ as long as
\begin{equation}
K\geq \max\Big\{\frac{\ln[{\rm tr}(\Sigma)/\chin]}{\ln\frac{1}{1-{1}/[n(1+\gamma)]}},0\Big \}.
\label{K_one_sparse}
\end{equation}
\end{theorem}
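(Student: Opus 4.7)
The plan is to unroll the one-step recursion from Lemma \ref{lemma:one_sparse} across all $K$ iterations to get a geometric decay on the trace of the posterior covariance, then convert the resulting trace bound into a spectral-norm bound that triggers the termination condition $\|\Sigma_K\|\leq \chin$, and finally translate that into the desired probabilistic precision guarantee via a chi-squared tail bound on the posterior Gaussian error.

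First I would observe that the hypothesis on the per-step power $\beta_k\geq \sigma^2/(\gamma \max_t \Sigma_{tt}^{(k-1)})$ is exactly what is needed to apply Lemma \ref{lemma:one_sparse} at every iteration. Taking the worst case $\rho^{(k-1)} = 0$ (which is valid since $\rho^{(k-1)}\in[0,1)$ and the right-hand side of the recursion is monotone in $\rho^{(k-1)}$), Lemma \ref{lemma:one_sparse} yields the uniform contraction
\[
{\rm tr}(\Sigma_k) \leq \left(1 - \frac{1}{n(1+\gamma)}\right){\rm tr}(\Sigma_{k-1}).
\]
Iterating this from $k=1$ to $k=K$ produces
\[
{\rm tr}(\Sigma_K) \leq \left(1 - \frac{1}{n(1+\gamma)}\right)^{K}{\rm tr}(\Sigma).
\]
Since $\|\Sigma_K\|\leq {\rm tr}(\Sigma_K)$ for a positive semi-definite matrix, requiring the right-hand side to be at most $\chin$ already forces the termination condition of Algorithm \ref{alg:one-sparse}. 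Solving ${\rm tr}(\Sigma) \cdot (1 - 1/[n(1+\gamma)])^K \leq \chin$ for $K$ by taking logarithms and rearranging gives the bound displayed in (\ref{K_one_sparse}); the outer $\max\{\cdot,0\}$ simply handles the trivial corner case where ${\rm tr}(\Sigma)\leq \chin$ already, so that no measurement is needed.

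Finally I would close the argument by converting the spectral-norm bound into the precision guarantee. Conditioned on the measurement outcomes, the posterior is Gaussian with mean $\hat{x}=\mu_K$ and covariance $\Sigma_K$, so writing $\Sigma_K = \sum_i \nu_i(\Sigma_K)v_iv_i^\intercal$ gives $\|x-\hat{x}\|^2 = \sum_i \nu_i(\Sigma_K)z_i^2 \leq \|\Sigma_K\|\cdot \chi_n^2$ for standard Gaussian $z_i$. Therefore
\[
\Pr\bigl(\|x-\hat{x}\|^2\leq \|\Sigma_K\|\cdot \chi_n^2(p)\bigr)\geq p,
\]
and since $\|\Sigma_K\|\leq \chin = \varepsilon^2/\chi_n^2(p)$ by our choice of $K$, this yields $\|x-\hat{x}\|\leq \varepsilon$ with probability at least $p$.

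The main obstacle is ensuring that the worst-case substitution $\rho^{(k-1)}=0$ is legitimately a lower bound on the correlation uniformly in $k$, since $\rho^{(k-1)}$ is defined in terms of the evolving covariance and could in principle behave badly across iterations; however, because Lemma \ref{lemma:one_sparse} is valid for any admissible $\rho^{(k-1)}\in[0,1)$ and its conclusion is monotone increasing in $\rho^{(k-1)}$, substituting the weakest value $0$ only weakens the bound and is safe. A secondary subtlety is checking that the per-step power assumption is compatible with the termination criterion so that the algorithm indeed halts in exactly $K$ steps under the stated $K$; this follows because the geometric contraction pushes ${\rm tr}(\Sigma_K)$ below $\chin$ after the prescribed number of iterations.
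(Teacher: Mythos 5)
Your proposal is correct and follows essentially the same route as the paper's own proof: iterate Lemma \ref{lemma:one_sparse} with the worst case $\rho^{(k-1)}=0$ to get the geometric contraction of ${\rm tr}(\Sigma_k)$, bound $\|\Sigma_K\|\leq{\rm tr}(\Sigma_K)$, solve for $K$, and finish with the chi-squared tail bound on the Gaussian posterior. The only difference is the order of presentation (the paper states the probabilistic step first), which is immaterial.
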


\begin{figure}[h!]
\begin{center}
\includegraphics[width = 0.7\linewidth]{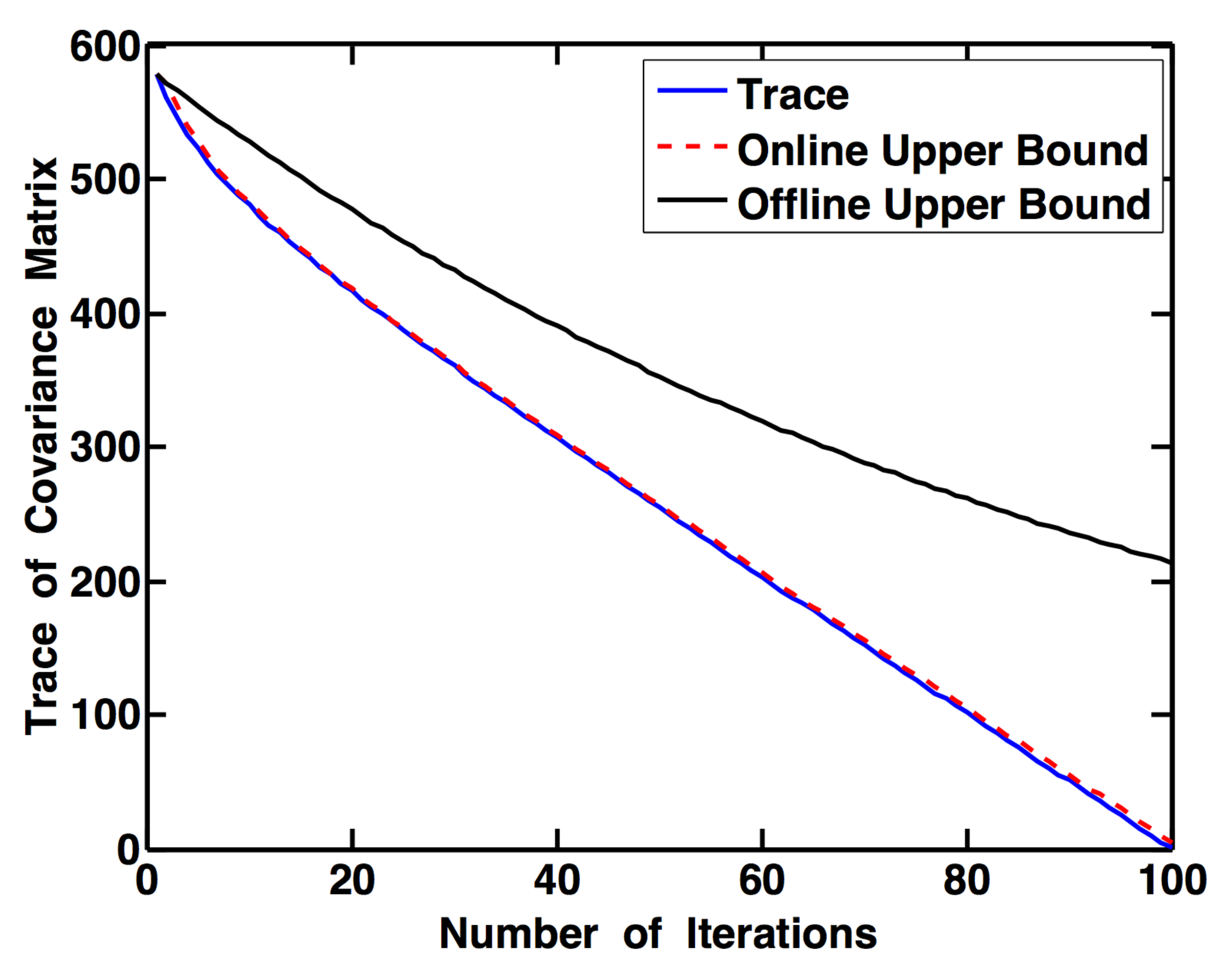}
\end{center}
\caption{One-step ahead prediction for the trace of the covariance matrix: the offline bound corresponds to applying (\ref{trace_bound}) iteratively $k$ times, and the online bound corresponds to predicting ${\rm tr}(\Sigma_k)$ using ${\rm tr}(\Sigma_{k-1})$. Here $n=100$, $p=0.95$, $\varepsilon=0.1$, $\Sigma=dd^{\intercal}+5I_n$ where $d=[1,\cdots,1]^{\intercal}$. 
}
\label{fig:bound}
\end{figure}

The above theorem requires the number of iterations to be on the order of $\ln(1/\varepsilon)$ to reach precision $\varepsilon$ (recall $\chin = \varepsilon^2/\chi_n^2(p)$), as expected. It also suggest a method to allocate power: set $\beta_k$ to be proportional to $\sigma^2/\max_{t}\Sigma_{tt}^{(k-1)}$: this captures the inter-dependence of the signal entries as these dependence will be affect the diagonal entries of the updated covariance matrix. 

\section{Numerical examples}\label{sec:num_egs}

In the following, we have three sets of numerical examples to demonstrate the performance of Info-Greedy Sensing when there is mismatch in the signal covariance matrix, when the signal is sampled from Gaussian, and from GMM models, respectively.

\subsection{Sensing Gaussian with mismatched covariance matrix}

When the assumed covariance matrix for the signal $x$ is equal to its true covariance matrix, Info-Greedy Sensing is identical to the batch method \cite{CarsonChenRodrigues2012} (the batch method measures using the largest eigenvectors of the signal covariance matrix). However, when there is a mismatch between the two, Info-Greedy Sensing outperforms the batch method due to its adaptivity, as shown by the example demonstrated in Fig. \ref{Fig:mismatch} (with $K = 20$). Further performance improvement can be achieved by updating the covariance matrix using estimated signal sequentially such as described in (\ref{seq_update_cov}). Info-Greedy Sensing also outperforms the sensing algorithm where $a_i$ are chosen to be random Gaussian vectors with the same power allocation, as it uses prior knowledge (albeit being imprecise) about the signal distribution. 

Fig. \ref{Fig:full_unit} demonstrates an effect that when there is a mismatch in the assumed covariance matrix, better performance can be achieved if we make many lower power measurements than making one full power measurement because we update the assumed covariance matrix in between.

\begin{figure}[h!]
\begin{center}
\begin{tabular}{c}
\includegraphics[width = 0.7\linewidth]{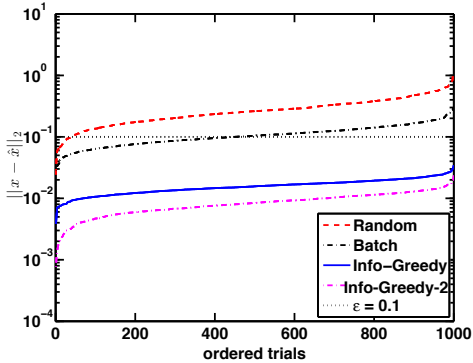}\\
\end{tabular}
\end{center}
\caption{Sensing a Gaussian signal of dimension $n= 100$, when there is mismatch between the assume covariance matrix and the true covariance matrix: $\widehat{\Sigma} \propto \Sigma + RR^\intercal$, where $R\in \mathbb{R}^{n\times 3}$  and each entry of $R_{ij} \sim \mathcal{N}(0, 1)$. We repeat 1000 Monte Carlo trials and for each trial we use $K = 20$ measurements. The Info-Greedy-2 method corresponds to (\ref{seq_update_cov}), where we update the assumed covariance matrix sequentially each time we recover a signal and $\alpha = 0.5$. }
\label{Fig:mismatch}
\end{figure}

\begin{figure}
\begin{center}
\includegraphics[width = 0.7\linewidth]{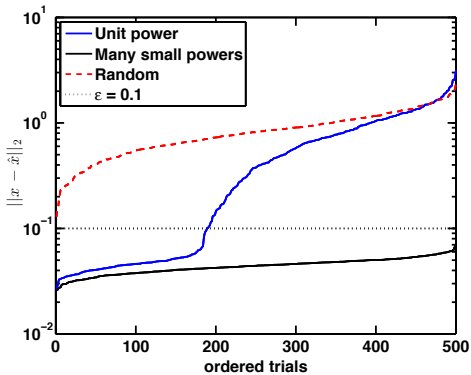}
\end{center}
\caption{Comparison of sensing a Gaussian signal with dimension $n = 100$ using unit power measurements along the eigenvector direction, versus splitting each unit-power measurement into 5 smaller ones, each with amplitude $\sqrt{1/5}$, and we update the covariance matrix in between. The mismatched covariance matrix is $\widehat{\Sigma} \propto \Sigma + rr^\intercal$, where $r\in \mathbb{R}^{n\times 5}$ and each entry of $r$ is i.i.d. $\mathcal{N}(0, 1)$, and $\widehat{\Sigma}$ is normalized to have unit spectral norm.
}
\label{Fig:full_unit}
\end{figure}

\subsection{One-sparse measurements}

In this example, we sense a GMM signal with a one-sparse measurement. Assume there are $C = 3$ components and we know the signal covariance matrix exactly. We consider two cases of generating the covariance matrix for each signal: when the low rank covariance matrices for each component are generated completely at random, and when it has certain structure. Fig. \ref{fig:GMM-one-sparse-recon} shows the reconstruction error $\|\hat{x} - x\|$, using $K = 40$ one-sparse measurements for GMM signals. Note that Info-Greedy Sensing (Algorithm \ref{alg:one-sparse}) with unit power $\beta_j = 1$ can significantly outperform the random approach with unit power (which corresponds to randomly selecting coordinates of the signal to measure). Fig. \ref{fig:one-sparse-GMM} also compares the mis-classification rate of Info-Greedy Sensing with one-sparse measurements to that with using a full signal vector $x$ for classification. Note that, interestingly, using $K = 50$ one-sparse measurements we can obtain a performance very similar to the ideal case,  which can be explained since we exploit the correlation structure of the signal. 



\begin{figure}[h!]
\begin{center}
\begin{tabular}{c}
\includegraphics[width = 0.7\linewidth]{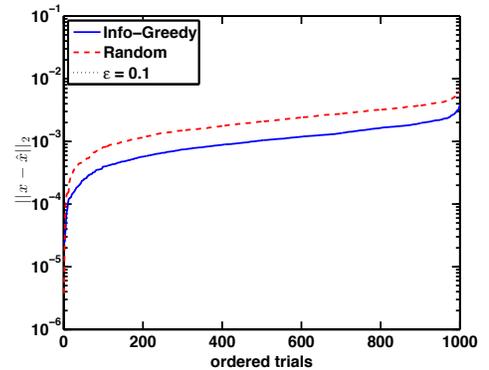}\\
(a)  $\Sigma_c \propto RR^{\intercal}$, $R\in \mathbb{R}^{n\times 3}$, $R_{ij} \sim \mathcal{N}(0, 1)$\\
\includegraphics[width = 0.7\linewidth]{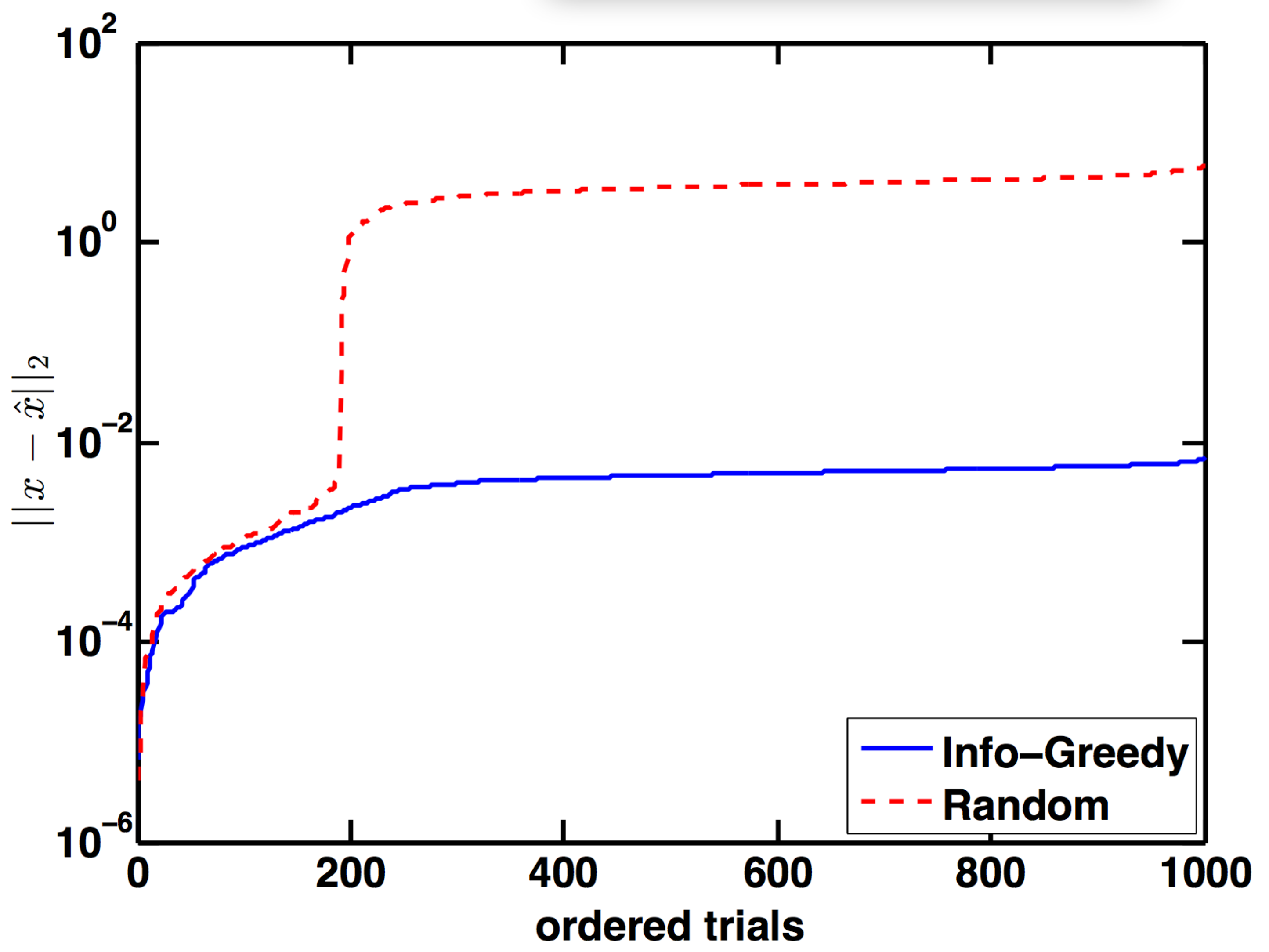}\\
(b) $\Sigma_c \propto (11^\intercal + 20 \alpha^2 \cdot\mbox{diag}\{n,n-1,\cdots,1\})$, $\alpha \sim \mathcal{N}(0, 1)$
\end{tabular}
\end{center}
\caption{Sensing a low-rank GMM signal of dimension $n=100$ using $K=40$ measurements with $\sigma = 0.001$, when the covariance matrices are generated (a) completely randomly, or (b) having certain structure. The covariance matrices $\Sigma_c$ are normalized so that their spectral norms are 1.  }
\label{fig:GMM-one-sparse-recon}
\end{figure}


\begin{figure}
\begin{center}
\includegraphics[width = 0.7\linewidth]{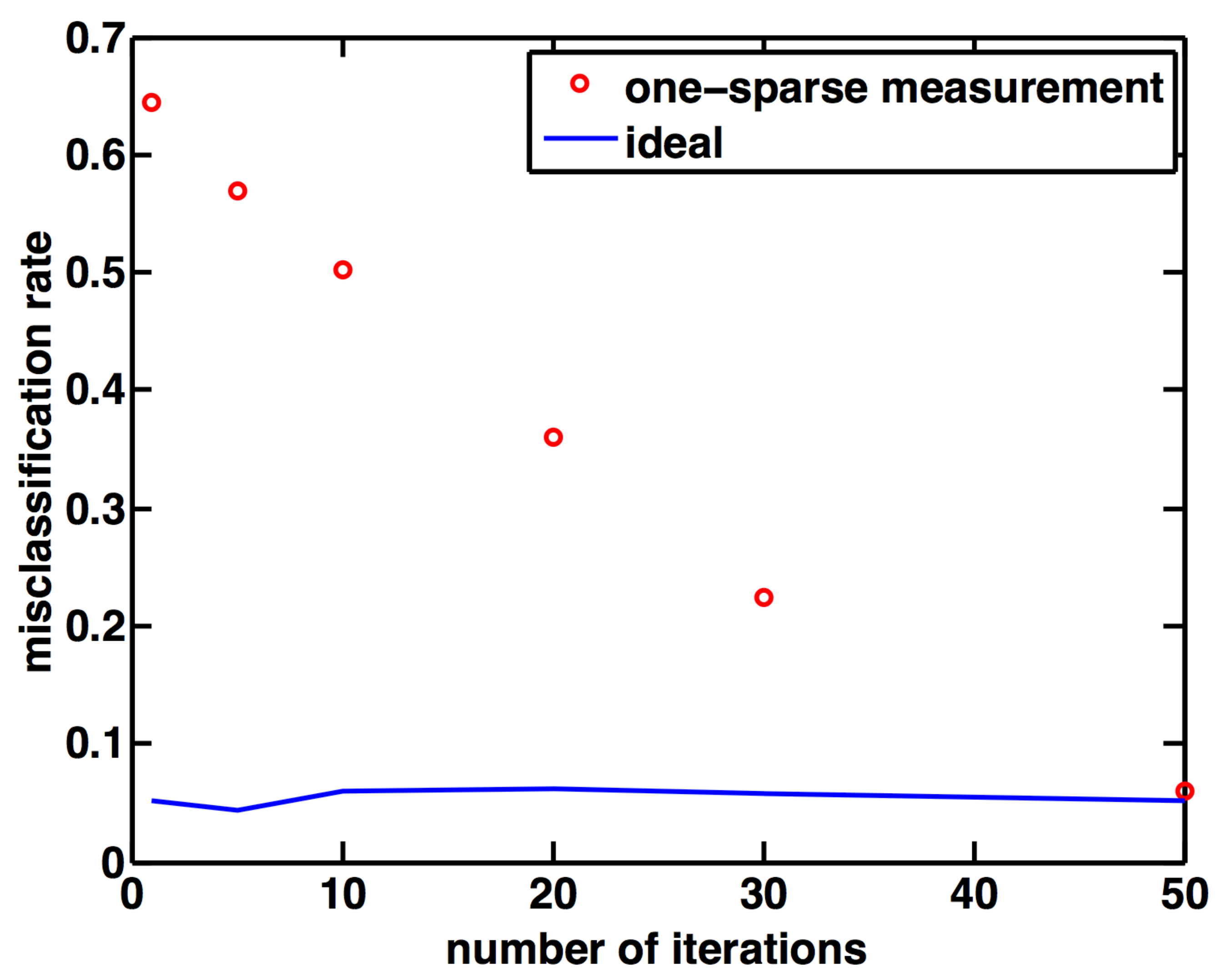}
\end{center}
\caption{Classifying a signal of dimension $n = 100$ generated from a GMM model with covariance matrix generated according to $\Sigma \propto RR^{\intercal}$ and the true distribution is $\pi=(0.5, 0.3, 0.2)$. We assume a uniform initial distribution $(1/3, 1/3, 1/3)$. Misclassification rate versus the number of measurements $K$. Ideal case corresponds to where we observe $x$ and run a quadratic discriminate analysis using the full vector $x$ (i.e. rather than just observing a noisy version of an entry of $x$ each time). }
\label{fig:one-sparse-GMM}
\end{figure}

\subsection{Real data}
\subsubsection{Sensing of a video stream using Gaussian model} 

In this example, we use a video from the Solar Data Observatory. The frame is of size $232\times 292$ pixels. We use the first 50 frames to form a sample covariance matrix $\widehat{\Sigma}$, and use it to perform Info-Greedy Sensing on the rest of the frames. We take $K=90$ measurements. As demonstrated in Fig. \ref{solar-flare}, Info-Greedy Sensing performs much better in that it acquires more information such that the recovered image has much richer details.
\begin{figure}[h!]
\begin{center}
\begin{tabular}{cc}
\includegraphics[width = 0.45\linewidth]{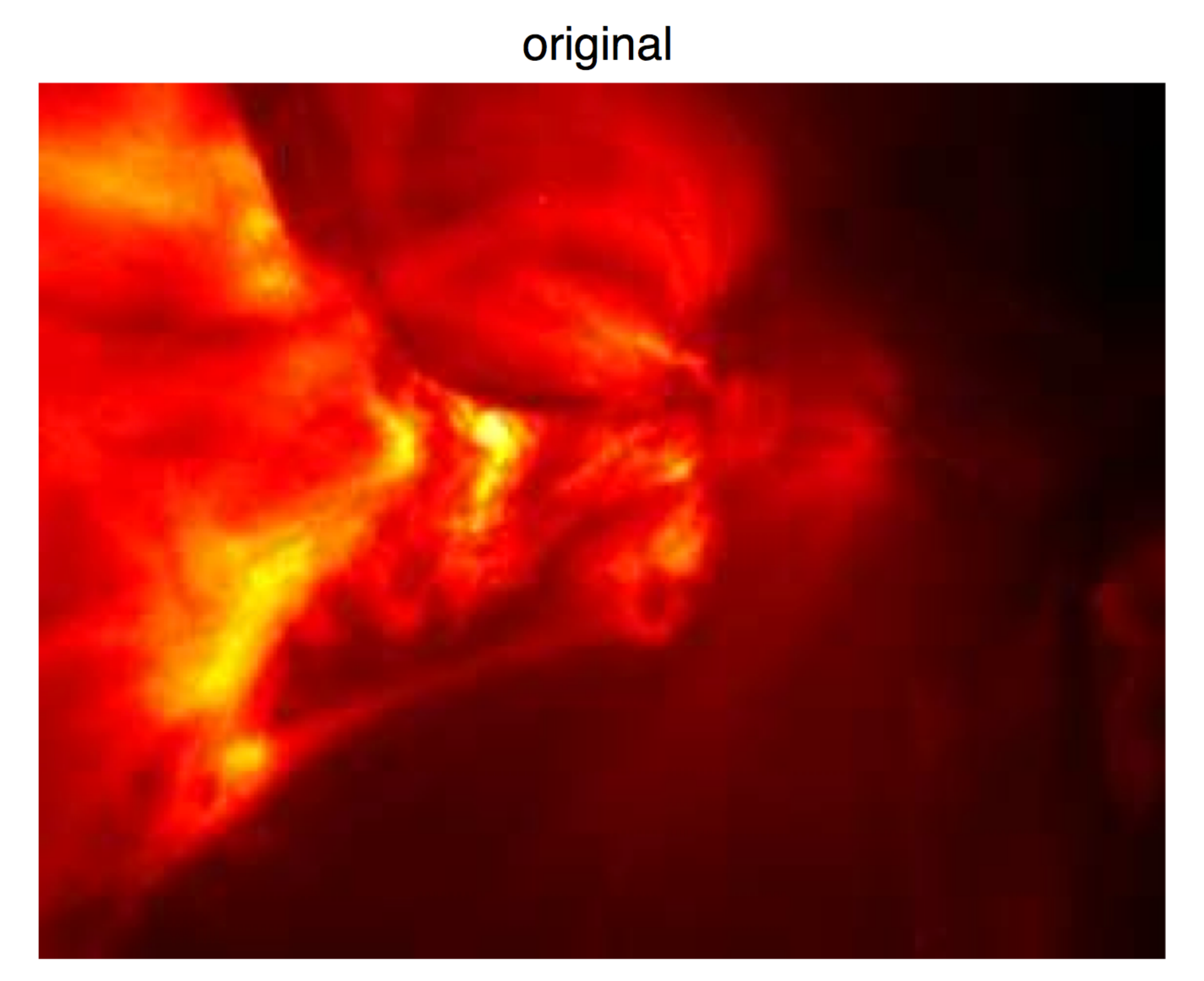}  &
\includegraphics[width = 0.45\linewidth]{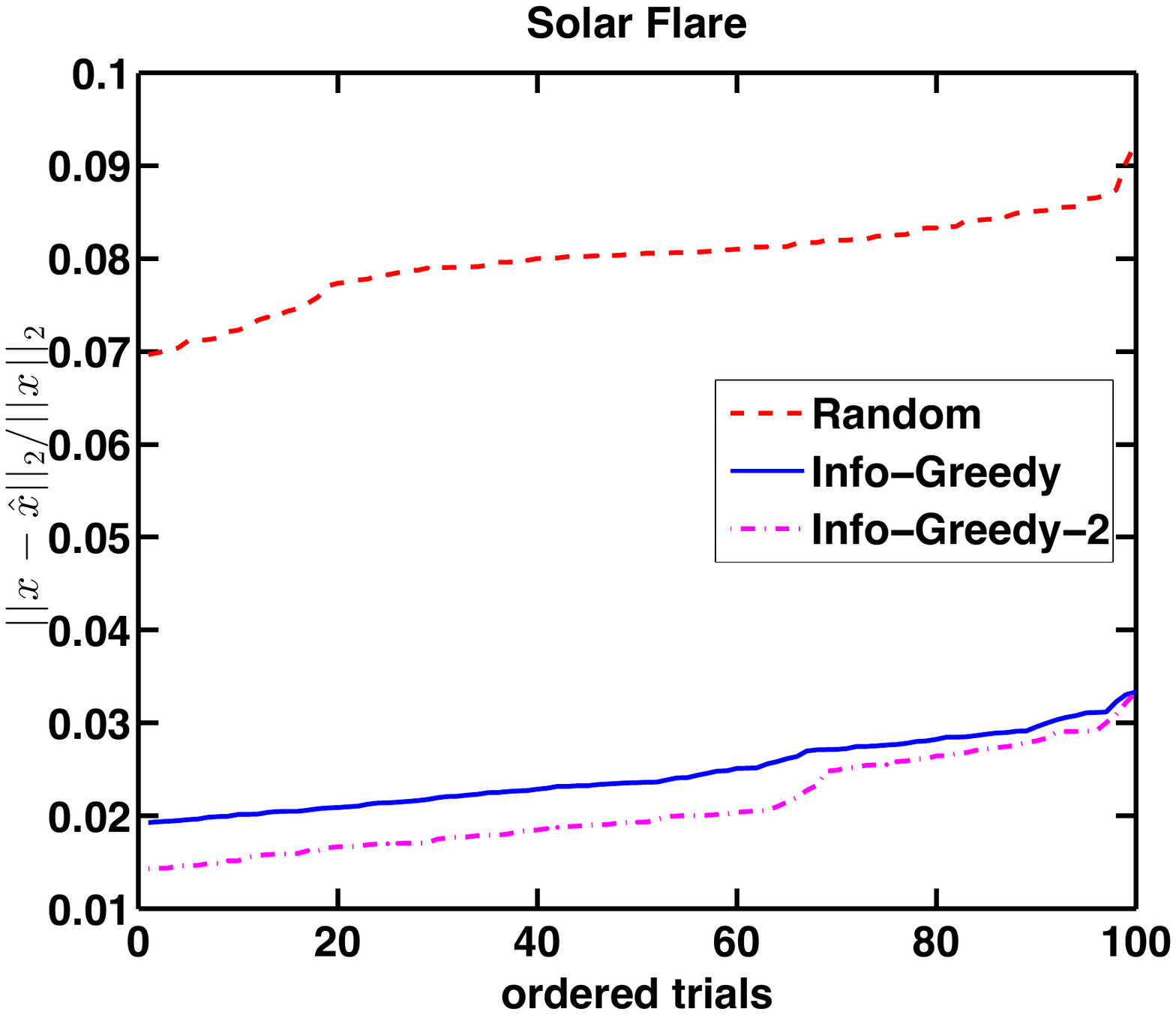}\\
(a) & (b) \\
\includegraphics[width = 0.45\linewidth]{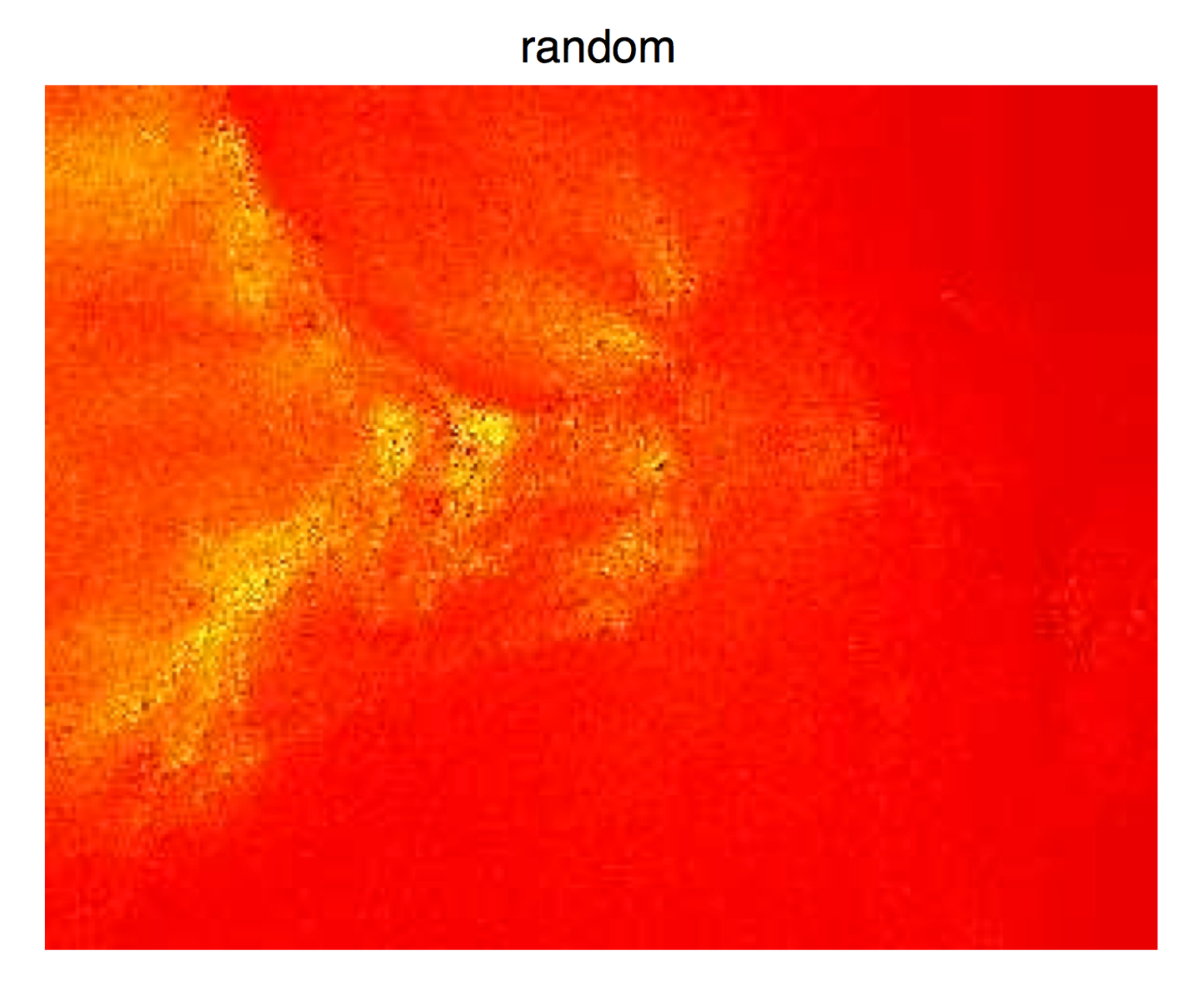}
&
\includegraphics[width = 0.45\linewidth]{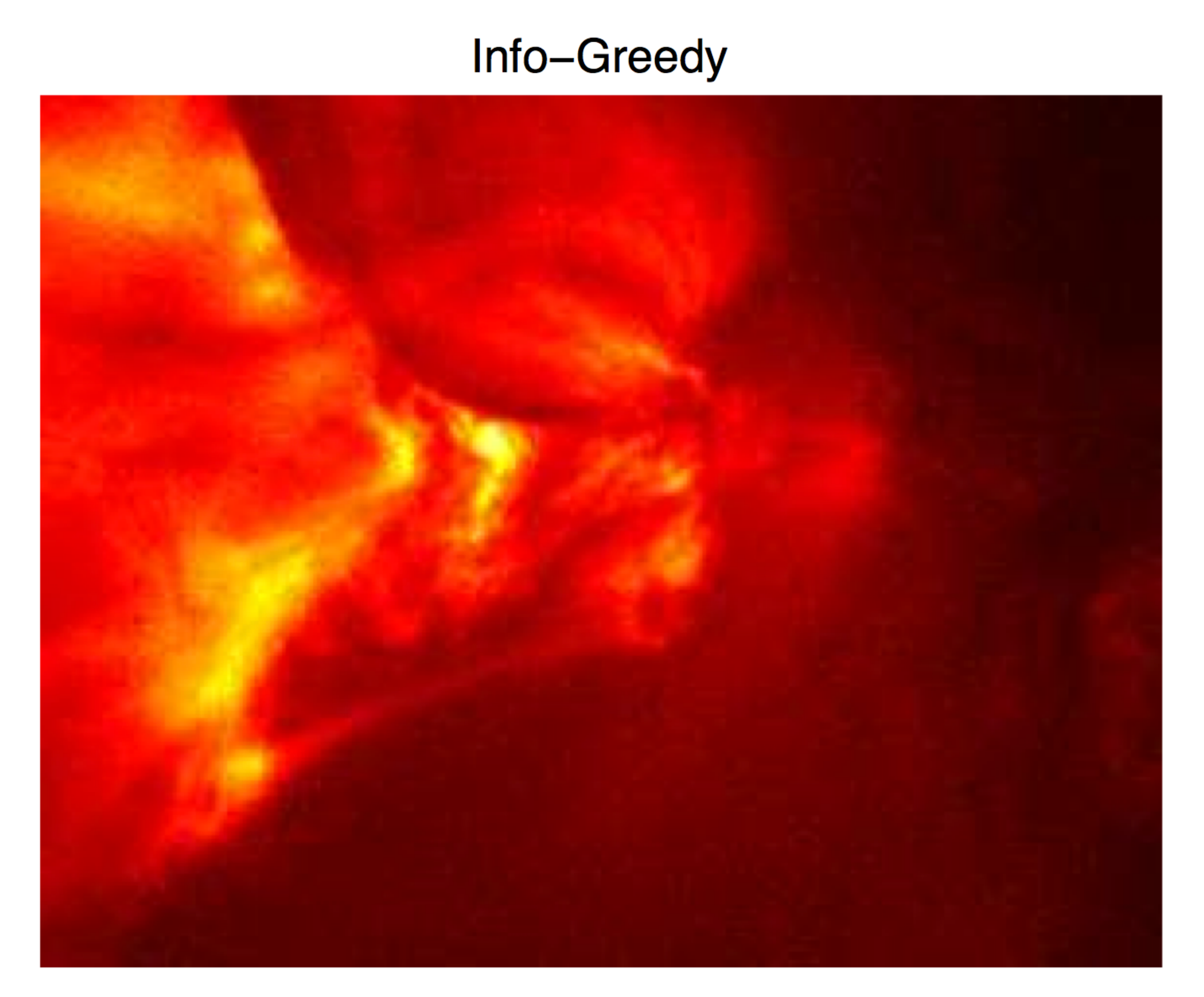}
\\
(c) & (d)
\end{tabular}
\end{center}
\caption{Recovery of solar flare images of size 224 by 288 with $K = 90$ measurements and no sensing noise. We used the first 50 frames to estimate the mean and covariance matrix of a single Gaussian. (a) original image for 300th frame; (b) ordered relative recovery error of the 200th to the 300th frames; (c) recovered the 300th frame using random measurement; (d) recovered the 300th frame using Info-Greedy Sensing.}
\label{solar-flare}
\end{figure}

\subsubsection{Sensing of a high-resolution image using GMM}\label{GT_img}
 We consider a scheme for sensing a high-resolution image that exploits the fact that the patches of the image can be approximated using a Gaussian mixture model, as demonstrated in Fig. \ref{high_rel_gt}. 
%
We break the image into 8 by 8 patches, which result in 89250 patches. We randomly select 500 patches (0.56\% of the total pixels) to estimate a GMM model with $C = 10$ components, and then based on the estimated GMM initialize Info-Greedy Sensing with $K = 5$ measurements and sense the rest of the patches. This means we can use a compressive imaging system to capture 5 low resolution images of size 238-by-275 
(this corresponds to compressing the data into $8.32\%$ of its original dimensionality). With such a small number of measurements, the recovered image from Info-Greedy Sensing measurements has superior quality compared with those with random masks. 


\section{Conclusions and discussions}

In this paper, we have explored the value of information and how to use such information in sequential compressive sensing, by examining the Info-Greedy Sensing algorithms when the signal covariance matrix is not known exactly. We quantify the algorithm performances in the presence of estimation errors and when only one-sparse measurements are allowed. 

Our results for Gaussian and GMM signals are quite general in the following sense. 
In high-dimensional problems, a commonly used low-dimensional signal model
for $x$ is to assume the signal lies in a
 subspace plus Gaussian noise, which corresponds to the case where the signal is Gaussian with a low-rank covariance matrix; GMM is also commonly used (e.g., in image analysis and video processing) as it models signals lying in a union of multiple subspaces plus Gaussian noise. 
In fact,
parameterizing via low-rank GMMs is a popular way to approximate
complex densities for high-dimensional data. Hence, we may be able to
couple the results for Info-Greedy Sensing of GMM with the
recently developed methods of scalable multi-scale density estimation
based on empirical Bayes \cite{WangCanale2014} to create powerful
tools for information guided sensing for a general signal model. We
may also be able to obtain performance guarantees using multiplicative weight update techniques
together with the error bounds in \cite{WangCanale2014}.

%

%

\bibliographystyle{ieeetr}
\bibliography{bib}


\appendices

\section{Covariance sketching}\label{app:cov_sketch}

Consider the following setup for covariance sketching. Suppose we are able to form measurement in the form of $y = \transpose a x + w$ like we have in the Info-Greedy Sensing algorithm. 
Suppose there are $N$ copies of Gaussian signal we would like to sketch: $\tilde{x}_1,\ldots, \tilde{x}_N$ that are i.i.d. sampled from $\mathcal{N}(0, \Sigma)$, and we sketch using $M$ random vectors: $b_1, \ldots, b_M$. Then for each fixed sketching vector $b_i$, and fixed copy of the signal $\tilde{x}_j$, we acquire $L$ noisy realizations of the projection result $y_{ijl}$ via
\[y_{ijl}=\transpose b_i \tilde{x}_j +w_{ijl}, \quad l = 1, \ldots, L.\]
We choose the random sampling vectors $b_i$ as i.i.d. Gaussian with zero mean and covariance matrix equal to an identity matrix. Then we average $y_{ijl}$ over all realizations $l = 1, \ldots, L$ to form the $i$th sketch $y_{ij}$ for a single copy $\tilde{x}_j$: 
\[y_{ij}=\transpose b_i\tilde{x}_j +\underbrace{\frac{1}{L}\sum_{l=1}^L w_{ijl}}_{w_{ij}}.\]
The average is introduced to suppress measurement noise, which can be viewed as a generalization of sketching using just one sample. 
Denote $w_{ij}=\frac{1}{L}\sum_{l=1}^L w_{ijl}$, which is distributed as $\mathcal N(0, \sigma^2/L)$. Then we will use the average energy of the sketches as our data $\gamma_i$, $i = 1, \ldots, M$, for covariance recovery:
\[
\gamma_i \triangleq \frac{1}{N}\sum_{j=1}^{N}y_{ij}^2.
\]
Note that $\gamma_i$ can be further expanded as 
\begin{equation}
\gamma_i = {\rm tr}(\widehat{\Sigma}_N b_i \transpose b_i)+\frac{2}{N}\sum_{j=1}^N w_{ij}\transpose b_i \tilde{x}_j +\frac{1}{N}\sum_{j=1}^N w_{ij}^2, \label{meas}
\end{equation}
where \[\widehat{\Sigma}_N=\frac{1}{N}\sum_{j=1}^{N}\tilde{x}_j \tilde{x}^{\intercal}_j,\] is the maximum likelihood estimate of $\Sigma$ (and is also unbiased). We can write (\ref{meas}) in vector matrix notation as follows. 
Let
$\gamma=[\gamma_1,\cdots \gamma_M\transpose]$. Define a linear operator 
 $\mathcal B:\mathbb R^{n\times n}\mapsto \mathbb R^{M}$ such that  $\mathcal [B(X)]_i={\rm tr}(X b_i \transpose b_i)$. Thus, we can write (\ref{meas}) as a linear measurement of the true covariance matrix $\Sigma$  
 \[\gamma=\mathcal{B} (\Sigma)+\eta,\]
where $\eta \in \mathbb{R}^M$ contains all the error terms and corresponds to the noise in our covariance sketching measurements, with the $i$th entry given by $$\eta_i=\transpose b_i(\widehat{\Sigma}_N-\Sigma) b_i+\frac{2}{N}\sum_{j=1}^N w_{ij}\transpose b_i \tilde{x}_j +\frac{1}{N}\sum_{j=1}^N w_{ij}^2.$$
Note that we can further bound the $\ell_1$ norm of the error term as
\begin{align*}
\Vert\eta\Vert_1 =\sum_{i=1}^M\vert\eta_i\vert
\leq \Vert\widehat{\Sigma}_N-\Sigma\Vert b+ 2\sum_{i=1}^M\vert z_i\vert+w,
\end{align*}
where $$b=\sum_{i=1}^M \Vert b_i\Vert^2,\ \mathbb E[b]=Mn,\ {\rm Var}[b] =2Mn,$$
$$w=\frac{1}{N}\sum_{i=1}^M \sum_{j=1}^N w_{ij}^2,\ \mathbb E[w]=M\sigma^2/L,\ {\rm and}\ {\rm Var} [w]=\frac{2M\sigma^4}{NL^2},$$
$$z_i=\frac{1}{N}\sum_{j=1}^N w_{ij}\transpose b_i \tilde{x}_j,\ \mathbb E[z_i]=0\ {\rm and}\ {\rm Var} [z_i]=\frac{\sigma^2 {\rm tr}(\Sigma)} {NL}.$$
We may recover the true covariance matrix from the sketches $\gamma$ using the convex optimization problem (\ref{opt}).

\section{Backgrounds}

\begin{lemma}[Eigenvalue of perturbed matrix \cite{stewart1990matrix}]\label{lemmaeig}
Let $\Sigma$, $\widehat{\Sigma}\in\mathbb{R}^{n\times n}$ be symmetric,with eigenvalues $\lambda_1\geq\cdots\geq \lambda_n$ and $\hat{\lambda}_1\geq\cdots\geq\hat{\lambda}_n$, respectively. Let $E=\widehat{\Sigma}-\Sigma$ have eigenvalues $e_1\geq\cdots\geq e_n$. Then for each $i\in\{1,\cdots, n\}$, the perturbed eigenvalues satisfy 
$\hat{\lambda}_i\in[\lambda_i+e_n, \lambda_i+e_1].$
\end{lemma}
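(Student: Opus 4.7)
The plan is to deduce the two-sided bound from the Courant--Fischer min--max characterization of eigenvalues of symmetric matrices. Writing $\widehat{\Sigma} = \Sigma + E$ with both $\Sigma$ and $E$ symmetric, I will control $x^\intercal \widehat{\Sigma} x$ on any subspace by splitting it into the $\Sigma$-part and the $E$-part, and then observe that the Rayleigh quotient of $E$ is uniformly sandwiched between its extreme eigenvalues.

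Concretely, first I recall Courant--Fischer: for any symmetric $A \in \mathbb{R}^{n \times n}$ with ordered eigenvalues $\mu_1 \geq \cdots \geq \mu_n$,
\[
\mu_i(A) = \max_{\substack{V \subseteq \mathbb{R}^n \\ \dim V = i}} \min_{\substack{x \in V \\ x \neq 0}} \frac{x^\intercal A x}{x^\intercal x}.
\]
Next, since $E$ is symmetric with largest eigenvalue $e_1$ and smallest eigenvalue $e_n$, the Rayleigh quotient satisfies $e_n \leq x^\intercal E x / x^\intercal x \leq e_1$ for every nonzero $x \in \mathbb{R}^n$. Applying this pointwise to the quadratic form decomposition $x^\intercal \widehat{\Sigma} x = x^\intercal \Sigma x + x^\intercal E x$, I get
\[
\frac{x^\intercal \Sigma x}{x^\intercal x} + e_n \;\leq\; \frac{x^\intercal \widehat{\Sigma} x}{x^\intercal x} \;\leq\; \frac{x^\intercal \Sigma x}{x^\intercal x} + e_1.
\]

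The final step is to feed these pointwise bounds into the min--max formula for $\hat{\lambda}_i$. Taking the inner minimum over a fixed $i$-dimensional subspace $V$ and then the outer maximum over all such $V$ preserves both inequalities, since the constants $e_1$ and $e_n$ pass through unchanged. This yields simultaneously $\hat{\lambda}_i \leq \lambda_i + e_1$ and $\hat{\lambda}_i \geq \lambda_i + e_n$, which is exactly the stated interval $[\lambda_i + e_n, \lambda_i + e_1]$.

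There is really no hard obstacle here: the result is Weyl's classical eigenvalue perturbation inequality, and the only care needed is to make sure the min and max operations respect the inequalities, which they do because we are adding a constant uniformly over all unit vectors. Alternatively, one could give a one-line proof from Weyl's interlacing-style bound $\lambda_i(A+B) \leq \lambda_j(A) + \lambda_{i-j+1}(B)$ by specializing $j = i$ (upper bound) and similarly for the lower bound, but the Courant--Fischer argument above is self-contained and arguably cleaner for the paper's purposes.
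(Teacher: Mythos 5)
Your proof is correct: the Courant--Fischer argument with the uniform Rayleigh-quotient bounds $e_n \leq x^\intercal E x / x^\intercal x \leq e_1$ is the standard derivation of Weyl's perturbation inequality, and the step where the constants pass through the inner $\min$ and outer $\max$ is sound. Note that the paper offers no proof of this lemma at all --- it is stated as a background fact cited from Stewart and Sun --- so there is no in-paper argument to compare against; your self-contained derivation is exactly the classical one that the citation points to.
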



\begin{lemma}[Stability conditions for covariance sketching\cite{ChenChiGoldsmith2014}]
\label{sketching}
Denote $\mathcal A:\mathbb R^{n\times n}\mapsto \mathbb R^{m}$ a linear operator and for $X\in \mathbb R^{n\times n}$, $\mathcal A(X)=\{a_i^T X a_i\}_{i=1}^{m}$. Suppose the measurement is contaminated by noise $\eta\in R^m$, i.e., $Y=\mathcal A(\Sigma)+\eta$ and assume $\Vert \eta \Vert_1\leq \epsilon_1$. Then with probability exceeding $1-\exp (-c_1 m)$ the solution $\widehat{\Sigma}$ to the trace minimization (\ref{opt}) satisfies
\[\Vert \widehat{\Sigma}-\Sigma \Vert_F\leq c_0 \frac{\Vert\Sigma-\Sigma_r\Vert_{*}}{\sqrt{r}}+c_2\frac{\epsilon_1}{m},\]
for all $\Sigma \in R^{n\times n}$, provided that $m> c_0nr$. Here $c_0$, $c_1$, and $c_2$ are absolute constants and $\Sigma_r$ represents the best rank-r approximation of $\Sigma$.
When $\Sigma_r$ is exactly rank-$r$
$$\Vert \widehat{\Sigma}-\Sigma \Vert_F\leq c_0\frac{\epsilon_1}{m}.$$
\end{lemma}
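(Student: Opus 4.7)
The plan is to follow the standard two-ingredient recipe for stability in low-rank matrix recovery from rank-one quadratic measurements: (i) establish a mixed-norm restricted isometry property (an $\ell_2/\ell_1$-RIP, sometimes called the ROP RIP) for the quadratic sensing map $\mathcal{A}(X) = \{a_i^\intercal X a_i\}_{i=1}^m$ on low-rank symmetric matrices, and (ii) combine this isometry with the feasibility of $\Sigma$ and the trace-minimizing optimality of $\widehat{\Sigma}$ through a tube-plus-cone argument to land on the stated Frobenius bound.

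For step (i), I would show that there exist constants $c_\ell, c_u > 0$ so that, with probability at least $1-\exp(-c_1 m)$,
\[
c_\ell \|X\|_F \;\leq\; \frac{1}{m}\|\mathcal{A}(X)\|_1 \;\leq\; c_u \|X\|_F
\]
holds uniformly over symmetric $X$ with $\operatorname{rk}(X) \leq 2r$, provided $m \geq c_0 n r$. The upper bound follows from the fact that $a_i^\intercal X a_i$ is a signed weighted chi-square whose weights are the eigenvalues of $X$; a Bernstein/Hanson--Wright concentration pointwise controls $\|\mathcal{A}(X)\|_1/m$ by $\|X\|_F$. The lower bound is the genuinely hard piece. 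I would express $\frac{1}{m}\|\mathcal{A}(X)\|_1 - \mathbb{E}|a^\intercal X a|$ as an empirical process indexed by the compact set of rank-$2r$ Frobenius-unit matrices, and use a small-ball / Mendelson-style bound (or, equivalently, an $\varepsilon$-net whose covering number is $\exp(C n r)$) to obtain a uniform lower bound strictly away from zero. This is the restricted isometry lemma that underlies the cited Chen--Chi--Goldsmith analysis, adapted from the Cai--Zhang ROP framework.

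For step (ii), let $H = \widehat{\Sigma} - \Sigma$. Since $\|\eta\|_1 \leq \epsilon_1$ and $\tau$ is chosen $\geq \epsilon_1$, $\Sigma$ is feasible, so the triangle inequality in $\ell_1$ yields the tube bound $\|\mathcal{A}(H)\|_1 \leq 2\epsilon_1$. Optimality $\operatorname{tr}(\widehat{\Sigma}) \leq \operatorname{tr}(\Sigma)$, paired with $\widehat{\Sigma} \succeq 0$ and the identity $\operatorname{tr}(X) = \|X\|_*$ on the PSD cone, yields the standard cone condition
\[
\|H_{T^\perp}\|_* \;\leq\; \|H_T\|_* + 2\|\Sigma - \Sigma_r\|_*,
\]
where $T$ is the tangent space at $\Sigma_r$ and $H_T$ has rank at most $2r$. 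A Cand\`es--Plan-style dyadic slicing of $H_{T^\perp}$ into rank-$r$ blocks with decaying singular values bounds the Frobenius norm of the tail by $\|H_T\|_F + \|\Sigma - \Sigma_r\|_*/\sqrt{r}$. Applying the lower RIP to the low-rank part $H_T$ (and the upper RIP to the tail blocks) and then plugging in $\|\mathcal{A}(H)\|_1 \leq 2\epsilon_1$ produces
\[
\|H\|_F \;\lesssim\; \frac{\epsilon_1}{m} + \frac{\|\Sigma - \Sigma_r\|_*}{\sqrt{r}},
\]
which is the stated estimate; taking $\Sigma$ exactly rank-$r$ kills the second term.

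The main obstacle is the lower RIP bound of step (i): because $a_i^\intercal X a_i$ is quadratic and can be signed when $X$ is indefinite, the sensing map is not sub-Gaussian in $X$, and standard Hanson--Wright / $\varepsilon$-net arguments do not immediately deliver a one-sided lower isometry uniformly over the rank-$2r$ sphere. Overcoming this is precisely what forces the small-ball method (or moment-comparison techniques) employed in the ROP / Chen--Chi--Goldsmith analysis, and is the only place where the exponential-in-$m$ failure probability and the $m \gtrsim n r$ sample requirement enter; the convex-analytic step (ii) is then essentially mechanical.
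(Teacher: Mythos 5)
The paper does not prove this lemma: it is imported verbatim as a background result from the cited reference \cite{ChenChiGoldsmith2014} (it sits in the ``Backgrounds'' appendix alongside the Weyl perturbation bound and the Wishart tail bound, all stated without proof). So there is no in-paper argument to compare yours against; the relevant comparison is with the proof in the cited literature, and your outline reproduces its architecture faithfully. The two-ingredient structure --- a restricted $\ell_2/\ell_1$ isometry for the rank-one quadratic map $X \mapsto \{a_i^\intercal X a_i\}$ over rank-$2r$ symmetric matrices, followed by a tube-plus-cone argument driven by trace minimization over the PSD cone --- is exactly the Cai--Zhang ROP / Chen--Chi--Goldsmith route, and your step (ii) is correct in its details: feasibility of $\Sigma$ (requiring $\tau \geq \epsilon_1$, a hypothesis the lemma statement leaves implicit) gives $\|\mathcal{A}(H)\|_1 \leq 2\epsilon_1$, the identity $\operatorname{tr} = \|\cdot\|_*$ on the PSD cone gives the cone condition $\|H_{T^\perp}\|_* \leq \|H_T\|_* + 2\|\Sigma - \Sigma_r\|_*$, and the dyadic slicing converts this into the stated Frobenius bound with the $\epsilon_1/m$ scaling coming from the $\frac{1}{m}\|\mathcal{A}(\cdot)\|_1$ normalization of the RIP.

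The one place where your proposal is a roadmap rather than a proof is the one you yourself flag: the uniform lower bound $c_\ell \|X\|_F \leq \frac{1}{m}\|\mathcal{A}(X)\|_1$ over the rank-$2r$ Frobenius sphere. This cannot be dispatched by a routine net-plus-concentration argument, because $a^\intercal X a$ is a heavy-tailed (fourth-moment) quantity and, for indefinite $X$, $\mathbb{E}\,|a^\intercal X a|$ must itself be bounded below by a constant multiple of $\|X\|_F$ before any empirical-process machinery applies; establishing that expectation lower bound (via a Paley--Zygmund or small-ball comparison between the first and second moments of $a^\intercal X a$) is the genuinely technical content of the cited lemma, and it is also where the $m \gtrsim nr$ requirement and the $1-\exp(-c_1 m)$ probability are actually earned. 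Your proposal names the right tools (small-ball / Mendelson lower bounds, covering numbers of order $\exp(Cnr)$) but does not execute them, so as written it establishes the lemma only modulo that isometry result --- which is acceptable here, since the paper itself treats the entire lemma as a citation.
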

\begin{lemma}[Concentration of measure for Wishart distribution \cite{zhu2012short}]\label{zhu2012}
\label{Tail}
If $X\in \mathbb R^{n\times n}\sim \mathcal W_n(N,\Sigma)$, then for $t>0$,
$$P\{\Vert\frac{1}{N}X-\Sigma\Vert\geq (\sqrt{\frac{2t(\theta+1)}{N}}+\frac{2t\theta}{N})\Vert\Sigma \Vert\}\leq 2n\exp(-t),$$
where $\theta=\rm tr(\Sigma)/\Vert \Sigma \Vert.$
\end{lemma}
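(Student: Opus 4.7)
The plan is to exploit the explicit Gaussian structure of the Wishart law by writing $X = \sum_{i=1}^N Z_i Z_i^\intercal$ with $Z_i \sim \mathcal{N}(0,\Sigma)$ i.i.d., so that
\[
  \tfrac{1}{N}X - \Sigma \;=\; \tfrac{1}{N}\sum_{i=1}^N \bigl(Z_i Z_i^\intercal - \Sigma\bigr)
\]
is a sum of independent, centered, symmetric random matrices. The spectral norm of this sum can then be controlled by a matrix Bernstein inequality for sub-exponential summands (e.g.\ the non-commutative Bernstein inequality via a matrix Laplace transform), which is what produces the characteristic dimensional prefactor $2n$ in the final tail bound.

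First I would compute the matrix variance proxy. Using Isserlis' theorem for Gaussian fourth moments, $\mathbb{E}[\|Z\|^2 Z Z^\intercal] = {\rm tr}(\Sigma)\Sigma + 2\Sigma^2$ for $Z\sim\mathcal{N}(0,\Sigma)$, and therefore
\[
  \mathbb{E}\bigl[(Z_i Z_i^\intercal - \Sigma)^2\bigr] \;=\; {\rm tr}(\Sigma)\Sigma + \Sigma^2,
\]
whose operator norm is at most $({\rm tr}(\Sigma)+\|\Sigma\|)\|\Sigma\| = (\theta+1)\|\Sigma\|^2$. Summing the $N$ i.i.d.\ terms and rescaling by $1/N^2$ yields the variance proxy $v^2 = (\theta+1)\|\Sigma\|^2/N$.

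Next I would handle the sub-exponential tail parameter. For any unit vector $u$, $u^\intercal (Z_i Z_i^\intercal - \Sigma) u/(u^\intercal\Sigma u)$ is a centered $\chi^2_1$ variable; more globally, $\|Z_iZ_i^\intercal\| = \|Z_i\|^2$ has $\psi_1$-Orlicz norm of order ${\rm tr}(\Sigma) \asymp \theta\|\Sigma\|$. Plugging $v^2$ together with the scale parameter $\rho = \theta\|\Sigma\|/N$ into matrix Bernstein gives
\[
  P\bigl(\bigl\|\tfrac{1}{N}X - \Sigma\bigr\| \ge \epsilon\bigr) \;\le\; 2n\exp\!\left(-\frac{\epsilon^2/2}{v^2 + \rho\epsilon}\right),
\]
and inverting the exponent at level $t$, splitting into the Gaussian regime $v^2 \gtrsim \rho\epsilon$ and the sub-exponential regime, recovers the claimed deviation $\epsilon = \|\Sigma\|\bigl(\sqrt{2t(\theta+1)/N} + 2t\theta/N\bigr)$.

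The hard part will be obtaining the clean dimensional factor $2n$ rather than the $9^n$ one would incur from the most naive approach (an $\epsilon$-net on the unit sphere combined with Laurent--Massart $\chi^2$ deviations along each fixed direction). Getting the $n$-factor forces the use of a genuine matrix-valued Laplace transform argument in which the dimension only enters through $\operatorname{tr}(e^M) \le n\,\|e^M\|$ applied to the symmetric exponentials of partial sums, so one must verify the required matrix moment-generating-function bounds for the sub-exponential summands $Z_iZ_i^\intercal - \Sigma$. An essentially parallel route would use Weyl/Lidskii eigenvalue perturbation plus $\chi^2$ concentration along the eigenbasis of $\Sigma$; that route also yields an $n$-prefactor but must contend with the fact that the eigenvectors of $\tfrac{1}{N}X$ are not aligned with those of $\Sigma$, so the relevant quadratic forms are only approximately $\chi^2$-distributed and additional couplings are needed.
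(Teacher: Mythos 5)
The paper does not prove this lemma: it is imported verbatim from the cited reference \cite{zhu2012short} as a background fact, so there is no in-paper argument to compare against. Judged on its own, your roadmap is the right one and, as far as I can tell, coincides with the strategy of the cited source: write $\tfrac{1}{N}X-\Sigma$ as an average of i.i.d.\ centered rank-one perturbations $Z_iZ_i^\intercal-\Sigma$ and run a matrix Laplace-transform (Bernstein) argument. Your variance computation is correct ($\mathbb{E}[(ZZ^\intercal-\Sigma)^2]=\mathrm{tr}(\Sigma)\Sigma+\Sigma^2$, hence $v^2=(\theta+1)\|\Sigma\|^2/N$), and your inversion of the exponent with scale $\rho=\theta\|\Sigma\|/N$ reproduces exactly the deviation $\|\Sigma\|\bigl(\sqrt{2t(\theta+1)/N}+2t\theta/N\bigr)$, so the skeleton is sound.

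The genuine gap is the step you gesture at with ``$\psi_1$-Orlicz norm of order $\mathrm{tr}(\Sigma)$.'' The summands are unbounded, so the bounded-range matrix Bernstein inequality does not apply, and the Orlicz-norm versions of matrix Bernstein only deliver the tail up to unspecified absolute constants --- they would give $\sqrt{Ct(\theta+1)/N}+C't\theta/N$, not the clean constants $2$ and $2$ and the prefactor $2n$ that the lemma asserts. To close this you must verify the explicit Bernstein moment condition $\mathbb{E}\bigl[(ZZ^\intercal-\Sigma)^p\bigr]\preceq \tfrac{p!}{2}R^{p-2}A^2$ for all integers $p\ge 2$ with $R=\mathrm{tr}(\Sigma)$ and $A^2=\mathrm{tr}(\Sigma)\Sigma+\Sigma^2$ (or, equivalently, bound the matrix moment generating function of the rank-one summand directly via $e^{\xi ZZ^\intercal}=I+(e^{\xi\|Z\|^2}-1)ZZ^\intercal/\|Z\|^2$ and the Gaussian law of $\|Z\|^2=\sum_i\lambda_i g_i^2$). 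That computation, reducing to $\mathbb{E}[\|Z\|^{2(p-1)}ZZ^\intercal]$ via Wick's formula, is the actual content of the cited note and is missing from your sketch; without it the result is only established up to constants. Your observation that an $\varepsilon$-net route would incur an exponential prefactor, and that the $n$ must come from $\mathrm{tr}(e^M)\le n\|e^M\|$ inside the Laplace transform, is correct and shows you have identified where the difficulty lives.
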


\clearpage


\section{Proofs}

\subsection{Gaussian signal, with mismatch}

\begin{proof}[Proof of Theorem \ref{lemma_bias}]
Let $\xi_k=\hat{\mu}_k-\mu_k. $
From the update equation for the mean
 \(\hat{\mu}_k = \hat{\mu}_{k-1}
        + \widehat{\Sigma}_{k-1} a_k
        (y_k - \transpose a_k \hat{\mu}_{k-1})/(\hat{a}_k^\intercal \widehat{\Sigma}_{k-1} a_k +\sigma^2),\)  since $a_{k}$ is eigenvector of $\hat{\Sigma}_{k-1}$, 
we have the following recursion:
\begin{equation}
\begin{split}
\xi_k&=(I_n-\frac{\hat{\lambda}_{k}a_ka_k^{\intercal}}{\beta_k\hat{\lambda}_k+\sigma^2})\xi_{k-1} \\&+
\left[-\hat{\lambda}_k    \frac{a_k^{\intercal}E_{k-1}a_k}{(\beta_k\hat{\lambda}_{k}+\sigma^2-a_k^{\intercal}E_{k-1}a_k){(\beta_k\hat{\lambda}_{k}+\sigma^2})}a_k\right.\\
&~~~~~\left.+\frac{E_{k-1}a_k}{\beta_k\hat{\lambda}_{k}+\sigma^2-a_k^{\intercal}E_{k-1}a_k}\right](a_k^\intercal (x-\mu_{k-1}) + w_k).
\end{split}
 \label{recursion_xi}
\end{equation}
From the recursion of $\xi_k$ in (\ref{recursion_xi}), for some vector $C_k$ defined properly, we have that
\begin{equation}
\begin{split}
\mathbb E[\xi_k]=&~(I-\frac{\hat{\lambda}_k \beta_k}{\beta_k\hat{\lambda}_k+\sigma^2}u_k u_k^{\intercal}) \mathbb E[\xi_{k-1}]\\
&+C_k \underbrace{\mathbb E[a_k^{\intercal}(x-\mu_{k-1})+w_k]}_{0}.
\end{split}\label{recursion_xi_2}
\end{equation}
Note that the second term is equal to zero using an argument based on iterated expectation 
\[
\mathbb E[a_k^{\intercal}(x-\mu_{k-1}) + w_k]=a_k^\intercal \mathbb E[\mathbb E[x-\mu_{k-1}|y_1, \ldots, y_k]]=0.
\]
Hence Theorem \ref{lemma_bias} is proved by iteratively apply the recursion (\ref{recursion_xi_2}). When $\hat{\mu}_0 - \mu_0=0$, we have 
$
\mathbb E[\xi_k]=0, k=0,1,\ldots, K.
$
\end{proof}

In the following, Lemma \ref{propdelta} to Lemma \ref{proprank} are used to prove Theorem \ref{entropy}.

\begin{lemma}[Recursion in covariance matrix mismatch.]\label{propdelta}

If $\delta_{k-1}\leq{3\sigma^2}/{4\beta_{k}}$, then $\delta_{k}\leq 4\delta_{k-1}$.

\end{lemma}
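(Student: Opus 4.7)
The plan is to bound $\delta_k$ via the triangle inequality applied to the two rank-one covariance updates: the one that the algorithm performs on $\widehat{\Sigma}_{k-1}$ to produce $\widehat{\Sigma}_k$, and the one that the true posterior undergoes on $\Sigma_{k-1}$ to produce $\Sigma_k$. Both share the measurement $a_k=\sqrt{\beta_k}\hat{u}_k$, where $\hat{u}_k$ is the top unit eigenvector of $\widehat{\Sigma}_{k-1}$. Writing the two rank-one corrections as $\hat{f} = \widehat{\Sigma}_{k-1}a_ka_k^\intercal\widehat{\Sigma}_{k-1}/(a_k^\intercal\widehat{\Sigma}_{k-1}a_k+\sigma^2)$ and $\tilde{f} = \Sigma_{k-1}a_ka_k^\intercal\Sigma_{k-1}/(a_k^\intercal\Sigma_{k-1}a_k+\sigma^2)$, I obtain $E_k = E_{k-1} - (\hat{f}-\tilde{f})$, so $\delta_k \leq \delta_{k-1} + \|\hat{f}-\tilde{f}\|$, and it suffices to show $\|\hat{f}-\tilde{f}\|\leq 3\delta_{k-1}$.

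To bound $\|\hat{f}-\tilde{f}\|$ I will project onto the basis adapted to the top eigenvector of $\widehat{\Sigma}_{k-1}$. Identity \eqref{eq:covariance-eigenvector} already gives $\hat{f} = g(\hat{\lambda}_k)\,\hat{u}_k\hat{u}_k^\intercal$ with $g(t)\triangleq\beta_kt^2/(\beta_kt+\sigma^2)$. For $\tilde{f}$ I will decompose $\Sigma_{k-1}\hat{u}_k = \alpha\hat{u}_k + z$ with $\alpha=\hat{u}_k^\intercal\Sigma_{k-1}\hat{u}_k$ and $z\perp\hat{u}_k$. Combining $\Sigma_{k-1}=\widehat{\Sigma}_{k-1}-E_{k-1}$ with $\widehat{\Sigma}_{k-1}\hat{u}_k=\hat{\lambda}_k\hat{u}_k$ immediately yields $\alpha = \hat{\lambda}_k - \hat{u}_k^\intercal E_{k-1}\hat{u}_k$ and $z = -(I-\hat{u}_k\hat{u}_k^\intercal)E_{k-1}\hat{u}_k$, hence $|\alpha-\hat{\lambda}_k|\leq\delta_{k-1}$ and $\|z\|\leq\delta_{k-1}$. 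Substituting splits the difference into three pieces
\[
\hat{f} - \tilde{f} = [g(\hat{\lambda}_k) - g(\alpha)]\,\hat{u}_k\hat{u}_k^\intercal - \frac{\beta_k\alpha}{\beta_k\alpha+\sigma^2}(\hat{u}_kz^\intercal + z\hat{u}_k^\intercal) - \frac{\beta_k}{\beta_k\alpha+\sigma^2}\,zz^\intercal.
\]

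I will then bound each piece separately in spectral norm. \textbf{(i)} The derivative $g'(t) = 1 - \sigma^4/(\beta_kt+\sigma^2)^2$ lies in $(0,1)$ for $t\geq 0$, so by the mean value theorem $|g(\hat{\lambda}_k)-g(\alpha)|\leq|\hat{\lambda}_k-\alpha|\leq\delta_{k-1}$. \textbf{(ii)} Since $z\perp\hat{u}_k$, in the basis $\{\hat{u}_k,z/\|z\|\}$ the symmetric matrix $\hat{u}_kz^\intercal+z\hat{u}_k^\intercal$ reduces to an off-diagonal $2\times 2$ block with eigenvalues $\pm\|z\|$, so its spectral norm is \emph{exactly} $\|z\|$; together with $\beta_k\alpha/(\beta_k\alpha+\sigma^2)<1$ this piece contributes at most $\|z\|\leq\delta_{k-1}$. \textbf{(iii)} $\|zz^\intercal\|=\|z\|^2\leq\delta_{k-1}^2$ and $\beta_k/(\beta_k\alpha+\sigma^2)\leq\beta_k/\sigma^2$ give a contribution of at most $\beta_k\delta_{k-1}^2/\sigma^2$, which the hypothesis $\delta_{k-1}\leq 3\sigma^2/(4\beta_k)$ reduces to $(3/4)\delta_{k-1}$. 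Adding up, $\|\hat{f}-\tilde{f}\|\leq(1+1+3/4)\delta_{k-1}=11\delta_{k-1}/4$, and therefore $\delta_k\leq 15\delta_{k-1}/4 \leq 4\delta_{k-1}$.

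The main obstacle is piece (ii): the naive bound $\|\hat{u}_kz^\intercal+z\hat{u}_k^\intercal\|\leq 2\|z\|$ would push the constant for $\|\hat{f}-\tilde{f}\|$ above $3\delta_{k-1}$ and break the recursion entirely. The orthogonality $z\perp\hat{u}_k$ is what saves a factor of two in the cross term, and this in turn traces back to the choice of $\hat{u}_k$ as an eigenvector of $\widehat{\Sigma}_{k-1}$. A secondary technical point is proving $g'(t)<1$ via the explicit identity $g'(t) = 1 - \sigma^4/(\beta_kt+\sigma^2)^2$, rather than a cruder Lipschitz estimate on $g$ that would not be tight enough to close the argument.
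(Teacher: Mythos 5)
Your proof is correct, and it reaches the conclusion (in fact the slightly sharper $\delta_k\leq\tfrac{15}{4}\delta_{k-1}$) by a genuinely different route from the paper's. Both arguments share the same skeleton --- write $E_k=E_{k-1}-(\hat f-\tilde f)$ and apply the triangle inequality --- but they handle the correction $\hat f-\tilde f$ differently. The paper expands $\tilde f$ in powers of the matrix $E_{k-1}$ and bounds every term by submultiplicativity, which yields cross terms $\Vert \widehat{A}_k E_{k-1}\Vert+\Vert E_{k-1}\widehat{A}_k\Vert\leq 2\beta_k\delta_{k-1}$ (i.e., it \emph{does} pay the factor of two you were worried about) together with prefactors of the form $\beta_k\hat{\lambda}_k/(\beta_k\hat{\lambda}_k+\sigma^2-\beta_k\delta_{k-1})$; the constant $4$ is then recovered by grouping the leading term, the cross term, and the quadratic term so that their prefactors sum exactly to $3$ over the common denominator $\beta_k\hat{\lambda}_k+\sigma^2/4$. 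You instead decompose the single vector $\Sigma_{k-1}\hat{u}_k=\alpha\hat{u}_k+z$ orthogonally, compute the three pieces of $\hat f-\tilde f$ exactly, and control them via the mean value theorem on the scalar map $g(t)=\beta_k t^2/(\beta_k t+\sigma^2)$ and the exact spectral norm $\Vert\hat{u}_kz^\intercal+z\hat{u}_k^\intercal\Vert=\Vert z\Vert$. Your version buys transparency (each of the three contributions has a clear meaning and a clean bound independent of $\hat{\lambda}_k$, with the denominator only ever bounded below by $\sigma^2$) and a marginally better constant; the paper's version is more mechanical and avoids the scalar calculus, at the cost of the more delicate final regrouping. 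So your remark that the naive $2\Vert z\Vert$ bound would ``break the recursion entirely'' is accurate only for your arrangement of the estimate --- the paper absorbs that factor of two by a different bookkeeping --- but this does not affect the correctness of your argument.
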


\begin{proof}
Let $\widehat{A}_{k}={a}_{k}{a}^{\intercal}_{k}$. Hence, $\Vert \widehat{A}_{k} \Vert=\beta_{k}$. Recall that $a_k$  is the eigenvector of $\widehat{\Sigma}_{k-1}$, using the definition of $E_k = \widehat{\Sigma}_k -  \Sigma_k$, together with the recursions of the covariance matrices 
\begin{align}
\widehat{\Sigma}_k &= \widehat{\Sigma}_{k-1} - \widehat{\Sigma}_{k-1} a_k a_k^\intercal \Sigma_{k-1}/(\hat{\lambda}_k+\sigma^2), \label{recursions1}\\
\Sigma_k &= \Sigma_{k-1} - \Sigma_{k-1} a_k a_k^\intercal \Sigma_{k-1}/(a_k^\intercal\Sigma_{k-1}a_k +\sigma^2), 
\label{recursions}
\end{align} 
we have
\begin{align*}
E_{k}
&=E_{k-1}+\frac{\Sigma_{k-1} {a}_{k} {a}_{k}^{\intercal}\Sigma_{k-1}}{{a}_{k}^{\intercal}\Sigma_{k-1} {a}_{k}+\sigma^2}-\frac{\hat{\lambda}_{k}{a}_{k}{a}_{k}^{\intercal}\widehat{\Sigma}_{k-1}}{\beta_{k}\hat{\lambda}_{k}+\sigma^2}.
\end{align*}
Based on this recursion, using $\delta_k = \|E_k\|$, the triangle inequality, and Cauchy-Schwartz inequality $\|AB\| \leq \|A\|\|B\|$, we have
\begin{align*}
\delta_{k}
&\leq \delta_{k-1} +\frac{\beta_{k}\hat{\lambda}_{k}{a}_{k} E_{k-1}{a}_{k}}{(\beta_{k}\hat{\lambda}_{k}+\sigma^2)
(\beta_{k}\hat{{\lambda}}_{k}+\sigma^2-{a}_{k}^{\intercal} E_{k-1}{a})}\cdot \Vert \widehat{A}_{k}\widehat{\Sigma}_{k-1}\Vert\\
&~~~~+\frac{1}{\beta_{k}\hat{{\lambda}}_{k}+\sigma^2-{a}_{k}^{\intercal} E_{k-1}{a}_{k}}\\
&~~~~~~~~~\cdot[\hat{\lambda}_{k}(\Vert \widehat{A}_{k}E_{k-1}\Vert
+\Vert E_{k-1}\widehat{A}_{k}\Vert)
+\Vert E_{k-1}\widehat{A}_{k}E_{k-1}\Vert]\\
&\leq \delta_{k-1} +\frac{\beta_{k}^2\hat{\lambda}_{k}^2\delta_{k-1}}{(\beta_{k}\hat{\lambda}_{k}+\sigma^2)(\beta_{k}\hat{{\lambda}}_{k}+\sigma^2-\beta_{k}\delta_{k-1})}\\
&~~~~+\frac{\beta_{k}}{\beta_{k}\hat{\lambda}_{k}+\sigma^2-\beta_{k}\delta_{k-1}}[2\hat{\lambda}_{k}\delta_{k-1}+\delta_{k-1}^2]\\
&\leq (1+\frac{3\beta_{k}\hat{\lambda}_{k}}{ \beta_{k}\hat{\lambda}_{k}+\sigma^2-\beta_{k}\delta_{k-1} })\delta_{k-1}\\
&~~~~+\frac{\beta_{k}}{\beta_{k}\hat{\lambda}_{k}+\sigma^2-\beta_{k}\delta_{k-1}}\delta_{k-1}^2.
\end{align*}
Hence, if set $\delta_{k-1} \leq 3\sigma^2/(4\beta_{k})$, i.e., $\delta_{k-1}\beta_k\leq \frac{3}{4}\sigma^2$, 
 the last inequality can be upper bounded by
\[(1+3\cdot\frac{\beta_{k}\hat{\lambda}_{k}}{ \beta_{k}\hat{\lambda}_{k}+\sigma^2/4})\delta_{k-1}
+ 3\cdot\frac{\sigma^2/4}{\beta_k \hat{\lambda}_k + \sigma^2/4} \delta_{k-1}
= 4\delta_{k-1}. \]
Hence, if $\delta_{k-1} \leq 3\sigma^2/(4\beta_{k})$, we have
$\delta_{k} \leq 4\delta_{k-1}$.

\end{proof}

\begin{lemma}[Recursion for trace of the true covariance matrix]\label{proptrace}
If $\delta_{k-1}\leq \hat{\lambda}_{k}$, 
\begin{equation}
{\rm tr}(\Sigma_k)\leq {\rm tr}(\Sigma_{k-1})
-\frac{\beta_{k}\hat{\lambda}_{k}^2}{\beta_{k}\hat{\lambda}_{k}+\sigma^2}+\frac{3\beta_{k}\hat{\lambda}_{k}\delta_{k-1}}{\beta_{k}\hat{\lambda}_{k}+\sigma^2-\beta_{k}\delta_{k-1}}. \label{app_trace_bound}
\end{equation}
\end{lemma}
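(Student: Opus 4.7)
The plan is to start from the Kalman-style update~(\ref{recursions}) for the true posterior covariance and take its trace. Using the cyclic property of the trace one obtains the scalar identity
\[
{\rm tr}(\Sigma_k) = {\rm tr}(\Sigma_{k-1}) - \frac{a_k^\intercal \Sigma_{k-1}^2 a_k}{a_k^\intercal \Sigma_{k-1} a_k + \sigma^2},
\]
so the task reduces to lower bounding this ratio by the idealized reduction $\beta_k\hat{\lambda}_k^2/(\beta_k\hat{\lambda}_k+\sigma^2)$ up to an error of order $\delta_{k-1}$.

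Next I would write $\Sigma_{k-1} = \widehat{\Sigma}_{k-1} - E_{k-1}$ and exploit two facts about $a_k$: it is an eigenvector of $\widehat{\Sigma}_{k-1}$ with eigenvalue $\hat{\lambda}_k$, and $\|a_k\|^2 = \beta_k$. Setting $\alpha_k \triangleq a_k^\intercal E_{k-1} a_k$, this gives
\begin{align*}
a_k^\intercal \Sigma_{k-1} a_k &= \beta_k \hat{\lambda}_k - \alpha_k,\\
a_k^\intercal \Sigma_{k-1}^2 a_k &= \beta_k \hat{\lambda}_k^2 - 2\hat{\lambda}_k \alpha_k + a_k^\intercal E_{k-1}^2 a_k.
\end{align*}
Since $E_{k-1}$ is symmetric with $\|E_{k-1}\| = \delta_{k-1}$, one has $|\alpha_k| \leq \beta_k \delta_{k-1}$ and $0 \leq a_k^\intercal E_{k-1}^2 a_k \leq \beta_k \delta_{k-1}^2$, while the hypothesis $\delta_{k-1} \leq \hat{\lambda}_k$ keeps $\beta_k\hat{\lambda}_k+\sigma^2-\alpha_k > 0$, so every denominator that appears below is safely positive.

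Subtracting $\beta_k \hat{\lambda}_k^2/(\beta_k \hat{\lambda}_k + \sigma^2)$ from the ratio above and combining over a common denominator, a short calculation gives
\[
\frac{\alpha_k \hat{\lambda}_k (\beta_k \hat{\lambda}_k + 2\sigma^2) - (\beta_k \hat{\lambda}_k + \sigma^2)\, a_k^\intercal E_{k-1}^2 a_k}{(\beta_k \hat{\lambda}_k + \sigma^2)(\beta_k \hat{\lambda}_k + \sigma^2 - \alpha_k)}.
\]
For $\alpha_k \leq 0$ the numerator is nonpositive, so the desired inequality is immediate. For $\alpha_k > 0$ the expression is increasing in $\alpha_k$ and decreasing in $a_k^\intercal E_{k-1}^2 a_k$, so it is worst case at $(\alpha_k, a_k^\intercal E_{k-1}^2 a_k) = (\beta_k \delta_{k-1}, 0)$; plugging in and invoking the elementary bound $\beta_k \hat{\lambda}_k + 2\sigma^2 \leq 3(\beta_k \hat{\lambda}_k + \sigma^2)$ delivers exactly the upper bound $3 \beta_k \hat{\lambda}_k \delta_{k-1}/(\beta_k \hat{\lambda}_k + \sigma^2 - \beta_k \delta_{k-1})$ in~(\ref{app_trace_bound}).

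The main obstacle is the monotonicity argument in the last paragraph: the perturbation parameters $\alpha_k$ and $a_k^\intercal E_{k-1}^2 a_k$ are not independent, being coupled through the unknown matrix $E_{k-1}$, so one might worry that an adversarial cancellation produces a tighter configuration than the interior-corner worst case used above. However, $a_k^\intercal E_{k-1}^2 a_k$ appears only in the numerator and only with a negative coefficient, so forcing it to zero is unambiguously worst case, after which the problem collapses to a one-variable maximization in $\alpha_k$ that is handled by inspection. The small constant slack (factor of $3$) in the final bound absorbs the elementary inequality $(\beta_k \hat{\lambda}_k + 2\sigma^2)/(\beta_k \hat{\lambda}_k + \sigma^2) \leq 2$.
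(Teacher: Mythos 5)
Your proof is correct. It takes a mildly different route from the paper's: the paper first derives a matrix recursion for the error $E_k=\widehat{\Sigma}_k-\Sigma_k$, lower-bounds ${\rm tr}(E_k)$ term by term via $|{\rm tr}(\widehat{A}_kE_{k-1})|\leq\beta_k\delta_{k-1}$ and ${\rm tr}(E_{k-1}\widehat{A}_kE_{k-1})\geq 0$, and then combines this with the exact recursion (\ref{tr_hat_recursion}) for ${\rm tr}(\widehat{\Sigma}_k)$; you instead take the trace of (\ref{recursions}) directly and compare the actual reduction $a_k^\intercal\Sigma_{k-1}^2a_k/(a_k^\intercal\Sigma_{k-1}a_k+\sigma^2)$ against the idealized one. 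The underlying scalar quantities are the same in both arguments ($\alpha_k=a_k^\intercal E_{k-1}a_k={\rm tr}(\widehat{A}_kE_{k-1})$ and $a_k^\intercal E_{k-1}^2a_k={\rm tr}(E_{k-1}\widehat{A}_kE_{k-1})$), but your sign-aware worst-case analysis over $(\alpha_k,q)$ is slightly sharper at the intermediate stage: you obtain a numerator proportional to $\beta_k\hat{\lambda}_k+2\sigma^2$ where the paper's triangle-inequality bounding yields $3\beta_k\hat{\lambda}_k+2\sigma^2$, so your factor-of-$3$ relaxation at the end has more slack (a factor of $2$ would suffice in your version). Your handling of the coupling between $\alpha_k$ and $a_k^\intercal E_{k-1}^2a_k$ is also sound: since you only need an upper bound, relaxing to the product rectangle and noting that $q$ enters only with a negative coefficient legitimately reduces the problem to a one-variable monotone maximization in $\alpha_k$. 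Both proofs use the hypothesis $\delta_{k-1}\leq\hat{\lambda}_k$ only to keep the denominator $\beta_k\hat{\lambda}_k+\sigma^2-\beta_k\delta_{k-1}$ positive, as you do.
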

\begin{proof}
Let $\widehat{A}_{k}={a}_{k}{a}^{\intercal}_{k}$. 
Using the definition of $E_k$ and the recursions (\ref{recursions1}) and (\ref{recursions}), the perturbation matrix $E_k$ after $k$ iterations is given by
\begin{equation}
\begin{split}
E_k&=E_{k-1} +\hat{\lambda}_{k}^2\widehat{A}_{k}\cdot\frac{{a}_{k}^{\intercal}E_{k-1}{a}_{k}}{(\beta_{k}\hat{\lambda}_{k}+\sigma^2)(\beta_{k}\hat{\lambda}_{k}+\sigma^2-{a}_{k}^{\intercal}E_{k-1}{a}_{k})}\\
&-\frac{\hat{\lambda}_{k}}{\beta_{k}\hat{\lambda}_{k}+\sigma^2-{a}_{k}^{\intercal}E_{k-1}{a}_{k}}
\cdot(\widehat{A}_{k}E_{k-1}+E_{k-1}\widehat{A}_{k})\\
&+\frac{1}{\beta_{k}\hat{\lambda}_{k}+\sigma^2-{a}_{k}^{\intercal}E_{k-1}{a}_{k}}E_{k-1}\widehat{A}_{k}E_{k-1}.
\end{split}
\label{E_recursion}
\end{equation}
Note that ${\rm rank}(\widehat{A}_{k})=1$, thus ${\rm rank}(\widehat{A}_{k}E_{k-1})\leq 1$, therefore it has at most one nonzero eigenvalue,
\begin{align*}
\vert{\rm tr}(\widehat{A}_{k}E_{k-1})\vert
&=\vert{\rm tr}({E_{k-1}\widehat{A}_{k}})\vert \\
&=\Vert\widehat{A}_{k}E_{k-1}\Vert
\leq\Vert\widehat{A}_{k}\Vert\Vert E_{k-1}\Vert =\beta_{k}\delta_{k-1}.
\end{align*}
Note that $E_{k-1}$ is symmetric and $\hat{A}_{k}$ is positive semi-definite, we have
$
{\rm tr}(E_{k-1}\widehat{A}_{k}E_{k-1})\geq 0.
$
Hence, 
from (\ref{E_recursion}) we have
\begin{align*}
{\rm tr}(E_k)&=\rm{tr}(\widehat{\Sigma}_{k})-\rm{tr}(\Sigma_k)\\
&\geq {\rm tr}(E_{k-1})-\frac{3\beta_{k}\hat{\lambda}_{k}(\beta_{k}\hat{\lambda}_{k}+\frac{2\sigma^2}{3})\delta_{k-1}}{(\beta_{k}\hat{\lambda}_{k}+\sigma^2)(\beta_{k}\hat{\lambda}_{k}+\sigma^2-\beta_{k}\delta_{k-1})}\\
&\geq {\rm tr}(E_{k-1})-\frac{3\beta_k\hat{\lambda}_k\delta_{k-1}}{\beta_k\hat{\lambda}_k+\sigma^2-\beta_k\delta_{k-1}}.
\end{align*}
After rearranging terms we obtain
\[
{\rm tr}(\Sigma_k)\leq {\rm tr}(\Sigma_{k-1})+[{\rm tr}(\widehat{\Sigma}_k)-{\rm tr}(\widehat{\Sigma}_{k-1})]+\frac{3\beta_k\hat{\lambda}_k\delta_{k-1}}{\beta_k\hat{\lambda}_k+\sigma^2-\beta_k\delta_{k-1}}.
\]
Together with the recursion for trace of ${\rm tr}(\widehat{\Sigma}_k)$ in (\ref{tr_hat_recursion}), we have 
\begin{align*}
{\rm tr}(\Sigma_k)\leq {\rm tr}(\Sigma_{k-1})
&-\frac{\beta_{k}\hat{\lambda}_{k}^2}{\beta_{k}\hat{\lambda}_{k}+\sigma^2}+\frac{3\beta_{k}\hat{\lambda}_{k}\delta_{k-1}}{\beta_{k}\hat{\lambda}_{k}+\sigma^2-\beta_{k}\delta_{k-1}}.
\end{align*}
\end{proof}

\begin{lemma}\label{proprank}
For a given positive semi-definite matrix $X \in \mathbb R^{n\times n}$, and a vector $h\in \mathbb R^n$, if 
\[Y=X-\frac{1}{h^{\intercal}X h+\sigma^2}Xhh^{\intercal}X,\] then ${\rm rank }(X)={\rm rank}(Y).$
\end{lemma}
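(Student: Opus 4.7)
The plan is to prove the rank preservation by writing $Y$ as a three-factor product whose middle factor is invertible, which will let me conclude by a standard rank argument. The positivity of $\sigma^2$ is what prevents the update from killing a direction in the column space.

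First I would use the spectral decomposition to factor out the range of $X$ explicitly. Let $s=\operatorname{rank}(X)$ and write $X = U_+ D_+ \transpose{U_+}$ where $D_+ \in \mathbb{R}^{s\times s}$ is diagonal with strictly positive entries and $U_+\in\mathbb{R}^{n\times s}$ has orthonormal columns spanning the range of $X$. Let $t = \transpose{U_+} h$ and let $\alpha = 1/(\transpose h X h + \sigma^2)$. Substituting into the definition of $Y$,
\[
Y \;=\; U_+\bigl(D_+ - \alpha D_+ t\transpose t D_+\bigr)\transpose{U_+}
\;=\; U_+ D_+^{1/2}\bigl(I_s - \alpha D_+^{1/2} t\transpose t D_+^{1/2}\bigr) D_+^{1/2}\transpose{U_+}.
\]

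Next I would examine the $s\times s$ middle factor $M := I_s - \alpha D_+^{1/2} t\transpose t D_+^{1/2}$. It is a rank-one perturbation of the identity, so it has eigenvalue $1$ with multiplicity $s-1$ and one remaining eigenvalue equal to $1-\alpha\,\transpose t D_+ t = 1 - \alpha\,\transpose h X h = \sigma^2/(\transpose h X h + \sigma^2) > 0$. Hence $M$ is positive definite and, in particular, invertible.

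Finally I would conclude. Since $D_+^{1/2}$ and $M$ are both invertible $s\times s$ matrices, the $n\times s$ matrix $U_+ D_+^{1/2} M^{1/2}$ has the same rank as $U_+$, namely $s$. Writing $Y = (U_+ D_+^{1/2} M^{1/2})\transpose{(U_+ D_+^{1/2} M^{1/2})}$ then gives $\operatorname{rank}(Y) = s = \operatorname{rank}(X)$.

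There is no real obstacle here; the only subtlety is that $X$ may be low-rank, which I sidestep by working with the reduced factor $U_+$ rather than the full eigenvector matrix. This ensures the middle factor is genuinely invertible (not just injective on a subspace), which is exactly what $\sigma^2>0$ buys us.
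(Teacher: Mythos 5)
Your proof is correct. It takes a somewhat different route from the paper's: the paper argues via kernels, showing $\ker(X)\subseteq\ker(Y)$ trivially and then $\ker(Y)\subseteq\ker(X)$ by writing $X=\transpose{Q}Q$ and using Cauchy--Schwarz to get $\transpose{z}z\leq \frac{\transpose{b}b}{\transpose{b}b+\sigma^2}\transpose{z}z$ for $z=Qx$, which forces $z=0$. You instead use the reduced spectral factorization $X=U_+D_+\transpose{U_+}$ and exhibit $Y=A\transpose{A}$ with $A=U_+D_+^{1/2}M^{1/2}$ of full column rank $s$, after checking that the middle factor $M$ is positive definite because its one nontrivial eigenvalue is $\sigma^2/(\transpose{h}Xh+\sigma^2)>0$. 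Both arguments hinge on the same fact---that $\sigma^2>0$ keeps the rank-one update from annihilating a direction in the range of $X$---but your packaging yields slightly more: it shows $Y$ is positive semi-definite with the \emph{same range} as $X$, not merely the same rank, and gives the explicit spectrum of the update restricted to that range. The paper's version is marginally more elementary (no eigenvalue computation or matrix square roots), but yours is equally rigorous; the one point worth stating explicitly is the degenerate case $\transpose{U_+}h=0$, where $M=I_s$ and the conclusion is immediate, which your eigenvalue formula covers but your phrase ``one remaining eigenvalue'' glosses over.
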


\begin{proof}

Apparently, for all $x\in {\rm ker}(X)$, $Yx=0$, i.e., ${\ker}(X)\subset{ \ker }(Y).$
Decompose $X=Q^{\intercal }Q$.
For all $x\in \ker (Y)$, let $b=Qh$, $z=Qx$. 
If $b=0$, $Y=X$; otherwise, when $b\neq 0$, 
we have \[0=x^{\intercal}Yx=z^{\intercal}z-\frac{z^{\intercal}bb^{\intercal}z}{b^{\intercal}b +\sigma^2}.\]
Thus,
\[ z^{\intercal} z=\frac{z^{\intercal}bb^{\intercal}z}{b^{\intercal}b +\sigma^2}\leq \frac{b^{\intercal} b}{ b^{\intercal} b+\sigma^2} z^{\intercal} z.\]
Therefore $z=0$, i.e. $x\in \ker (X)$,
$\ker (Y)\subset \ker (X).$
This shows that $\ker(X)=\ker(Y)$, or equivalently
${\rm rank}(X)={\rm rank} (Y).$
\end{proof}

\begin{proof}[Proof of Theorem \ref{entropy}]
Recall that for $k=1,\ldots, K$, $\hat{\lambda}_k\geq \chin$. Using Lemma \ref{propdelta}, we can show that for some $0<\delta<1$, if $\delta_0\leq \delta \chin/4^{K+1}\leq {3\sigma^2}/({4^{K+1}\beta_1})$, (the second inequality comes from the fact that $(1/\chin-1/\hat{\lambda}_{1})\chin\sigma^2\leq 3\sigma^2$), then for the first $K$ measurements,
\[\delta_k\leq\frac{1}{4^{K-k+1}}\frac{\delta\chin}{4}\leq\frac{1}{4^{K-k}}\frac{3\sigma^2}{4\beta_1},\quad k=1,\ldots, K.\]
Clearly, 
\[
\delta_{k-1}\leq \delta\chin/16.
\]
Hence, 
\[
(4+\delta)\delta_{k-1}\leq \delta\lambda_k.
\]
Note that $\beta_k\delta_{k-1}\leq \sigma^2$ and $\vert \lambda_k-\hat{\lambda}_k\vert\leq \delta_{k-1}$, we have 
\[
\beta_k\lambda_k\leq\beta_k(\hat{\lambda}_k+\delta_{k-1})\leq \beta_k\hat{\lambda}_k+\sigma^2.
\]
Thus,
\[
4\delta_{k-1}(\beta_k\hat{\lambda}_k+\sigma^2)+\delta \beta_k \lambda_k \delta_{k-1}\leq \delta\lambda_k(\beta_{k}\hat{\lambda}_k+\sigma^2).
\]
Then we have
\[
3\beta_k \hat{\lambda}_k \delta_{k-1}(\beta_k\hat{\lambda}_k+\sigma^2)\leq \beta_k \hat{\lambda}_k(\delta\lambda_k - \delta_{k-1})(\beta_k\hat{\lambda}_k+\sigma^2-\beta_k\delta_{k-1}),
\]
which can be rewritten as
\[
\frac{3\beta_k \hat{\lambda}_k\delta_{k-1}}{\beta_k \hat{\lambda}_k+\sigma^2-\beta_k \delta_{k-1}}\leq \frac{\beta_k \hat{\lambda}_k}{\beta_k \hat{\lambda}_k+\sigma^2}(\delta\lambda_k-\delta_{k-1}).
\]
Hence,
\[
\frac{3\beta_k \hat{\lambda}_k\delta_{k-1}}{\beta_k \hat{\lambda}_k+\sigma^2-\beta_k \delta_{k-1}}\leq \frac{\beta_k \hat{\lambda}_k}{\beta_k \hat{\lambda}_k+\sigma^2}[(\delta-1)\lambda_k+\hat{\lambda}_k],
\]
which can be written as 
\[
-\frac{\beta_k \hat{\lambda}_k^2}{\beta_k\hat{\lambda}_k+\sigma^2}+\frac{3\beta_k \hat{\lambda}_k\delta_{k-1}}{\beta_k \hat{\lambda}_k+\sigma^2-\beta_k \delta_{k-1}}\leq -(1-\delta)\frac{\beta_k\hat{\lambda}_k}{\beta_k \hat{\lambda}_k+\sigma^2}\lambda_k.
\]
By applying Lemma \ref{proptrace}, we have
\begin{align*}
&{\rm tr}(\Sigma_k)
\leq {\rm tr}{(\Sigma_{k-1})}-(1-\delta)\frac{\beta_{k}\hat{\lambda}_{k}}{\beta_{k}\hat{\lambda}_{k}+\sigma^2}\lambda_{k}\\
&\leq {\rm tr}(\Sigma_{k-1})-(1-\delta)\frac{\beta_{k}\hat{\lambda}_k}{\beta_{k}\hat{\lambda}_{k}+\sigma^2}\frac{{\rm tr}(\Sigma_{k-1})}{s} \triangleq  f_{k} {\rm tr}(\Sigma_{k-1}),
\end{align*}
where we have used the definition for $f_k$ in (\ref{def_f}).
Subsequently, 
\[{\rm tr}(\Sigma_k)\leq (\prod_{j=1}^{k}f_j) {\rm tr}(\Sigma_0).\]
Lemma \ref{proprank} shows that the rank of the covariance will not be changed by updating the covariance matrix sequentially:
${\rm rank} (\Sigma_1)=\cdots={\rm rank} (\Sigma_k)=s$. Hence, we may decompose the covariance matrix $\Sigma_k=Q Q^{\intercal}$, with $Q\in \mathbb R^{n\times s}$ being a full-rank matrix, then
${\textsf {Vol}}(\Sigma_k)={\rm det}(Q^{\intercal} Q).$
Since ${\rm tr}(Q^{\intercal} Q)={\rm tr}(Q Q^{\intercal})$, we have
\begin{align*}
{\textsf {Vol}}^2(\Sigma_k)&={\rm det}{(Q^{\intercal} Q)}
\overset{(1)}{\leq} \prod_{j=1}^{s}(Q^{\intercal} Q)_{jj}\\
&\overset{(2)}{ \leq} \left(\frac{{\rm tr}( Q^{\intercal} Q)}{s}\right)^s
=\left(\frac{{\rm tr(\Sigma_k)}}{s}\right)^s,
\end{align*}
where (1) follows from the Hadamard's inequality and (2) follows from the mean inequality.
Finally, we can bound the conditional entropy of the signal as 
\begin{align*}
\entropy[y_{j}, a_{j}, j \leq k]{x} 
&= \ln (2\pi e)^{s/2} {\textsf{Vol}}(\Sigma_k) \\
&\leq \frac{s}{2}\ln \{2\pi e (\prod_{j=1}^{k}f_j) {\rm tr}(\Sigma_0)\},
\end{align*}
which leads to the desired result.
\end{proof}


\begin{proof}[Proof of Theorem \ref{thm:power}]
Recall that ${\rm rank}(\Sigma)=s$, and hence $\lambda_k=0$, $k = s+1, \ldots, n$.
Note that for each iteration, the eigenvalue of $\widehat{\Sigma}_k$ in the direction of $a_k$, which corresponds to the largest eigenvalue of $\widehat{\Sigma}_k$, is eliminated below the threshold $\chin$. Therefore, as long as the algorithm continues, the largest eigenvalue of $\widehat{\Sigma}_k$ is exactly the $(k+1)$th largest eigenvalue of $\widehat{\Sigma}$.
Now if 
\begin{equation}\delta_0\leq \chin/4^{s+1},
\label{assumption_delta}
\end{equation}
 using Lemma \ref{lemmaeig} and Lemma \ref{propdelta}, we have that
$$\vert \hat{\lambda}_{k}-\lambda_k \vert\leq \delta_0,\ {\rm for}\ k=1,\ldots,s,$$
$$\vert \hat{\lambda}_{j} \vert\leq \delta_0\leq  \chin-\delta_{s},\ {\rm for}\ k=s+1,\ldots,n.$$
In the ideal case without perturbation, each measurement decreases the eigenvalue along a given eigenvector to be below $\chin$.
Suppose in the ideal case, the algorithm terminates at $K\leq s$ iterations, which means $$\lambda_1 \geq\cdots\geq \lambda_L\geq\chin >\lambda_{K+1}(\Sigma)\geq\cdots\geq\lambda_{s}(\Sigma),$$
and the total power needed is
\begin{equation}
P_{\rm ideal}=\sum_{k=1}^{K}\sigma^2\left(\frac{1}{\chin}-\frac{1}{\lambda_k}\right).
\label{P_ideal}
\end{equation}

On the other hand, in the presence of perturbation, the algorithm will terminate using more than $K$ iterations since with perturbation, eigenvalues of $\Sigma$ that originally below $\chin$ may get above $\chin$. In this case, we will also allocate power while taking into account the perturbation: 
\[\beta_k=\sigma^2\left(\frac{1}{\chin-\delta_s}-\frac{1}{\hat{\lambda}_{k}}\right).\]
This suffices to eliminate even the smallest eigenvalue to be below threshold $\chin$ since
$$\frac{\sigma^2\hat{\lambda}_{k-1}}{\beta_{k-1}\hat{\lambda}_{k-1}+\sigma^2}= \chin-\delta_s < \chin.$$
We first estimate the total amount of power used at most to eliminate eigenvalues $\hat{\lambda}_k$, for $K+1\leq k\leq s$: 
\begin{align*}
\beta_k
&=\sigma^2(1/(\chin-\delta_s) - 1/\hat{\lambda}_k)\\
& \leq \sigma^2(1/(\chin-\delta_s)- 1/(\chin+\delta_0))\\
&\leq \sigma^2 \frac{(4^s+1)\delta_0}{(\chin-4^s \delta_0)(\chin+\delta_0)} \leq \frac{20}{51}\frac{\sigma^2}{\chin}.
\end{align*}
where we have used the fact that $\delta_s \leq 4^s\delta_0$ (a consequence of Lemma \ref{propdelta}), the assumption (\ref{assumption_delta}), and monotonicity of the upper bound in $s$. 
The total power to reach precision $\varepsilon$ in the presence of mismatch can be upper bounded by
\begin{align*}
&P_{\rm mismatch}
\leq \sum_{k=1}^{s}\beta_k\\
&\leq\sigma^2\left\{\sum_{k=1}^{K}\left(\frac{1}{\chin-\delta_s}-\frac{1}{\hat{\lambda}_{k}}\right)+\frac{20(s-K)}{51}\frac{\sigma^2}{\chin}\right\}.
\end{align*}
In order to achieve precision $\varepsilon$ and confidence level $p$,
the extra power needed is upper bounded as
\begin{align*}
&P_{\rm mismatch}-P_{\rm ideal}\\
&\leq \sigma^2\left\{\sum_{k=1}^{K} \left(\frac{1}{3}\frac{1}{\chin}+\frac{\delta_0}{\lambda_k^2}\right) +\frac{20(s-K)}{51}\frac{1}{\chin}\right\}\\
&\leq\sigma^2\left\{\frac{1}{4^{s+1}}\sum_{k=1}^{K}\frac{\chin}{\lambda_k^2}+\frac{20s-3K}{51}\frac{1}{\chin}\right\} \\
&<\left(\frac{20}{51}s-(\frac{3}{51}-\frac{1}{4^{s+1}})K\right)\frac{\sigma^2}{\chin}\\
&\leq \left(\frac{20}{51}s+\frac{1}{272}K\right)\frac{\sigma^2}{\chin},
\end{align*}
where we have again used $\delta_s\leq 4^s \delta_0 \leq 4^s \chin/4^{s+1} = \chin/4$, $1/\hat{\lambda}_k - 1/\lambda_k \leq \delta_0/\lambda_k^2$, the fact that $\lambda_k \geq \chin$ for $k = 1, \ldots, K$.
\end{proof}

\begin{proof}[Proof of Lemma \ref{cor:sampleSize}]
It is a direct consequence of Lemma \ref{zhu2012}. 
Let $\theta={\rm tr}(\Sigma)/\Vert\Sigma\Vert\geq 1$. 
For some constant $\delta > 0$, set
\[L\geq 4n^{1/2}{\rm tr}(\Sigma)({\Vert\Sigma\Vert}/{\delta^2}+{4}/{\delta}).\]
%
Then from Lemma \ref{zhu2012}, we have
\begin{align*}
& P\{\Vert\widehat{\Sigma}-\Sigma\Vert \leq \delta\}\\
& \geq  P\{\Vert\widehat{\Sigma}-\Sigma\Vert \leq \left(\sqrt{2n^{1/2}(\theta+1)/L}+2\theta n^{1/2}/L\right)\Vert\Sigma\Vert\}\\
&> 1-2n\exp(-\sqrt{n}).
\end{align*}
\end{proof}

The following Lemma is used in the proof of  Lemma \ref{thm:cov-sketch}. 
\begin{lemma}\label{est}
For the setup in Section \ref{app:cov_sketch}, if for some constant $M$, $N$ 
and $L$ satisfies the conditions in Lemma \ref{thm:cov-sketch}, 
then $\Vert\eta\Vert_1\leq \tau$ with probability exceeding $1-{2}/{n}-{2}/{\sqrt{n}}-2n\exp(-c_1M)$ for some universal constant $c_1>0$.
\end{lemma}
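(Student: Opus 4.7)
The plan is to apply the triangle inequality decomposition of $\Vert \eta \Vert_1$ already established in Section~\ref{app:cov_sketch}, namely
\[
\Vert \eta \Vert_1 \leq \Vert \widehat{\Sigma}_N - \Sigma \Vert \cdot b + 2 \sum_{i=1}^M |z_i| + w,
\]
and to control each of the three summands by roughly $\tau/3$ with high probability, combining the bounds via a union bound at the end. The three failure events are essentially independent of one another because $\widehat{\Sigma}_N$ depends only on the training samples $\{\tilde{x}_j\}$, the quantity $b=\sum_i\Vert b_i\Vert^2$ depends only on the sketching vectors $\{b_i\}$, and $\{z_i\}$ and $w$ involve the noise $\{w_{ij}\}$, which is sampled independently of both.

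For the first summand I first control $b \sim \chi^2_{Mn}$ using standard chi-squared concentration (or Lemma~\ref{zhu2012} applied with $\Sigma=I_n$) to obtain $b \leq 2Mn$ with failure probability at most $2n\exp(-c_1 M)$. I then invoke Lemma~\ref{cor:sampleSize} with $\delta = \tau/(6Mn)$: the requirement $N \geq 4n^{1/2}\mathrm{tr}(\Sigma)(\Vert \Sigma \Vert/\delta^2 + 4/\delta)$ is exactly the condition on $N$ in the statement of Lemma~\ref{thm:cov-sketch} after this substitution, and it delivers $\Vert \widehat{\Sigma}_N-\Sigma \Vert \leq \tau/(6Mn)$ so that the product is at most $\tau/3$. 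For the third summand $w$, Chebyshev's inequality with $\mathbb{E}[w]=M\sigma^2/L$ and $\mathrm{Var}[w]=2M\sigma^4/(NL^2)$ yields $w \leq \tau/3$ with failure probability $1/n$ whenever $L$ satisfies the corresponding branches of (\ref{L_LB}); the branch $L \geq 6M\sigma^2/\tau$ controls the mean, while the branch involving $\Vert\Sigma\Vert$ controls the variance term. For the middle summand, Markov's inequality applied to $\sum_i z_i^2$ (whose expectation is $M\sigma^2\mathrm{tr}(\Sigma)/(NL)$) followed by Cauchy--Schwarz gives $\sum_i |z_i| \leq M\sigma\sqrt{n\,\mathrm{tr}(\Sigma)/(NL)}$ with failure probability $1/\sqrt{n}$; the first branch of (\ref{L_LB}) and the lower bound on $N$ together drive this below $\tau/6$.

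The main obstacle is bookkeeping: one must carefully verify that each lower bound on $N$ and $L$ in the hypothesis of Lemma~\ref{thm:cov-sketch} is sharp enough to push the corresponding concentration threshold below the allocated fraction of $\tau$, and that the three failure probabilities sum to exactly the claimed $2/n + 2/\sqrt{n} + 2n\exp(-c_1 M)$. A secondary concern is that when intersecting the events $\{b \leq 2Mn\}$ and $\{\Vert \widehat{\Sigma}_N - \Sigma\Vert \leq \tau/(6Mn)\}$ for the first summand, one uses the independence of $\{b_i\}$ and $\{\tilde{x}_j\}$; because these two sources of randomness are introduced independently in the sketching model, no additional coupling needs to be managed and the union bound goes through cleanly.
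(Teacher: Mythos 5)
Your proposal follows essentially the same route as the paper's proof: the same triangle-inequality decomposition of $\Vert\eta\Vert_1$ into three summands, each driven below (roughly) $\tau/3$ by Chebyshev/Markov-type concentration plus the Wishart bound of Lemma~\ref{zhu2012}, finished with a union bound, and your observation that the condition on $N$ is exactly Lemma~\ref{cor:sampleSize} with $\delta=\tau/(6Mn)$ matches the paper's use of that bound. The only differences are cosmetic bookkeeping --- the paper controls $b$ by Chebyshev (yielding the $2/n$ term) rather than a chi-squared tail, and bounds each $|z_i|<\tau/(6M)$ individually rather than via Markov on $\sum_i z_i^2$ --- so your allocation of failure probabilities differs slightly from the stated $2/n+2/\sqrt{n}+2n\exp(-c_1M)$ (which the paper's own proof also only matches up to replacing $\exp(-c_1M)$ by $\exp(-\sqrt{n})$), but the argument is sound.
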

\begin{proof}
Let $\theta = {\rm tr}(\Sigma)/\|\Sigma\|$. From Chebyshev's inequality, we have that
\[\mathbb P\{\vert z_i\vert<\frac{\tau}{6M}\} \geq 1-\frac{36M^2\sigma^2{\rm tr}(\Sigma)}{NL\tau^2}, \quad i = 1, \ldots, K\] 
\[\mathbb P\{|w|<M\frac{\sigma^2}{L}+\frac{\tau}{6}\} \geq 1-\frac{72\sigma^4M}{NL^2\tau^2},\] and 
\[\mathbb P\{|b|<(M+\sqrt{M})n\} \geq 1-\frac{2}{n}.\] 
When
\begin{equation}
N \geq 4n^{1/2}{\rm tr}(\Sigma)(\frac{36n^2M^2\Vert\Sigma\Vert}{\tau^2}+\frac{24nM}{\tau}), \label{N_lower_bound}
\end{equation}
with the concentration inequality for Wishart distribution in Lemma \ref{zhu2012} and plugging in the lower bound for $N$ in (\ref{N_lower_bound}) and the definition for $\tau$ in (\ref{def_tau}) we have
\begin{align*}
&\mathbb P\{\Vert\widehat{\Sigma}_N-\Sigma\Vert \leq {\tau}/[{3n(M+\sqrt{M})}]\}\\
\geq &~ \mathbb P\{\Vert\widehat{\Sigma}_N-\Sigma\Vert \leq (\sqrt{\frac{2n^{1/2}\theta}{N}}+\frac{2\theta n^{1/2}}{N})\Vert\Sigma\Vert\}\\
>&~ 1-2n\exp(-\sqrt{n}). 
\end{align*}
Furthermore, when $L$ satisfies (\ref{L_LB}), 
we have 
\begin{align*}
&\mathbb P\{\vert z_i\vert<\frac{\tau}{6M}\}
\geq 1-\frac{1}{M\sqrt{n}},\\
&\mathbb P\{|w|<\frac{\tau}{3}\}
\geq 1-\frac{1}{\sqrt{n}},\\
&\mathbb P\{\vert b\vert<(M+\sqrt{M})n\}
\geq 1-\frac{2}{n}.
\end{align*}
Therefore, $\Vert\eta\Vert_1\leq \tau$
holds with probability at least $1-{2}/{n}-{2}/{\sqrt{n}}-2n\exp(-\sqrt{n})$.
\end{proof}

\begin{proof}[Proof of Lemma \ref{thm:cov-sketch}]
With Lemma \ref{est}, let $\tau={M\delta}/c_2$,
the choices of $M$, $N$, and $L$ ensure that $\Vert\eta\Vert_1\leq{M \delta}/c_2$ with probability at least $1-{2}/{n}-2/\sqrt{n}-2n\exp(-\sqrt{n})$.
By  Lemma \ref{sketching} and noting that the rank of $\Sigma$ is $s$, we have
$\Vert \widehat{\Sigma}-\Sigma \Vert_F\leq \delta.$
Therefore, with probability exceeding $1-2/n-{2}/{\sqrt{n}}-2n\exp(-\sqrt{n})-\exp(-c_0c_1ns),$
\[
\Vert \widehat{\Sigma}-\Sigma \Vert\leq \Vert \widehat{\Sigma}-\Sigma \Vert_F\leq \delta.
\]
\end{proof}
The proof of Theorem \ref{lemma:GMM_mismatch} will use the following two lemmas. 
\begin{lemma}[Moment generating function of multivariate Gaussian \cite{VincentWakin}] Assume $X\sim \mathcal N(0,\Sigma)$. The moment generating function of $\Vert X \Vert_2$ is 
\[
\mathbb E[e^{s\Vert X \Vert_2}]=1/\sqrt{I-2s\Sigma}.
\]
\end{lemma}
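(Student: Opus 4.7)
The plan is to prove the stated identity by reducing the multivariate integral to a product of one-dimensional chi-squared moment generating functions via an orthogonal diagonalization of $\Sigma$. I will interpret the expression $\|X\|_2$ in the statement as the squared Euclidean norm $\|X\|_2^2$ (which makes the right-hand side $1/\sqrt{\det(I-2s\Sigma)}$ match dimensionally and reproduces the standard formula); with the literal reading the right-hand side is not even a scalar, so this appears to be a typographical slip.

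First I would write $\Sigma = U\Lambda U^{\intercal}$ with $U$ orthogonal and $\Lambda = \mathrm{diag}(\lambda_1,\dots,\lambda_n)$, and set $Z = \Lambda^{-1/2} U^{\intercal} X$ (reducing to the rank of $\Sigma$ if $\Sigma$ is singular). Then $Z \sim \mathcal{N}(0, I)$, and because the transformation is orthogonal-composed-with-rescaling,
\[
\|X\|_2^2 = X^{\intercal} X = Z^{\intercal}\Lambda Z = \sum_{i=1}^{n} \lambda_i Z_i^2,
\]
with the $Z_i$ i.i.d.\ standard normals. Each $Z_i^2$ is a $\chi^2_1$ random variable, whose moment generating function is $\mathbb{E}[e^{t Z_i^2}] = (1-2t)^{-1/2}$ for $t < 1/2$. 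By independence,
\[
\mathbb{E}\!\left[e^{s\|X\|_2^2}\right] = \prod_{i=1}^n \mathbb{E}\!\left[e^{s\lambda_i Z_i^2}\right] = \prod_{i=1}^n (1 - 2s\lambda_i)^{-1/2},
\]
valid for $s < 1/(2\lambda_{\max}(\Sigma))$.

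Finally I would recognize $\prod_i (1 - 2s\lambda_i) = \det(I - 2s\Lambda) = \det(U(I-2s\Lambda)U^{\intercal}) = \det(I - 2s\Sigma)$, yielding
\[
\mathbb{E}\!\left[e^{s\|X\|_2^2}\right] = \frac{1}{\sqrt{\det(I-2s\Sigma)}}.
\]
An alternative derivation, which I would present as a sanity check, writes the expectation as the Gaussian integral
\[
\int \frac{1}{(2\pi)^{n/2}\sqrt{\det\Sigma}}\, \exp\!\Bigl(-\tfrac{1}{2} x^{\intercal}(\Sigma^{-1} - 2sI)x\Bigr)\,dx
\]
and uses the standard formula $\int e^{-\frac{1}{2}x^{\intercal} A x}dx = (2\pi)^{n/2}/\sqrt{\det A}$ with $A = \Sigma^{-1} - 2sI$, then simplifies via $\det\Sigma \cdot \det(\Sigma^{-1}-2sI) = \det(I - 2s\Sigma)$. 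There is no real obstacle here; the main things to be careful about are (i) flagging the likely typo $\|X\|_2 \to \|X\|_2^2$, (ii) stating the domain of validity $s < 1/(2\lambda_{\max}(\Sigma))$ so that the integral and each factor converge, and (iii) handling a possibly singular $\Sigma$ by restricting everything to the range of $\Sigma$ (equivalently, taking $\lambda_i = 0$ contributes trivial factors of $1$).
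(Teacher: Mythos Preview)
The paper does not prove this lemma at all; it is stated in the ``Backgrounds'' appendix as a quoted result from \cite{VincentWakin}, with no accompanying argument. So there is no ``paper's own proof'' to compare against. Your derivation via orthogonal diagonalization of $\Sigma$ and reduction to a product of $\chi^2_1$ moment generating functions is the standard argument and is correct. You are also right to flag the apparent typos: the exponent should be $\|X\|_2^2$ rather than $\|X\|_2$, and the right-hand side must be read as $1/\sqrt{\det(I-2s\Sigma)}$ for the formula to even be scalar-valued; the lemma as literally written in the paper is not well-posed. Your remarks on the domain of validity $s<1/(2\lambda_{\max}(\Sigma))$ and the handling of singular $\Sigma$ are appropriate refinements that the paper's statement omits.
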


\begin{proof}[Proof of Theorem \ref{lemma:GMM_mismatch}]
We adapt the technique used in \cite{InfoGreedy2014} for proving performance bound for GMM signal without mismatch. Suppose the true signal is generated from the $c^*$th component. 
First,  apply measurements to each component $c\in [C]$. Clearly, spending a total amount of power $\sum_{c=1}^{C}m_c$ would suffice to ensure that the norm of covariance of each individual component is below $\chin$.
In the ideal case, the weight is updated in the following manner:
\[
\pi_{c}^{k+1}=\pi_c^{k}L_k\exp\{-\frac{1}{2}\frac{(y_k-a_k^{\intercal}\mu_{c,k-1})^2}{a_k^{\intercal}\Sigma_{c,k-1}a_k+\sigma^2}\}, \quad c \in [C].
\]
In the presence of mismatch, this becomes 
\[
\hat{\pi}_{c}^{k+1}=\hat{\pi}_c^{k}\hat{L}_k\exp\{-\frac{1}{2}\frac{(y_k-a_k^{\intercal}\hat{\mu}_{c,k-1})^2}{a_k^{\intercal}\widehat{\Sigma}_{c,k-1}a_k+\sigma^2}\}, \quad c \in [C].
\]
The $L_k$ and $\hat{L}_k$ for $k=1,2,\ldots$ are normalization coefficients. After $m$ measurements,
\begin{align*}
&~~~~\frac{1}{m} \left\vert  \sum_{k=1}^{m} \sum_{\ell=1}^{C}\frac{(y_{k}-a_{k}^{\intercal}\hat{\mu}_{\ell,k-1})^2}{a_k^{\intercal}\widehat{\Sigma}_{\ell,k-1}a_k+\sigma^2}\cdot\hat{\pi}_{\ell}^{k}-\frac{(y_k-a_k^{\intercal}\mu_{c^*,k-1})^2}{a_k^{\intercal}\Sigma_{c^*,k-1}a_k+\sigma^2} \right\vert\\
&=\frac{1}{m}\left\vert \sum_{k=1}^{m} \sum_{\ell=1}^{C}\frac{(y_k-a_k^{\intercal}\hat{\mu}_{\ell,k-1})^2}{a_k^{\intercal}\widehat{\Sigma}_{\ell,k-1}a_k+\sigma^2}\cdot\hat{\pi}_{\ell}^{k} - \sum_{k=1}^m \frac{(y_k-a_k^{\intercal}\hat{\mu}_{c^*,k-1})^2}{a_k^{\intercal}\widehat{\Sigma}_{c^*,k-1}a_k+\sigma^2}\right.\\
&~~~~~~~
\left.+\sum_{k=1}^{m}(\frac{(y_k-a_k^{\intercal}\hat{\mu}_{c^*,k-1})^2}{a_k^{\intercal}\widehat{\Sigma}_{c^*,k-1}a_k+\sigma^2}-\frac{(y_k-a_k^{\intercal}\mu_{c^*,k-1})^2}{a_k^{\intercal}\Sigma_{c^*,k-1}a_k+\sigma^2})\right\vert\\
&\leq \hat{\eta}+\frac{2{\rm ln}\vert C\vert}{m}\\
&~~~~+\frac{1}{m}\sum_{k=1}^{m}\left\vert\frac{(y_k-a_k^{\intercal}\hat{\mu}_{c^*,k-1})^2}{a_k^{\intercal}\widehat{\Sigma}_{c^*,k-1}a_k+\sigma^2}-\frac{(y_k-a_k^{\intercal}\mu_{c^*,k-1})^2}{a_k^{\intercal}\Sigma_{c^*,k-1}a_k+\sigma^2}\right\vert.\\
\end{align*}
Now we study bound for each individual term inside the sum over $k$. To simplify notation, we omit the dependence on $k$, $c^*$ and $k-1$ without causing confusion.  
In the following let $z \triangleq y - a^\intercal\mu = a^\intercal(x-\mu) + w$, and let $\varrho\triangleq a^\intercal (\hat{\mu}-\mu)$. Hence, $|\varrho| \leq |\beta|\cdot|\hat{\mu}-\mu|$
is bounded.
Note that
\begin{equation}
\begin{split}
&~~~~\left\vert \frac{(y -a^{\intercal}\hat{\mu})^2}{a^{\intercal}\widehat{\Sigma}a+\sigma^2}-\frac{(y-a^{\intercal}\mu)^2}{a^{\intercal}\Sigma a+\sigma^2} \right\vert\\
&\leq \left\vert \frac{(y-a^{\intercal}\hat{\mu})^2}{a^{\intercal}\widehat{\Sigma}a+\sigma^2} - \frac{(y-a^{\intercal}\mu)^2}{a^{\intercal}\widehat{\Sigma}a+\sigma^2} \right\vert+\left\vert \frac{(y-a^{\intercal}{\mu})^2}{a^{\intercal}\widehat{\Sigma}a+\sigma^2} - \frac{(y-a^{\intercal}\mu)^2}{a^{\intercal}{\Sigma}a+\sigma^2}\right\vert\\
&\leq \frac{2|\varrho|\cdot|a^{\intercal}(x-\mu)+w-\varrho/2|}{\sigma^2}+ (y-a^{\intercal}\mu)^2 \vert\frac{\beta\delta}{\sigma^4}\vert
\\
& = \frac{2|\varrho|\cdot|z-\varrho/2|}{\sigma^2} + |z|^2 \frac{\beta\delta}{\sigma^4}\\
&\leq \frac{\vert \varrho \vert^2+2\vert\varrho\vert \vert z\vert}{\sigma^2}+|z|^2 \frac{\beta\delta}{\sigma^4}.
\label{last}
\end{split}
\end{equation}
%
Since $m\leq n$, from (\ref{last}) we have 
\begin{align*}
&\max_{k=1}^m \left\vert\frac{(y_k-a_k^{\intercal}\hat{\mu}_{c^*,k-1})^2}{a_k^{\intercal}\widehat{\Sigma}_{c^*,k-1}a_k+\sigma^2}-\frac{(y_k-a_k^{\intercal}\mu_{c^*,k-1})^2}{a_k^{\intercal}\Sigma_{c^*,k-1}a_k+\sigma^2}\right\vert\\
&\leq\max_{k=1}^n \left\vert\frac{(y_k-a_k^{\intercal}\hat{\mu}_{c^*,k-1})^2}{a_k^{\intercal}\widehat{\Sigma}_{c^*,k-1}a_k+\sigma^2}-\frac{(y_k-a_k^{\intercal}\mu_{c^*,k-1})^2}{a_k^{\intercal}\Sigma_{c^*,k-1}a_k+\sigma^2}\right\vert\\
&\leq \frac{1}{\sigma^4}\cdot \max_{k=1}^n \left \{\sigma^2(\vert\varrho_k\vert+2\vert z_k \vert)\vert\varrho_k\vert +\beta_k\vert z_k\vert^2\delta_{k-1}\right\}.
\end{align*}

Note that $z_{k} \triangleq a_k^\intercal(x-\mu_{c^*, k-1}) + w_k \sim \mathcal{N}(a_k^\intercal(\mu_{c^*} - \mu_{c^*,k-1}), a_k^\intercal\Sigma_{c^*,k-1}a_k + \sigma^2)$, so for some $t \in (0, 1)$, we have
\begin{align*}
&|z_{k}|\\
&< \frac{1}{t}\sqrt{[(\lambda_{c^*,k} + \delta_{k-1}) \beta_k + \sigma^2] + (a_k^\intercal(\mu_{c^*} - \mu_{c^*,k-1}))^2}
\triangleq  \frac{b_k}{t}
\end{align*}
 with probability exceeding $1-t^2$, where $b_k$ is bounded.

Finally, 
\begin{align*}
&~~~~\frac{1}{m} \left\vert  \sum_{k=1}^{m} (\sum_{\ell=1}^{C}\frac{(y_{k}-a_{k}^{\intercal}\hat{\mu}_{\ell,k-1})^2}{a_k^{\intercal}\widehat{\Sigma}_{\ell,k-1}a_k+\sigma^2}\cdot\hat{\pi}_{\ell}^{k}-\frac{(y_k-a_k^{\intercal}\mu_{c^*,k-1})^2}{a_i^{\intercal}\Sigma_{c^*,k-1}a_k+\sigma^2})\right \vert\\
&\leq \hat{\eta}+\frac{2{\rm ln}\vert C \vert}{m} 
+ \underbrace{\frac{1}{\sigma^4}\cdot \max_{k=1}^n \left \{\sigma^2(\vert\varrho_k\vert+2\frac{b_k}{t})\vert\varrho_k\vert +\beta_k\frac{b_k^2}{t^2}\delta_{k-1}\right\}}_{\eta_0}
\end{align*}
with probability at least $1-nt^2$. Let 
\[
\Delta=\max_{k=1}^{n} \left\{\max(\varrho_k, \delta_{k-1})\right\},
\] 
and let 
\[U=\frac{1}{\sigma^4}\cdot \max_{k=1}^n \left\{ \sigma^2(\Delta+2 n b_k)+\beta_kn^2b_k^2 \right\}.
\]
We choose $t = 1/n$. Then 
\[
\eta_0=U\Delta.
\]
Note that when there is no mismatch, $\delta_{k-1}=\varrho_k=0$ for $k\in[n]$, which leads to $\Delta=0$ and thus $\eta_0=0$.
Here $\hat{\eta} = 1/2$ is a parameter used in the multiplicative weight update algorithm. In particular, we can identify the correct component $c^*$ with probability $1-1/n$ whenever $m=\mathcal{O}\left({\rm ln}\vert C \vert/(\hat{\eta}+\eta_0)\right)$.
For $k=1,\ldots,m$, we choose $\beta_k=1$. Thus, we need at most
\[
\sum_{c=1}^{C} m_c +\mathcal{O}\left(\frac{{\rm ln}\vert C \vert}{\hat{\eta}+\eta_0}\right)
\]
amount of power in total.
\end{proof}

Note that $|\varrho_{k}|$ can be computed recursively. We may derive a recursion. Let $z_k \triangleq a_k^\intercal(x-\mu_{k-1}) + w_k = y_k - a_k^\intercal \mu_{k-1}$. Also Let $\varrho_k \triangleq a^{\intercal}(\hat{\mu}_k-\mu_k)$. Note that $\varrho_k = a^\intercal \xi_k$ for $\xi_k = \hat{\mu}_k-\mu_k$ in (\ref{recursion_xi}). Based on the recursion for $\xi_k$ in (\ref{recursion_xi}) that we derived earlier, we have
\[\varrho_k
=\frac{\sigma^2}{\beta_k\hat{\lambda}_k+\sigma^2}[\varrho_{k-1}+\frac{a_k^\intercal E_{k-1} a_k (y_k-a_k^{\intercal}\mu_{k-1})}{\beta_k\hat{\lambda}_k+\sigma^2-a^{\intercal}_kE_{k-1}a_k}]\]
and 
\[|\varrho_k|
\leq \frac{1}{\hat{\lambda}_k (\beta_k/\sigma^2)+1}[|\varrho_{k-1}|+\frac{\delta_k }{(\hat{\lambda}_k-\delta_k)+\sigma^2/\beta_k} |z_k|].\]


%
\begin{proof}[Proof of Lemma \ref{lemma:one_sparse}]
The recursion of the diagonal entries can be written as
\begin{align*}
\Sigma_{ii}^{(k)}
&=\Sigma_{ii}^{(k-1)}-\frac{(\Sigma_{i j_{k-1}}^{(k-1)})^2}{\Sigma_{j_{k-1} j_{k-1}}^{(k-1)}+\sigma^2/\beta_k}\\
&=\frac{\Sigma_{ii}^{(k-1)}\Sigma_{j_{k-1} j_{k-1}}^{(k-1)}(1-\rho^{(k-1)}_{i j_{k-1}})+\Sigma_{ii}^{(k-1)}\sigma^2/\beta_k}{\Sigma^{(k-1)}_{j_{k-1} j_{k-1}}+\sigma^2/\beta_k}.
\end{align*}
Note that for $i=j_{k-1}$, 
\[
\Sigma_{j_{k-1}j_{k-1}}^{(k)}=\frac{\Sigma_{j_{k-1}j_{k-1}}^{(k-1)}\sigma^2/\beta_k}{\Sigma^{(k-1)}_{j_{k-1} j_{k-1}}+\sigma^2/\beta_k}\leq \frac{\gamma}{1+\gamma}\Sigma_{j_{k-1}j_{k-1}}^{(k-1)},
\]
and for $i\neq j_{k-1}$,
\begin{align*}
\Sigma_{ii}^{(k)}
&\leq \frac{\Sigma_{ii}^{(k-1)}\Sigma_{j_{k-1} j_{k-1}}^{(k-1)}(1-\rho^{(k-1)})+\Sigma_{ii}^{(k-1)}\sigma^2/\beta_k}{\Sigma^{(k-1)}_{j_{k-1} j_{k-1}}+\sigma^2/\beta_k}\\
&\leq \Sigma_{ii}^{(k-1)}\frac{\Sigma_{j_{k-1} j_{k-1}}^{(k-1)}(1-\rho^{(k-1)})+\sigma^2/\beta_k}{\Sigma_{j_{k-1} j_{k-1}}^{(k-1)}+\sigma^2/\beta_k}\\
&\leq \Sigma_{ii}^{(k-1)}\frac{1-\rho^{(k-1)}+\gamma}{1+\gamma}.
\end{align*}
Therefore,
\begin{align*}
{\rm tr}(\Sigma_k)
&\leq (1-\frac{\rho^{(k-1)}}{1+\gamma}){\rm tr}(\Sigma_{k-1})-\frac{1-\rho^{(k-1)}}{1+\gamma}\Sigma_{j_{k-1}j_{k-1}}^{(k-1)}\\
&\leq [1-\frac{(n-1)\rho^{(k-1)}+1}{n(1+\gamma)}]{\rm tr}(\Sigma_{k-1}).
\end{align*}
\end{proof}

\begin{proof}[Proof of Theorem \ref{Gaussian_one_sparse}]
Let $\varepsilon\geq\sqrt{\|\Sigma_K\|\cdot\chi_n^2(p)}$, i.e. $\|\Sigma_K\|\leq \chin$. 
Then Theorem \ref{Gaussian_one_sparse} follows from   
\begin{align}
&\mathbb{P}_{x\sim \mathcal{N}(\mu_K, \Sigma_K)}[\|x-\mu_K\|_2\leq \varepsilon]
\nonumber\\
&\geq \mathbb{P}_{x\sim \mathcal{N}(\mu_K, \Sigma_K)}
[\|x-\mu_K\|_2\leq \sqrt{\|\Sigma_K\|\cdot\varepsilon^2}]\nonumber\\
&\geq \mathbb{P}_{x\sim \mathcal{N}(\mu_K, \Sigma_K)}
[(x-\mu_K)^\intercal{\Sigma_K}^{-1}(x-\mu_K)\leq \chi_n^2(p)] = p.\label{app1}
\end{align}
From Lemma \ref{lemma:one_sparse}, we have that when the powers $\beta_i$ are sufficiently large
\[
\Vert\Sigma_K\Vert\leq {\rm tr}(\Sigma_K)\leq (1-\frac{1}{n(1+\gamma)})^K{\rm tr}(\Sigma).
\]
Hence for (\ref{app1}) to hold, we can simple require $(1-\frac{1}{n(1+\gamma)})^K{\rm tr}(\Sigma) \leq \chin$, or equivalently (\ref{K_one_sparse}) in Theorem \ref{Gaussian_one_sparse}.
\end{proof}

\end{document}